\pdfoutput=1
\documentclass[lettersize,draftclsnofoot,onecolumn]{IEEEtran}
\usepackage{amsmath,amsfonts}
\usepackage{algorithmic}
\usepackage{array}
\usepackage[caption=false,font=normalsize,labelfont=sf,textfont=sf]{subfig}
\usepackage{textcomp}
\usepackage{multirow}
\usepackage{xcolor}
\usepackage[normalem]{ulem}
\usepackage{stfloats}
\usepackage{float}
\usepackage{amssymb}
\usepackage[hidelinks]{hyperref}
\newtheorem{assumption}{Assumption}

\newtheorem{theorem}{Theorem}
\newtheorem{lemma}{Lemma}
\newtheorem{proposition}{Proposition}
\newtheorem{remark}{Remark}

\usepackage{verbatim}
\usepackage{xcolor}
\usepackage{graphicx}
\usepackage{subfig}
\def\BibTeX{{\rm B\kern-.05em{\sc i\kern-.025em b}\kern-.08em
		T\kern-.1667em\lower.7ex\hbox{E}\kern-.125emX}}
\usepackage{balance}
\begin{document}
	\title{A Finite-Blocklength Analysis for ORBGRAND}
	\author{\IEEEauthorblockN{Zhuang Li and Wenyi Zhang},~\IEEEmembership{Senior Member,~IEEE}
		\thanks{The authors are with Department of Electronic Engineering and Information Science, University of Science and Technology of China, Hefei, China (wenyizha@ustc.edu.cn).}
	}
	
	\maketitle
	
	\begin{abstract}
		
		Within the Guessing Random Additive Noise Decoding (GRAND) family, ordered reliability bits GRAND (ORBGRAND) has received considerable attention for its hardware-friendly exploitation
		of soft information. Existing information-theoretic results
		for ORBGRAND are asymptotic in blocklength and do not quantify
		its performance at short-to-moderate blocklengths. This paper
		develops a finite-blocklength analysis for ORBGRAND over
		general binary-input memoryless channels, addressing the key challenge that the rank-induced decoding metrics are coupled across symbols.  
		We first specialize the general random-coding union (RCU) bound for
		mismatched decoding to ORBGRAND, yielding
		an ORBGRAND-specific form termed the ORB-RCU bound. We then characterize the two codeword decoding metrics governing this bound:
		the transmitted-codeword decoding metric is represented, up to a deterministic $O(n^{-1})$ term, by a nondegenerate second-order $U$-statistic and admits a uniform Gaussian approximation via the Hoeffding projection and a Berry-Esseen theorem for $U$-statistics, whereas the competing-codeword decoding metric is reduced to a weighted sum of independent and identically distributed Bernoulli random variables and characterized through a uniform strong large-deviation expansion.
		Combining these ingredients, we establish a uniform Gaussian approximation for the ORB-RCU bound, which in turn yields a third-order achievable rate expansion for ORBGRAND. The first-order term is shown to coincide with the previously established ORBGRAND achievable rate, while the second-order term defines an ORBGRAND dispersion, and the third-order correction is given by $\frac{\ln n}{2n}$. Numerical results for the BPSK-modulated additive white Gaussian noise channel validate the tightness of ORB-RCU relative to the maximum-likelihood based RCU benchmark and
		show that the resulting approximation accurately predicts the finite-blocklength performance in the operating regime of practical interest.
		
	\end{abstract}
	
	\begin{IEEEkeywords}
		Finite-blocklength, ORBGRAND, random-coding union bound, strong large-deviations,
		third-order achievable rate expansion.
	\end{IEEEkeywords}
	
	\section{Introduction}
	
	Guessing Random Additive Noise Decoding (GRAND) \cite{duffy2019capacity,riaz2022universal} has attracted significant attention in recent years as a decoding paradigm that shifts the search from codewords to noise—more precisely, to error patterns (EPs). Given a received block, GRAND queries a sequence of candidate EPs and, for each query, performs a codeword-membership check; decoding terminates once removing the hypothesized EP yields a valid codeword (for linear block codes, this can be implemented efficiently via syndrome checks). This architecture is broadly applicable to arbitrary block codes and is particularly well suited to high-rate short-blocklength regimes \cite{yue2023efficient,wang25arxiv}, making it a promising candidate for meeting the stringent latency and reliability requirements of ultra-reliable low-latency communication (URLLC).
	
	Among GRAND variants for soft-output channels, soft GRAND (SGRAND) \cite{solomon2020soft} orders EPs using the magnitudes of the channel-output log-likelihood ratios (LLRs) and, when used without a limit on the maximum number of queries, yields decisions equivalent to maximum-likelihood (ML) decoding \cite{liu2022orbgrand}\cite[Ch.~10]{Lin2004ErrorCC}. However, its reliance on exact LLR values typically necessitates on-the-fly EP construction for each received block, which complicates hardware implementation \cite{solomon2020soft,wan2026parallelization}. In contrast, ordered reliability bits GRAND (ORBGRAND) \cite{duffy2022ordered} generates EPs using only the ranking of LLR magnitudes. This rank-based design enables efficient and hardware-friendly EP generation \cite{abbas2022high,condo2022fixed}. ORBGRAND has attracted broad research interest, ranging from information-theoretic studies \cite{liu2022orbgrand,li2026orbgrand} to algorithmic refinements that enhance decoding performance and narrow the gap to ML decoding \cite{condo2021high,liu2022orbgrand,duffy2022ordered,condo2022fixed,wan2024approaching,wan2025fine}.
	
	Existing information-theoretic analyses establish that ORBGRAND is nearly capacity-achieving for the additive white Gaussian noise (AWGN) channel with antipodal inputs \cite{liu2022orbgrand}, and can be made exactly capacity-achieving for general binary-input memoryless channels via rank companding \cite{li2026orbgrand}. These results, however, are asymptotic in blocklength and therefore do not quantify performance in the short-to-moderate blocklength regime where ORBGRAND is the most relevant. Complementary finite-length achievability and converse bounds,
	together with a simple and accurate approximation, were
	developed within a framework based on guesswork and the GRAND
	family of decoding algorithms in \cite{mariona2024finite}. Guessing-based decoders with abandonment for discrete memoryless channels (DMCs) under constant composition codebooks was further studied in \cite{tan2025ensemble}, where a second-order analysis was obtained using a type-based decoding rule. Unlike the decoding rules considered in these works, ORBGRAND orders EPs according to the global, output-dependent ranking of LLR magnitudes across symbol positions, which induces cross-symbol coupling and requires a distinct finite-blocklength analysis. Such an analysis enables a quantitative assessment of the finite-blocklength gap to ML decoding and the penalty induced by rank-based EP ordering. The resulting finite-blocklength approximations serve as tractable surrogates for system-level performance evaluation, facilitating rapid exploration of rate-reliability-latency tradeoffs in URLLC settings.

	In the finite-blocklength regime, the fundamental tradeoff among rate, error probability, and blocklength is characterized by the nonasymptotic framework developed in
	\cite{polyanskiy2010channel,polyanskiy2010channel2}, which
	provides sharp converse and achievability benchmarks,
	including the meta-converse and the
	random-coding union (RCU) bounds, together with an accurate
	normal approximation under ML decoding.
	In particular, let $M^\star(n,\epsilon)$ denote the maximum
	codebook size at blocklength $n$ with average error probability
	not exceeding $\epsilon\in(0,1)$.
	The corresponding second-order expansion is
	\begin{equation}
		\log M^\star(n,\epsilon)
		=
		nC-\sqrt{nV}\,Q^{-1}(\epsilon)+O(\log n),
		\label{eq:PPV_second_order}
	\end{equation}
	where $C$ and $V$ are the channel capacity and dispersion,
	respectively, and $Q^{-1}(\cdot)$ is the inverse of the Gaussian
	tail function.\footnote{The Gaussian tail function is defined
	as $Q(x)\triangleq\int_x^{\infty} \frac{1}{\sqrt{2\pi}}e^{-t^2/2}\,\mathrm{d}t$.}
	This expansion refines the Shannon channel coding theorem by
	quantifying the finite-blocklength backoff from $C$, with $V$
	governing the $\sqrt n$-scale penalty at reliability level $\epsilon$ \cite{strassen1962asymptotische,hayashi2009information,
	altuug2014moderate}. Beyond the second-order term, it is worth noting that the
	mechanism producing the logarithmic third-order correction is
	well documented in the finite-blocklength literature. It was
	observed in \cite{martinez2011saddlepoint} that loosening the
	RCU bound through Markov's inequality may forfeit $O(\log n)$ term, whereas sharply characterizing the pairwise
	error probability with an explicit prefactor of order
	$n^{-1/2}$ restores an additional $\frac{1}{2}\log n$ term in
	the rate expansion. Third-order refinements were
	subsequently established for DMCs
	\cite{tomamichel2013tight}, while more detailed characterizations were developed through saddlepoint approximations \cite{scarlett2014saddlepoint, font2018saddlepoint} and
	extended to mismatched decoding in
	\cite{scarlett2014mismatched}. Further third-order results were
	obtained for the AWGN channel \cite{tan2015third}, and strong
	large-deviation analysis combined with refined central limit
	asymptotics made the $\frac{1}{2}\log n$ term explicit for
	memoryless channels while controlling the remaining term
	uniformly \cite{moulin2017log}. Combinatorial central limit theorems have also been employed for constant composition random coding over multiple-access channels \cite{scarlett2014second}.
	
	Nevertheless, a common structural assumption underlying most of the aforementioned
	finite-blocklength analyses is that the decoding metric for each position depends only on the channel variables in that position. In contrast, ORBGRAND is driven by reliability ordering, and
	its induced decoding metrics are rank-based and
	coupled across symbols, even when the channel is memoryless. Hence, the existing second- and third-order analyses cannot be directly applied to ORBGRAND, and a dedicated finite-blocklength analysis is required.

	In this paper, we address this challenge
	by conducting a refined analysis
	that handles
	the coupling across symbols. We start with a
	finite-blocklength upper bound on the ensemble-average decoding
	error probability of ORBGRAND, termed the ORB-RCU bound, 
	which is obtained by specializing the general mismatched RCU bound \cite{martinez2011saddlepoint, scarlett2014saddlepoint} to ORBGRAND. This bound is governed by two codeword decoding metrics: the transmitted-codeword decoding metric and the competing-codeword decoding metric. For the transmitted-codeword decoding metric, we
	represent it by a nondegenerate second-order $U$-statistic up to a deterministic $O(n^{-1})$ term and, through the Hoeffding projection, establish a uniform Berry-Esseen
	approximation for its $n^{-1/2}$-scale fluctuations; for the competing-codeword decoding metric, we show that it is distributionally equivalent to a weighted sum of independent and identically distributed (i.i.d.)
	Bernoulli random variables and derive from first principles a uniform strong large-deviation
	expansion for its tail probability using exponential tilting,
	uniform finite-blocklength saddlepoint expansions, and a
	lattice local central limit argument. Although the derivation
	is motivated by classical strong large-deviation analyses
	\cite{bahadur1960deviations,chaganty1993strong,joutard2013strong},
	the existing results do not directly cover the weighted lattice
	sum arising here or provide the uniform expansion
	required for the subsequent analysis.
	Combining these two ingredients through a
	threshold relaxation yields the following second-order
	achievable rate expansion, where $R_{\mathrm{ORB}}^{\star}(n,\epsilon)$ denotes $\frac{1}{n}\ln M_{\mathrm{ORB}}^{\star}(n,\epsilon)$, with $M_{\mathrm{ORB}}^{\star}(n,\epsilon)$ being the largest integer $M$ for which there exists an $(n,M)$ code whose average error probability under ORBGRAND does not exceed $\epsilon$:
	\begin{align}
		R_{\mathrm{ORB}}^{\star}(n,\epsilon)
		\ge
		I_{\mathrm{ORB}}
		-
		\sqrt{\frac{V_{\mathrm{ORB}}}{n}}\,
		Q^{-1}(\epsilon)
		+
		O\left(\frac{1}{n}\right),
		\label{eq:second-order-expan}
	\end{align}
	where $I_{\mathrm{ORB}}$ is the
	ORBGRAND achievable rate previously established in
	\cite{liu2022orbgrand,li2026orbgrand}, and
	$V_{\mathrm{ORB}}$ is the dispersion characterized in this
	work. To obtain a more refined expansion, we then analyze the pairwise probability appearing within the ORB-RCU bound directly, without introducing an intermediate threshold relaxation. This direct analysis retains the explicit
	$n^{-1/2}$ prefactor in the strong large-deviation expansion
	of the competing-codeword term and yields the following
	third-order achievable rate expansion:
	\begin{align}
		R_{\mathrm{ORB}}^{\star}(n,\epsilon)
		\ge
		I_{\mathrm{ORB}}
		-
		\sqrt{\frac{V_{\mathrm{ORB}}}{n}}\,
		Q^{-1}(\epsilon)
		+
		\frac{\ln n}{2n}
		+
		O\left(\frac{1}{n}\right).
		\label{eq:expan}
	\end{align}
	This logarithmic correction is conceptually analogous to that arising in \cite{martinez2011saddlepoint}; see also
	\cite{scarlett2014saddlepoint,font2018saddlepoint,
	scarlett2014mismatched} for further details. Its derivation here,
	however, requires an ORBGRAND-specific treatment because the
	decoding metrics are rank-based and coupled across symbols.

	Numerical results for the AWGN channel
	corroborate the analysis from several perspectives. The
	achievable rate and block error probability comparisons over a
	range of blocklengths show that ORBGRAND remains fairly close to
	ML decoding, with only a modest finite-blocklength penalty. The
	comparison between $V_{\mathrm{ORB}}$ and the corresponding
	channel dispersion $V$ further quantifies this penalty and provides
	additional insight into the finite-blocklength gap between
	ORBGRAND and ML decoding. Moreover, the third-order approximation
	obtained by retaining the first three terms in
	\eqref{eq:expan} accurately tracks the ORB-RCU benchmark even at
	short blocklengths (e.g., $n = 100$), with the approximation becoming increasingly accurate as the blocklength grows.
	
	The remaining part of this paper is organized as follows.
	Section~\ref{sec:systemModel} introduces the channel model,
	reviews the basic 
	principle of ORBGRAND, and specifies the
	associated decoding rule.
	Section~\ref{sec:rcu} specializes the general mismatched RCU bound to ORBGRAND and characterizes the asymptotic behavior of the transmitted-codeword
	decoding metric.
	In particular, we represent the transmitted-codeword decoding metric as a nondegenerate second-order $U$-statistic up to a deterministic $O(n^{-1})$ term and establish a Berry-Esseen approximation.
	Section~\ref{sec:strong} develops a
	uniform strong large-deviation expansion for
	the cumulative distribution function (CDF) of the competing-codeword decoding metric.
	Section~\ref{sec:second_third_order} combines these two asymptotic ingredients to first establish a threshold-based second-order achievable rate expansion and then derive a refined third-order
	expansion through a direct analysis of the ORB-RCU bound. Section~\ref{sec:numerical} presents numerical results comparing
	the finite-blocklength bounds with the second- and third-order approximations. Section~\ref{sec:con} concludes the paper. Technical proofs are provided in the appendices.
	 
	\section{System Model}\label{sec:systemModel}
	
	Unless otherwise specified, we adopt the following notation
	throughout the paper. Random variables are typeset in sans-serif
	font, whereas their realizations are typeset in italic font.
	Underlined symbols are used to denote vectors. Accordingly,
	$\mathsf X$ and $x$ denote a random variable and one of its
	realizations, respectively, while
	$\underline{\mathsf X}$ and $\underline{x}$ denote a random
	vector and one of its realizations. The $i$-th component of
	$\underline{\mathsf X}$ is denoted by $\mathsf X_i$, with
	realization $x_i$. An unsubscripted single-letter random variable or random pair denotes a generic copy of its coordinate counterpart. For example, $(\Lambda,\mathsf E)$ has the same joint distribution as $(\Lambda_i,\mathsf E_i)$ for every $i$. A primed symbol denotes an independent copy; in particular,
	$(\Lambda',\mathsf E')$ is an independent copy of $(\Lambda,\mathsf E)$.
	
	\subsection{Channel Model}
	
	We consider a general binary-input memoryless channel with input alphabet $\{+1,-1\}$. Let the output probability density function be $q^+(y)$ when $x=+1$ and $q^-(y)$ when $x=-1$, without imposing any symmetry condition.
	
	A codebook of blocklength $n$ and rate $R$ nats per channel use contains $M=\lceil e^{nR}\rceil$ codewords. The codeword corresponding to message $m$ is $\underline{x}(m)=[x_1(m),\cdots,x_n(m)]$. In our random-coding analysis, the symbols $\{x_i(m)\}_{i=1}^n$ are i.i.d.\ and uniformly distributed on $\{+1,-1\}$, and the message is chosen uniformly from $\{1,\cdots,M\}$.
	
	Given the channel output vector $\underline{\mathsf{Y}}=[\mathsf{Y}_1,\cdots,\mathsf{Y}_n]$, define the LLRs
	\begin{equation}
		\mathsf{L}_i\triangleq\ln\frac{q^+(\mathsf{Y}_i)}{q^-(\mathsf{Y}_i)},\qquad i=1,\ldots,n,
	\end{equation}
	and the associated reliability vector $\underline{\Lambda}=[\Lambda_1,\cdots,\Lambda_n]$, where $\Lambda_i\triangleq|\mathsf{L}_i|$. For each $i\in\{1,\cdots,n\}$, let $\mathsf{R}_i$ be the rank of $\Lambda_i$ within $\underline{\Lambda}$, from $1$ (the smallest) to $n$ (the largest).

	Define the hard-decision error indicators as
	\begin{align}
		\mathsf E_i
		&\triangleq
		\mathbf 1\!(
		\operatorname{sgn}(\mathsf L_i)\cdot\mathsf X_i<0),
		\qquad i=1,\ldots,n,
		\label{eq:def_Ei}
	\end{align}
	where $\operatorname{sgn}(t)=1$ if $t\ge 0$ and $-1$ otherwise. Let $(\Lambda,\mathsf E)$ denote the corresponding generic
	single-letter pair, and define
	\begin{align}
		\Psi(t)
		&\triangleq
		\Pr[\Lambda\le t],
		\qquad t\ge0,
		\nonumber\\
		\mu
		&\triangleq
		\mathbb E[\Psi(\Lambda)\mathsf E].
		\label{eq:def_E_mu}
	\end{align}
	We impose the following mild regularity assumption throughout the paper.

	\begin{assumption}\label{assump:regularity}
		The reliability random variable $\Lambda$ is continuously distributed.
	\end{assumption}

	\subsection{ORBGRAND}
	
	Given a channel output vector $\underline{y}$, we form the LLRs $l_i = \ln\frac{q^+(y_i)}{q^-(y_i)}$, the reliability vector $\bigl[|l_1|,\cdots,|l_n|\bigr]$, and the corresponding hard-decision vector
	$\underline{x}_{\mathrm{hard}}=[\operatorname{sgn}(l_1),\cdots,\operatorname{sgn}(l_n)]$. The EPs queried by GRAND can be collected into a $\{\pm1\}$-valued matrix $P\in\{\pm1\}^{2^n\times n}$. In the $q$-th query, the $q$-th row $P_{q,:}$ of $P$ specifies which entries of $\underline{x}_{\mathrm{hard}}$ are flipped: if $P_{q,i}=-1$, the $i$-th entry of $\underline{x}_{\mathrm{hard}}$ is flipped, and otherwise it is left unchanged. Once the resulting vector is found to be a codeword, it is declared as the decoder output.\footnote{Since $2^n$ is typically an exceedingly large quantity, in practice we can truncate the matrix to keep only its first $Q<2^n$ rows, where $Q$ is the maximum number of queries permitted. If all rows of $P$ are exhausted without finding a codeword, a decoding failure is declared.}
	
	For ORBGRAND, the rows of $P$ are ordered such that the sum reliability $\sum_{i: P_{q,i}=-1} r_i$ is non-decreasing in $q$, where $r_i$ denotes the rank of $|l_i|$ (from $1$ for the smallest to $n$ for the largest), as defined in the previous subsection. Efficient constructions of the matrix $P$ are available; see, e.g., \cite{duffy2022ordered, condo2021high}. In practice, $P$ is often constructed in a recursive fashion, and truncated to its first $Q$ rows (maximum number of queries). In this paper, in order to analyze the fundamental performance limit of ORBGRAND and to focus on the performance loss due to reliability ranking, we consider the untruncated setting with $Q=2^n$.
	
	As shown in \cite[Sec.~II]{liu2022orbgrand}, with $Q=2^n$, ORBGRAND is equivalent to the following decoding rule:
	\begin{align}
		\label{decoding rule}
		\hat{m}=& \arg\min_{m=1,\ldots,\lceil e^{nR}\rceil}\mathsf{D}(\underline{x}(m),\underline{y}),\nonumber\\
		\mathsf{D}(\underline{x}(m),\underline{y})
		=&\frac{1}{n}\sum_{i=1}^{n}\frac{r_i}{n}
		\mathbf{1}\bigl(\operatorname{sgn}(l_i)\cdot x_i(m)<0\bigr),
	\end{align}
	where $\frac{r_i}{n} \mathbf{1}\bigl(\operatorname{sgn}(l_i)\cdot x_i(m)<0\bigr)$
	is the decoding metric associated with the $i$-th symbol position, borrowing the standard terminology in mismatched decoding \cite{lapidoth1998reliable,scarlett2020information}, and we call $\mathsf{D}(\underline{x}(m),\underline{y})$ the codeword decoding metric. Since each rank $r_i$ is determined by the global reliability ranking, these per-symbol decoding metrics are coupled across symbol positions.
	
	In the next
	section, we specialize the general mismatched RCU bound to this
	rank-based decoding rule and characterize the transmitted-codeword
	decoding metric required for the subsequent asymptotic analysis.
	
	\section{Random-Coding Union Bound and Transmitted-Codeword Decoding Metric Analysis}\label{sec:rcu}
	
	This section develops the two ingredients needed for the
	subsequent finite-blocklength analysis. We first specialize the
	general mismatched RCU bound to the rank-based ORBGRAND rule and then characterize the distributional behavior of the
	transmitted-codeword decoding metric.
	
	\subsection{Random-Coding Union Bound of ORBGRAND}
	
	Under the i.i.d.\ random-coding ensemble with $P_{\underline{\mathsf{X}}}=\prod_{i=1}^{n}P_{\mathsf{X}_i}$ and
	$P_{\mathsf{X}_i}(+1)=P_{\mathsf{X}_i}(-1)=1/2$, an upper bound on the ensemble-average decoding error probability of ORBGRAND is given below in Theorem \ref{ORB RCU}. The RCU bound was first introduced in
	\cite{polyanskiy2010channel}, with its extension to mismatched decoding established in \cite{martinez2011saddlepoint}
	and subsequently re-derived in \cite{scarlett2014saddlepoint}.
	Specializing this general mismatched RCU bound to the rank-based
	decoding metric of ORBGRAND gives Theorem~\ref{ORB RCU}.
	
	\begin{theorem}[ORB-RCU bound]\label{ORB RCU}
		For the random codebook generated i.i.d.\ according to $P_{\underline{\mathsf{X}}}$, the ensemble-average decoding error probability of ORBGRAND, denoted by $\epsilon(n,M)$, satisfies
		\begin{equation}
			\epsilon(n,M) \le \mathrm{RCU}_{\mathrm{ORB}}(n,M),
		\end{equation}
		where
		\begin{equation}\label{eq:rcu_n_M}
			\mathrm{RCU}_{\mathrm{ORB}}(n,M)
			\triangleq
			\mathbb{E}\left[
			\min\left\{
			1,
			(M-1)
			\Pr\left[
			\mathsf{D}(\underline{\hat{\mathsf{X}}},\underline{\mathsf{Y}})
			\le
			\mathsf{D}(\underline{\mathsf{X}},\underline{\mathsf{Y}})
			\big|
			\underline{\mathsf{X}},\underline{\mathsf{Y}}
			\right]
			\right\}
			\right],
		\end{equation}
		and $(\underline{\mathsf{X}},\underline{\mathsf{Y}},\underline{\hat{\mathsf{X}}})
		\sim
		P_{\underline{\mathsf{X}}}(\underline{x})
		P_{\underline{\mathsf{Y}}|\underline{\mathsf{X}}}(\underline{y}| \underline{x})
		P_{\underline{\mathsf{X}}}(\underline{\hat{x}})$.
	\end{theorem}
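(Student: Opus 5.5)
The proof follows the standard derivation of the RCU bound in the mismatched setting, with the ORBGRAND metric $\mathsf{D}(\cdot,\underline{y})$ of \eqref{decoding rule} used as the (generalized) decoding metric. The only ORBGRAND-specific feature is the cross-symbol coupling. Given $\underline{y}$, the ranks $r_i$ and the hard decisions $\operatorname{sgn}(l_i)$ are deterministic functions of $\underline{y}$ alone. Hence, conditioned on $\underline{\mathsf Y}=\underline{y}$, $\mathsf{D}(\underline{x},\underline{y})$ is a fixed function of the candidate codeword $\underline{x}$. This is all the RCU argument needs: it never uses per-letter additivity. So the coupling poses no obstacle at this stage.

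First, by symmetry of the random-coding ensemble and uniform message selection, the ensemble-average error probability equals the error probability conditioned on message $m=1$ being sent. Let $\underline{\mathsf X}=\underline{\mathsf X}(1)$, and let $\underline{\mathsf X}(2),\ldots,\underline{\mathsf X}(M)$ be the remaining codewords, which are i.i.d.\ $P_{\underline{\mathsf X}}$ and independent of $(\underline{\mathsf X},\underline{\mathsf Y})$. I will adopt the pessimistic tie-breaking convention, under which any tie counts as an error. ORBGRAND with a fixed query order actually resolves ties deterministically, so this only enlarges the error event. The error event is then contained in
\[
\bigcup_{m=2}^{M}\bigl\{\mathsf D(\underline{\mathsf X}(m),\underline{\mathsf Y})\le \mathsf D(\underline{\mathsf X},\underline{\mathsf Y})\bigr\}.
\]

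Next, I will condition on $(\underline{\mathsf X},\underline{\mathsf Y})=(\underline{x},\underline{y})$. The conditional probability of this union is at most $1$, and by the union bound it is also at most
\[
\sum_{m=2}^M \Pr\bigl[\mathsf D(\underline{\mathsf X}(m),\underline{y})\le \mathsf D(\underline{x},\underline{y})\bigr].
\]
Each competing codeword is independent of $(\underline{\mathsf X},\underline{\mathsf Y})$ and has law $P_{\underline{\mathsf X}}$, so every summand equals $\Pr[\mathsf D(\underline{\hat{\mathsf X}},\underline{\mathsf Y})\le \mathsf D(\underline{\mathsf X},\underline{\mathsf Y})\,|\,\underline{\mathsf X}=\underline{x},\underline{\mathsf Y}=\underline{y}]$, where $\underline{\hat{\mathsf X}}$ is as in the statement. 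Taking the minimum of the two bounds and then averaging over $(\underline{\mathsf X},\underline{\mathsf Y})\sim P_{\underline{\mathsf X}}P_{\underline{\mathsf Y}|\underline{\mathsf X}}$ gives \eqref{eq:rcu_n_M}.

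The only step needing care is justifying that the decision rule \eqref{decoding rule} faithfully represents untruncated ORBGRAND, including tie handling. Equivalence with $Q=2^n$ is taken from \cite[Sec.~II]{liu2022orbgrand}, and the $\le$ in the pairwise event covers every tie-breaking rule. Under Assumption~\ref{assump:regularity}, ranks are almost surely well defined, so $\mathsf D$ is almost surely unambiguous. Measurability of the conditional probability is immediate because $\underline{\hat{\mathsf X}}$ takes finitely many values.
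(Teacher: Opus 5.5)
Your proposal is correct and follows essentially the same route as the paper's proof. You reduce to message $1$ by the symmetry of the i.i.d.\ ensemble, enlarge the error event to $\bigcup_{m\ge 2}\{\mathsf D(\underline{\mathsf X}(m),\underline{\mathsf Y})\le \mathsf D(\underline{\mathsf X},\underline{\mathsf Y})\}$ so that ties count as errors, condition on $(\underline{\mathsf X},\underline{\mathsf Y})$, and apply the union bound truncated at $1$, using that the competing codewords are i.i.d.\ $P_{\underline{\mathsf X}}$ and independent of $(\underline{\mathsf X},\underline{\mathsf Y})$; your remarks that the cross-symbol coupling is harmless here and that the non-strict inequality covers any tie-breaking rule are consistent with, and slightly more explicit than, the paper's argument.
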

	
	\begin{IEEEproof}
		The proof is based on the general random-coding framework for
		the mismatched RCU bound
		\cite{martinez2011saddlepoint,scarlett2014saddlepoint}.  We provide the details needed to specialize this framework to ORBGRAND and to identify the pairwise probability analyzed below.
		
		For a fixed codebook $\bigl(\underline{x}(1),\ldots,\underline{x}(M)\bigr)$, the average decoding error probability of ORBGRAND satisfies, by the union bound,
		\begin{align}
			\epsilon(n,M)
			&\le \frac{1}{M}\sum_{m=1}^{M}
			\Pr\Big[
			\bigcup_{\substack{j=1, j\neq m}}^{M}
			\Bigl\{\mathsf{D}(\underline{x}(j),\underline{\mathsf{Y}})
			\le \mathsf{D}(\underline{x}(m),\underline{\mathsf{Y}})\Bigr\}
			\Big|
			\underline{\mathsf{X}}=\underline{x}(m)
			\Big].
		\end{align}
		Under the i.i.d. random-coding ensemble, the ensemble-average decoding error probability equals the conditional decoding error probability given that message $m=1$ is transmitted, i.e.,
		$\underline{\mathsf{X}}=\underline{\mathsf{X}}(1)$. Hence,
		\begin{align}
			\epsilon(n,M)
			&\le
			\Pr\Big[
			\bigcup_{j=2}^{M}
			\Bigl\{\mathsf{D}(\underline{\mathsf{X}}(j),\underline{\mathsf{Y}})
			\le \mathsf{D}(\underline{\mathsf{X}}(1),\underline{\mathsf{Y}})\Bigr\}
			\Big] \nonumber\\
			&=
			\mathbb{E}\left[
			\Pr\Big[
			\bigcup_{j=2}^{M}
			\Bigl\{\mathsf{D}(\underline{\mathsf{X}}(j),\underline{\mathsf{Y}})
			\le \mathsf{D}(\underline{\mathsf{X}}(1),\underline{\mathsf{Y}})\Bigr\}
			\Big|\underline{\mathsf{X}}(1),\underline{\mathsf{Y}} \Big]
			\right]\nonumber\\
			&\le
			\mathbb{E}\left[
			\min\left\{1, (M-1)\Pr\left[
			\mathsf{D}(\underline{\hat{\mathsf{X}}},\underline{\mathsf{Y}})
			\le \mathsf{D}(\underline{\mathsf{X}},\underline{\mathsf{Y}})
			\Big|\underline{\mathsf{X}},\underline{\mathsf{Y}}\right] \right\}
			\right],
		\end{align}
		where $(\underline{\mathsf{X}},\underline{\mathsf{Y}},\underline{\hat{\mathsf{X}}})\sim
		P_{\underline{\mathsf{X}}}(\underline{x})P_{\underline{\mathsf{Y}}|\underline{\mathsf{X}}}(\underline{y}|\underline{x})
		P_{\underline{\mathsf{X}}}(\underline{\hat{x}})$.	
	\end{IEEEproof}
	
	Based on Theorem \ref{ORB RCU}, we see that the resulting upper bound on the ensemble-average decoding error probability of ORBGRAND is fully determined by 
	$\Pr\left[
	\mathsf{D}(\underline{\hat{\mathsf{X}}},\underline{\mathsf{Y}})
	\le \mathsf{D}(\underline{\mathsf{X}},\underline{\mathsf{Y}})
	\Big|\underline{\mathsf{X}},\underline{\mathsf{Y}}\right]$. For $\mathsf{D}(\underline{\hat{\mathsf{X}}},\underline{\mathsf{Y}})$, since $\underline{\hat{\mathsf{X}}}$ and $\underline{\mathsf{Y}}$ are independent, we obtain:
	\begin{align}
		\mathsf{D}(\underline{\hat{\mathsf{X}}},\underline{\mathsf{Y}})=&\frac{1}{n^2}\sum_{i=1}^{n}\mathsf{R}_i\mathbf{1}\bigl(\operatorname{sgn}(\mathsf{L}_i)\cdot \hat{\mathsf{X}}_i<0\bigr),
	\end{align}
	in which $\mathbf{1}\bigl(\operatorname{sgn}(\mathsf{L}_i)\cdot\hat{\mathsf{X}}_i<0\bigr)$,
	$i=1,\ldots,n$, are i.i.d. Bernoulli random variables with parameter $1/2$. Therefore, we can equivalently write $\mathsf{D}(\underline{\hat{\mathsf{X}}},\underline{\mathsf{Y}})=\frac{1}{n^2} \sum_{i = 1}^n i \mathsf{B}_i$ where $\mathsf{B}_i, i = 1, \ldots, n$, are i.i.d. Bernoulli random variables with parameter $1/2$. Let $\zeta_n \triangleq \sum_{i=1}^{n} i\mathsf{B}_i = n^2\mathsf{D}(\underline{\hat{\mathsf{X}}},\underline{\mathsf{Y}})$. Since conditioned upon $(\underline{\mathsf{X}}, \underline{\mathsf{Y}})$ the randomness in $\mathsf{D}(\underline{\hat{\mathsf{X}}},\underline{\mathsf{Y}})$ is only induced by $\underline{\hat{\mathsf{X}}}$, we obtain
	\begin{align}
		\Pr\left[
		\mathsf{D}(\underline{\hat{\mathsf{X}}},\underline{\mathsf{Y}})
		\le
		\mathsf{D}(\underline{\mathsf{X}},\underline{\mathsf{Y}})
		\Big|
		\underline{\mathsf{X}},\underline{\mathsf{Y}}
		\right]
		&=
		\Pr\left[
		\zeta_n \le n^2\mathsf{D}(\underline{\mathsf{X}},\underline{\mathsf{Y}})
		\Big|
		\underline{\mathsf{X}},\underline{\mathsf{Y}}
		\right] =
		F_{\zeta_n}\left(n^2\mathsf{D}(\underline{\mathsf{X}},\underline{\mathsf{Y}})\right),
	\end{align}
	where $F_{\zeta_n}(\cdot)$ denotes the CDF of $\zeta_n$.
	
	In line with the tail-probability representation of the RCU bound
	considered in \cite{martinez2011saddlepoint} and subsequently
	used in, e.g., \cite{scarlett2014saddlepoint,scarlett2014mismatched}, we invoke the identity
	\begin{equation}\label{eq:min-to-unif}
		\mathbb{E}\left[\min\{1,\mathsf{A}\}\right]
		=\Pr\left[\mathsf{A}\ge \mathsf{U}\right],
	\end{equation}
	where $\mathsf{A}\ge 0$ and $\mathsf{U}\sim \mathrm{Unif}[0,1]$ is independent of $\mathsf{A}$. Applying \eqref{eq:min-to-unif} to the ORB-RCU
	bound gives
	\begin{align}
		\mathrm{RCU}_{\mathrm{ORB}}(n,M)\label{rcu n M}
		&=\mathbb{E}\left[\min\left\{1,(M-1)F_{\zeta_n}\bigl(n^2\mathsf{D}(\underline{\mathsf{X}},\underline{\mathsf{Y}})\bigr)\right\}\right] \nonumber\\
		&=\Pr\left[(M-1)F_{\zeta_n}\bigl(n^2\mathsf{D}(\underline{\mathsf{X}},\underline{\mathsf{Y}})\bigr)\ge \mathsf{U}\right] \nonumber\\
		&=\Pr\left[\ln(M-1)+\ln F_{\zeta_n}\bigl(n^2\mathsf{D}(\underline{\mathsf{X}},\underline{\mathsf{Y}})\bigr)-\ln \mathsf{U}\ge 0\right].
	\end{align}
	
	The representation in \eqref{rcu n M} shows that the
	finite-blocklength analysis of $\mathrm{RCU}_{\mathrm{ORB}}(n,M)$ requires characterizing both the transmitted-codeword decoding metric and the distribution function
	$F_{\zeta_n}(\cdot)$ associated with the competing codeword. We analyze the former in the next
	subsection, while the latter is treated in
	Section~\ref{sec:strong}.

	\subsection{Transmitted-Codeword Decoding Metric Analysis}
	
	In this subsection, we characterize the asymptotic behavior of the transmitted-codeword decoding metric $\mathsf{D}(\underline{\mathsf{X}},\underline{\mathsf{Y}})$. Since
	$\mathsf R_i/n$ is the empirical quantile of $\Lambda_i$ among
	the $n$ reliability magnitudes, it is expected to be close to
	the corresponding population quantile $\Psi(\Lambda_i)$ for
	large $n$. This motivates the following heuristic approximation
	\begin{align}
		\mathsf D(\underline{\mathsf X},\underline{\mathsf Y})
		=
		\frac{1}{n}
		\sum_{i=1}^{n}
		\frac{\mathsf R_i}{n}\mathsf E_i
		\approx
		\frac{1}{n}
		\sum_{i=1}^{n}
		\Psi(\Lambda_i)\mathsf E_i.
		\label{eq:intuition-transmitted-metric}
	\end{align}
	The quantity on the right approaches, by the law of large
	numbers, to $\mathbb E[\Psi(\Lambda)\mathsf E]=\mu$.
	This observation motivates centering the subsequent
	finite-blocklength analysis at $\mu$. It does not by itself
	characterize the fluctuations of the transmitted-codeword decoding metric, since each rank
	$\mathsf R_i$ depends jointly on all the reliability
	magnitudes.
	
	We begin by showing that $\mu$ lies strictly inside the interval $(0,1/4)$. We then represent the transmitted-codeword decoding metric, up to a deterministic $O(n^{-1})$ term, as a nondegenerate second-order $U$-statistic and apply a Berry-Esseen theorem for $U$-statistics to obtain a uniform Gaussian approximation.

	The following lemma shows that $\mu$ lies strictly inside the interval $(0,1/4)$.
	
	\begin{lemma}\label{conv point}
		Under Assumption~\ref{assump:regularity}, the quantity $\mu=\mathbb{E}[\Psi(\Lambda)\mathsf E]$ satisfies $0<\mu<\frac{1}{4}.$
	\end{lemma}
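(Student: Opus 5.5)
Since $\Lambda$ is continuously distributed (Assumption~\ref{assump:regularity}), the variable $\mathsf U\triangleq\Psi(\Lambda)$ is uniform on $[0,1]$. Hence $\mu=\mathbb E[\mathsf U\,\eta(\Lambda)]$, where $\eta(\lambda)\triangleq\Pr[\mathsf E=1\mid\Lambda=\lambda]$ is the conditional hard-decision error probability given the reliability. Both bounds should follow from one structural fact about $\eta$: for every $\lambda\ge0$ we have $0\le\eta(\lambda)\le\frac{1}{1+e^{\lambda}}\le\frac12$. To prove this, fix $\lambda>0$. Conditioned on $|\mathsf L|=\lambda$, the hard decision errs exactly when the sign of the LLR disagrees with the input. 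With equiprobable inputs, the posterior probability of the input opposite to $\operatorname{sgn}(\mathsf L)$ equals $1/(1+e^{|\mathsf L|})$. This is a posterior computed from the true likelihoods, so it holds for a general binary-input memoryless channel with no symmetry assumption. Concretely, $\Pr[\mathsf E=1\mid \mathsf L=l]=1/(1+e^{|l|})$, and averaging over $\mathsf L\in\{\lambda,-\lambda\}$ gives $\eta(\lambda)=1/(1+e^{\lambda})$. I would carry this out through the densities $q^\pm$ and Bayes' rule, noting that the event $\lambda=0$ has probability zero by continuity.

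For the \emph{upper bound}, write $\mu=\mathbb E[\mathsf U\,\eta]\le\frac12\mathbb E[\mathsf U]=\frac14$. Strictness requires $\eta(\Lambda)<\frac12$ with positive probability. Since $\eta(\lambda)=1/(1+e^\lambda)<\frac12$ for every $\lambda>0$ and $\Pr[\Lambda>0]=1$, the inequality $\mathsf U\eta<\frac12\mathsf U$ holds almost surely on $\{\mathsf U>0\}$, which has probability one. Therefore $\mu<\frac14$.

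For the \emph{lower bound}, $\mu\ge0$ is trivial, so I need $\Pr[\mathsf E=1,\ \mathsf U>0]>0$. Because $\mathsf U>0$ almost surely, this reduces to $\Pr[\mathsf E=1]=\mathbb E[\eta(\Lambda)]=\mathbb E[1/(1+e^{\Lambda})]>0$. That holds because $\Lambda<\infty$ almost surely, so the integrand is strictly positive.

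The main obstacle is making the identity $\eta(\lambda)=1/(1+e^\lambda)$ rigorous without symmetry, and handling possible degenerate channels. If $q^+$ and $q^-$ have disjoint supports, then $\Lambda=\infty$, which would violate the continuity/finiteness implicit in Assumption~\ref{assump:regularity}. I would state that $\Lambda$ is a finite real random variable, so that $\Pr[\mathsf E=1]>0$. I would verify the Bayes step by computing $\mathbb E[\mathsf E\,g(\mathsf L)]=\frac12\int q^{-}(y)\mathbf 1(\mathsf L(y)\ge0)g+\frac12\int q^{+}(y)\mathbf 1(\mathsf L(y)<0)g$ and substituting $q^-=q^+e^{-\mathsf L}$ on the set where $\mathsf L\ge0$, and symmetrically on the set where $\mathsf L<0$.
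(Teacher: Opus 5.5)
Your proposal is correct and follows essentially the same route as the paper: Bayes' rule under equiprobable inputs gives $\Pr[\mathsf E=1\mid\cdot]=1/(1+e^{\Lambda})$. Then $\Psi(\Lambda)\sim\operatorname{Unif}[0,1]$ and $\Lambda>0$ almost surely give $0<\Psi(\Lambda)/(1+e^{\Lambda})<\frac12\Psi(\Lambda)$ almost surely, and taking expectations yields $0<\mu<\frac14$. The differences are only cosmetic: you condition on $\Lambda$ and verify the posterior through $\mathbb E[\mathsf E\,g(\mathsf L)]$ where the paper conditions on $\mathsf Y$, and you make explicit the finiteness of $\Lambda$, which the paper leaves implicit.
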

	\begin{IEEEproof}
		See Appendix \ref{proof conv point}.
	\end{IEEEproof}

	From Lemma~\ref{conv point}, we therefore fix constants $\omega_1$ and $\omega_2$ such that
	\begin{align}
		0<\omega_1<\mu<\omega_2<\frac14,
		\qquad
		\mathcal D\triangleq[\omega_1,\omega_2].
		\label{eq:d-compact-set}
	\end{align}

	Furthermore, we establish an asymptotic Gaussian characterization of the transmitted-codeword decoding metric $\mathsf{D}(\underline{\mathsf{X}},\underline{\mathsf{Y}})$ around $\mu$. The following theorem characterizes the fluctuations of the
	transmitted-codeword decoding metric at the $n^{-1/2}$ scale and
	provides a uniform Gaussian approximation for its distribution. A central difficulty is that $\mathsf{D}(\underline{\mathsf{X}},\underline{\mathsf{Y}})$ depends on the global reliability ordering and therefore involves cross-symbol coupling, so standard second-order analyses for sums of independent single-letter decoding metrics do not apply directly.
	To overcome this difficulty, we represent $\mathsf{D}(\underline{\mathsf{X}},\underline{\mathsf{Y}})$ as a second-order $U$-statistic up to a deterministic $O(n^{-1})$ term and analyze its first-order Hoeffding projection \cite{hoeffding1992class}. This projection identifies the leading fluctuation and yields a single-letter expression for the asymptotic variance. A Berry-Esseen theorem for nondegenerate $U$-statistics then provides a uniform Gaussian approximation for the transmitted-codeword decoding metric.

	To state the Gaussian approximation below, let
	$(\mathsf E',\Lambda')$ be an independent copy of
	$(\mathsf E,\Lambda)$, and define
	\begin{align}
		a(x)
		&\triangleq
		\Pr[\mathsf E'=1,\Lambda'\ge x],
		\label{eq:a-function}\\
		\sigma^2
		&\triangleq
		\operatorname{Var}\!\left(
		\mathsf E\Psi(\Lambda)+a(\Lambda)
		\right).
		\label{eq:metric-variance-parameter}
	\end{align}
	
	\begin{theorem}[Berry-Esseen approximation for the
		transmitted-codeword decoding metric]
		\label{thm:gauss of full D}
		Under Assumption~\ref{assump:regularity}, the variance
		parameter $\sigma^2$ defined in
		\eqref{eq:metric-variance-parameter} is strictly positive.
		Moreover, with $\sigma>0$ denoting the corresponding standard deviation, there
		exists a constant $0<C<\infty$, independent of $n$, such
		that, for all $n\ge2$,
		\begin{align}
			\sup_{t\in\mathbb R}
			\left|
			\Pr\left[
			\frac{
				\sqrt n\bigl(
				\mathsf D(\underline{\mathsf X},
				\underline{\mathsf Y})-\mu
				\bigr)
			}{\sigma}
			\ge t
			\right]
			-
			Q(t)
			\right|
			\le
			\frac{C}{\sqrt n},
		\end{align}
		where $Q(\cdot)$ is the Gaussian tail function.
	\end{theorem}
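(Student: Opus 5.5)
The plan is to write the rank as a sum of pairwise comparisons, which turns $\mathsf D(\underline{\mathsf X},\underline{\mathsf Y})$ into a bounded-kernel $U$-statistic plus an $O(n^{-1})$ remainder, and then to apply a Berry--Esseen theorem for nondegenerate $U$-statistics.

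\textbf{Step 1: $U$-statistic representation.} Set $\mathsf Z_i\triangleq(\Lambda_i,\mathsf E_i)$. Since the channel is memoryless and the inputs are i.i.d., the $\mathsf Z_i$ are i.i.d. By Assumption~\ref{assump:regularity}, ties among the $\Lambda_i$ occur with probability zero. Hence almost surely $\mathsf R_i=1+\sum_{j\neq i}\mathbf 1(\Lambda_j<\Lambda_i)$, and therefore
\begin{align}
\mathsf D(\underline{\mathsf X},\underline{\mathsf Y})=\frac{1}{n^2}\sum_{i=1}^n\mathsf E_i+\frac{n-1}{n}\,\mathsf U_n,\qquad
\mathsf U_n\triangleq\binom{n}{2}^{-1}\sum_{i<j}h(\mathsf Z_i,\mathsf Z_j),
\end{align}
with the symmetric kernel $h(z,z')=\tfrac12\bigl[e\,\mathbf 1(\lambda'<\lambda)+e'\,\mathbf 1(\lambda<\lambda')\bigr]$, where $z=(\lambda,e)$ and $z'=(\lambda',e')$. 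Because $0\le \mathsf U_n\le 1$, the two identities give the deterministic bound $|\mathsf D-\mathsf U_n|\le 2/n$. This is the $O(n^{-1})$ term referred to in the statement. Moreover $\mathbb E[h(\mathsf Z_1,\mathsf Z_2)]=\mathbb E[\mathsf E\Psi(\Lambda)]=\mu$.

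\textbf{Step 2: Hoeffding projection.} Integrating out the independent copy gives
\begin{align}
h_1(z)\triangleq\mathbb E[h(z,\mathsf Z')]=\tfrac12\bigl[e\Psi(\lambda)+a(\lambda)\bigr].
\end{align}
Here $a$ is as in \eqref{eq:a-function}, and ties are ignored because they have probability zero. So $\sigma_1^2\triangleq\operatorname{Var}(h_1(\mathsf Z))=\sigma^2/4$. The standard $U$-statistic normalization then gives $\sqrt n(\mathsf U_n-\mu)$ with asymptotic variance $4\sigma_1^2=\sigma^2$, which matches \eqref{eq:metric-variance-parameter}.

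\textbf{Step 3: $\sigma^2>0$.} I expect this to be the main conceptual point, since the rest is assembly. By the law of total variance, $\sigma^2\ge\mathbb E\bigl[\operatorname{Var}(\mathsf E\Psi(\Lambda)+a(\Lambda)\mid\Lambda)\bigr]=\mathbb E\bigl[\Psi(\Lambda)^2\,p(\Lambda)(1-p(\Lambda))\bigr]$, where $p(\lambda)=\Pr[\mathsf E=1\mid\Lambda=\lambda]$.

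For uniform inputs, $\Pr[\mathsf X=+1\mid \mathsf Y=y]=q^+(y)/(q^+(y)+q^-(y))$. The hard decision errs with posterior probability $1/(1+e^{|l(y)|})$. This depends on $y$ only through $\lambda=|l(y)|$, so after averaging over $y$ with $|l(y)|=\lambda$ we get $p(\lambda)=1/(1+e^{\lambda})\in(0,1/2]$, with no symmetry assumption on the channel needed.

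It remains to check $\Psi(\Lambda)>0$ on a set of positive probability. Since $\Lambda$ is continuous, $\Psi(\Lambda)\sim\mathrm{Unif}[0,1]$, so $\Psi(\Lambda)>0$ almost surely. The expectation is therefore strictly positive. If a rigorous treatment of conditioning on the continuous $\Lambda$ is needed, I would phrase the argument via regular conditional distributions.

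\textbf{Step 4: Berry--Esseen and the perturbation.} The kernel satisfies $|h|\le1$, so all moments are finite, and $\sigma_1>0$ by Step 3. I would invoke a Berry--Esseen theorem for nondegenerate $U$-statistics (e.g., Callaert--Janssen or Chen--Shao). It gives $\sup_t|\Pr[\sqrt n(\mathsf U_n-\mu)/\sigma\ge t]-Q(t)|\le C_0/\sqrt n$ for a constant $C_0$ depending on bounds of the kernel moments and on $\sigma_1$.

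To pass to $\mathsf D$, use $|\sqrt n(\mathsf D-\mu)/\sigma-\sqrt n(\mathsf U_n-\mu)/\sigma|\le \delta_n\triangleq 2/(\sigma\sqrt n)$. This sandwiches $\Pr[\sqrt n(\mathsf D-\mu)/\sigma\ge t]$ between the $\mathsf U_n$ probabilities evaluated at $t\pm\delta_n$. Since $Q$ is $(2\pi)^{-1/2}$-Lipschitz, this changes the bound by at most $\delta_n/\sqrt{2\pi}$. Taking $C=C_0+2/(\sigma\sqrt{2\pi})$ then finishes the proof for all $n\ge2$.
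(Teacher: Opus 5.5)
Your proposal is correct and follows essentially the same route as the paper. It writes $\mathsf D=(1-\frac{1}{n})\mathsf U_n+\frac{1}{n^2}\sum_i\mathsf E_i$ using pairwise comparisons, uses the Hoeffding projection $\frac{1}{2}[\mathsf E\Psi(\Lambda)+a(\Lambda)]$ with variance $\sigma^2/4$, proves $\sigma^2>0$ by total variance with $\Pr[\mathsf E=1\mid\Lambda=\lambda]=1/(1+e^{\lambda})$, and finishes with the Callaert--Janssen bound and a Lipschitz sandwich. The only differences are cosmetic: you use strict rather than non-strict comparisons in the kernel, and your bound $2/n$ on $|\mathsf D-\mathsf U_n|$ can be sharpened to $1/n$, since $\frac{1}{n^2}\sum_i\mathsf E_i$ and $\frac{1}{n}\mathsf U_n$ both lie in $[0,1/n]$.
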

	\begin{IEEEproof}
		See Appendix \ref{proof gauss full D}.
	\end{IEEEproof}

	The preceding theorem provides the Gaussian approximation
	required for the transmitted-codeword term in
	\eqref{rcu n M}. It remains to characterize the term $F_{\zeta_n}(\cdot)$ associated with the competing codeword; the next section develops a uniform
	strong large-deviation expansion for this quantity over a
	neighborhood of $\mu$.

	\section{Strong Large-Deviation Analysis for Competing-Codeword Decoding Metric}\label{sec:strong}
	
	In this section, we study the competing-codeword decoding metric
	$\mathsf{D}(\underline{\hat{\mathsf{X}}},\underline{\mathsf{Y}})$.
	Recall that $\zeta_n = \sum_{i=1}^{n} i\mathsf{B}_i
	= n^2\mathsf{D}(\underline{\hat{\mathsf{X}}},\underline{\mathsf{Y}})$, so the distribution of the competing-codeword decoding metric is determined by the left-tail probability $F_{\zeta_n}(n^2d)$ over the range of $d$ relevant to \eqref{rcu n M}. To support the subsequent analysis, we establish a uniform strong large-deviation expansion for $F_{\zeta_n}(n^2d)$ over the relevant range of $d$. Strong large-deviation expansions can trace back to the classical Bahadur-Rao result \cite{bahadur1960deviations} and its extensions to more general sequences and lattice-valued
	settings \cite{chaganty1993strong,joutard2013strong}; they have
	also been used to derive asymptotic bounds on the
	maximum code log-volume \cite{moulin2017log}. These existing
	results do not directly yield the expansion required here,
	because $\zeta_n=\sum_{i=1}^{n}i\mathsf B_i$ is a weighted lattice sum of independent but non-identically distributed summands, with
	weights varying with the blocklength. Moreover, the subsequent
	analysis requires an explicit subexponential prefactor
	and a remainder that hold uniformly for $d$ in a compact
	interval. We therefore exploit the specific weighted-Bernoulli structure and derive the required expansion using exponential tilting, uniform finite-blocklength saddlepoint expansions, and a lattice local limit argument.

	\subsection{Setup and Cumulant Generating Functions}

	For the large-deviation analysis below, define the normalized
	weighted Bernoulli sum
	\begin{align}
		\mathsf{H}_n
		\triangleq
		\frac{1}{n^2}
		\sum_{i=1}^{n}i\mathsf{B}_i.
	\end{align}
	The moment generating function (MGF) and the normalized cumulant
	generating function (CGF) of $n\mathsf{H}_n$ are defined,
	respectively, as
	\begin{align}
		\Upsilon_n(\theta)
		\triangleq
		\mathbb{E}\left[
		e^{\theta n\mathsf{H}_n}
		\right],
		\qquad
		K_n(\theta)
		\triangleq
		\frac{1}{n}\ln\Upsilon_n(\theta).
	\end{align}

	Since $\{\mathsf{B}_i\}_{i=1}^n$ are i.i.d. Bernoulli random variables with parameter $1/2$, we have
	\begin{align}
		\Upsilon_n(\theta)
		&=
		\prod_{i=1}^{n}
		\frac{1+e^{\theta i/n}}{2},\\
		K_n(\theta)
		&=
		\frac{1}{n}
		\sum_{i=1}^{n}
		\ln\frac{1+e^{\theta i/n}}{2}.
		\label{eq:Kn-explicit}
	\end{align}
	Consequently, for every $\theta\in\mathbb{R}$, a standard
	Riemann-sum argument gives
	\begin{align}\label{eq:K}
		K_n(\theta)
		\longrightarrow
		K(\theta)
		\triangleq
		\int_{0}^{1}
		\ln\frac{1+e^{\theta x}}{2}
		\,\mathrm{d}x.
	\end{align}
	The associated Legendre transform is defined as
	\begin{align}\label{eq:legendre}
		I(d)
		\triangleq
		\sup_{\theta\in\mathbb{R}}
		\bigl\{\theta d-K(\theta)\bigr\}.
	\end{align}

	Recalling $\zeta_n=\sum_{i=1}^{n}i\mathsf{B}_i=n^2\mathsf{H}_n$, we have
	\begin{align}
		F_{\zeta_n}(n^2 d)=\Pr[\zeta_n\le n^2 d]=\Pr[\mathsf{H}_n\le d].
	\end{align}
	
	The expressions above characterize the limiting CGF and the associated rate function. Our next goal is to derive a uniform strong large-deviation expansion for $\Pr[\mathsf{H}_n\le d]$ when $d$ ranges over a compact subset of $(0,1/4)$. To this end, we directly analyze the left-tail probability by exponential tilting at a finite-blocklength saddlepoint and by establishing a uniform lattice local central limit theorem under the tilted measure.

	\subsection{Finite-Blocklength Saddlepoint and Exponential Tilting}

	We first characterize the limiting saddlepoint associated with the
	left-tail event. Differentiating \eqref{eq:K} under the integral sign
	gives
	\begin{align}
		K'(\theta)
		&=
		\int_0^1
		\frac{xe^{\theta x}}
		{1+e^{\theta x}}
		\,\mathrm{d}x,
		\label{eq:K-first-derivative}\\
		K''(\theta)
		&=
		\int_0^1
		\frac{x^2e^{\theta x}}
		{\left(1+e^{\theta x}\right)^2}
		\,\mathrm{d}x
		>0.
		\label{eq:K-second-derivative}
	\end{align}
	Hence, $K'(\cdot)$ is continuous and strictly increasing. Its behavior over the negative half-line is characterized by
	\begin{align}
		\lim_{\theta\to-\infty}K'(\theta)
		=0,
		\qquad
		K'(0)=\frac14.
		\label{eq:K-prime-range}
	\end{align}
	It follows that, for every $d\in(0,1/4)$, there exists a unique
	$\theta_d<0$ satisfying $K'(\theta_d)=d$. Since $\theta\mapsto\theta d-K(\theta)$ is strictly concave,
	$\theta_d$ is its unique maximizer. The rate function in
	\eqref{eq:legendre} therefore admits the representation
	\begin{align}
		I(d)
		=
		\theta_dd-K(\theta_d),
		\qquad
		d\in(0,1/4).
		\label{eq:rate-function-saddlepoint}
	\end{align}
	
	We next establish a strong large-deviation expansion uniformly
	over the compact interval $\mathcal D$ defined in
	\eqref{eq:d-compact-set}. Since the unique solution $\theta_d$ to $K'(\theta_d)=d$ depends continuously on $d$, the set $\{\theta_d:d\in\mathcal D\}$ is a compact subset of $(-\infty,0)$. This compactness, together with the continuity and strict positivity of $K''(\cdot)$, ensures that $\inf_{d\in\mathcal D}K''(\theta_d)>0$.
		
	We account for the lattice structure of $\zeta_n$ when defining
	the finite-blocklength saddlepoint. Since $\zeta_n$ is integer-valued, introduce the lattice-adjusted threshold
	\begin{align}
		\kappa_n(d)
		\triangleq
		\left\lfloor n^2d\right\rfloor,
		\qquad
		d_n
		\triangleq
		\frac{\kappa_n(d)}{n^2}.
		\label{eq:qn-dn-definition}
	\end{align}
	These definitions yield the exact event identity
	\begin{align}
		\{\mathsf H_n\le d\}
		=
		\{\zeta_n\le \kappa_n(d)\}
		=
		\{\mathsf H_n\le d_n\},
		\label{eq:lattice-event-equivalence}
	\end{align}
	while the corresponding rounding error satisfies
	\begin{align}
		0
		\le
		d-d_n
		<
		\frac{1}{n^2}.
		\label{eq:dn-rounding}
	\end{align}
	
	For later use, define the finite-blocklength rate function and
	the Euler-Maclaurin boundary correction as
	\begin{align}
		I_n(u)
		&\triangleq
		\sup_{\theta\in\mathbb R}
		\bigl\{
		\theta u-K_n(\theta)
		\bigr\},
		\nonumber\\
		G(\theta)
		&\triangleq
		\frac12
		\ln\frac{1+e^\theta}{2}.
		\label{eq:finite-rate-function-and-correction}
	\end{align}
	The following lemma collects the uniform saddlepoint
	expansions needed for the exponential-tilting argument.
	\begin{lemma}\label{lem:finite-saddlepoint}
		For all sufficiently large $n$ and every
		$d\in\mathcal D$, the equation
		\begin{align}
			K_n'(\theta_{n,d})
			=
			d_n
			\label{eq:finite-blocklength-saddlepoint}
		\end{align}
		has a unique solution $\theta_{n,d}$, and these solutions
		remain in a fixed compact subset of $(-\infty,0)$. In addition, $\theta_{n,d}$ is the unique maximizer in the definition of $I_n(d_n)$, and hence
		\begin{align}
			I_n(d_n)
			=
			\theta_{n,d}d_n-K_n(\theta_{n,d}).
			\label{eq:finite-blocklength-rate-function}
		\end{align}
		Moreover, uniformly over $d\in\mathcal D$,
		\begin{align}
			\theta_{n,d}
			&=
			\theta_d+O(n^{-1}),
			\label{eq:theta-nd-expansion}\\
			K_n''(\theta_{n,d})
			&=
			K''(\theta_d)+O(n^{-1}),
			\label{eq:Kn-second-expansion}\\
			nI_n(d_n)
			&=
			nI(d)-G(\theta_d)+O(n^{-1}).
			\label{eq:finite-rate-function-expansion}
		\end{align}
	\end{lemma}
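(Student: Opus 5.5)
The plan is to treat $K_n$ as a trapezoidal-type Riemann sum of the smooth integrand
$f_\theta(x)\triangleq\ln\frac{1+e^{\theta x}}{2}$ and to expand it by Euler--Maclaurin, uniformly in $\theta$ on a compact set. Since $f_\theta(0)=0$ and $f_\theta(1)=2G(\theta)$, the first-order Euler--Maclaurin formula gives
\begin{align}
K_n(\theta)=\frac1n\sum_{i=1}^n f_\theta(i/n)=K(\theta)+\frac{G(\theta)}{n}+O(n^{-2}).
\label{eq:plan-EM}
\end{align}
The remainder is $n^{-2}\int_0^1 \tilde B_2(nx)\,\partial_x^2 f_\theta(x)\,\mathrm dx$ up to constants, with $\tilde B_2$ the periodic Bernoulli function. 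It is $O(n^{-2})$ uniformly for $\theta$ in any compact $\Theta\subset\mathbb R$, because $\partial_x^2 f_\theta$ is jointly continuous and hence bounded on $[0,1]\times\Theta$. Applying the same argument to $\partial_\theta^k f_\theta$ for $k=1,2,3$ (differentiation under the finite sum and the integral is harmless) yields
\begin{align*}
K_n^{(k)}(\theta)=K^{(k)}(\theta)+\frac{G^{(k)}(\theta)}{n}+O(n^{-2}),
\end{align*}
uniformly on $\Theta$, and in particular $K_n^{(k)}\to K^{(k)}$ uniformly. I regard this uniform Euler--Maclaurin step, with all constants controlled over $\Theta$, as the only real technical point. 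Everything after it is implicit-function bookkeeping.

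\textbf{Existence, uniqueness and localization.} First, $K_n''(\theta)=\frac1n\sum_i (i/n)^2 e^{\theta i/n}/(1+e^{\theta i/n})^2>0$, so $K_n'$ is strictly increasing and $\theta\mapsto\theta u-K_n(\theta)$ is strictly concave; any stationary point is therefore the unique maximizer in $I_n(u)$, which gives \eqref{eq:finite-blocklength-rate-function}. Next fix $\delta>0$ and set $\Theta\triangleq[\theta_{\omega_1}-\delta,\theta_{\omega_2}+\delta]\subset(-\infty,0)$. Since $K'$ is strictly increasing,
\begin{align*}
K'(\theta_{\omega_1}-\delta)<\omega_1 \quad\text{and}\quad K'(\theta_{\omega_2}+\delta)>\omega_2 .
\end{align*}
By uniform convergence of $K_n'$ and $d_n\in(\omega_1-n^{-2},\omega_2]$, for large $n$ we get $K_n'(\theta_{\omega_1}-\delta)<d_n<K_n'(\theta_{\omega_2}+\delta)$ for every $d\in\mathcal D$. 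The intermediate value theorem and monotonicity then give a unique $\theta_{n,d}\in\Theta$.

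\textbf{Expansions \eqref{eq:theta-nd-expansion}--\eqref{eq:Kn-second-expansion}.} From $K_n'(\theta_{n,d})=d_n$, the expansion \eqref{eq:plan-EM} for $K_n'$, and \eqref{eq:dn-rounding}, I get
\begin{align*}
K'(\theta_{n,d})-K'(\theta_d)=d_n-d-\frac{G'(\theta_{n,d})}{n}+O(n^{-2})=O(n^{-1}).
\end{align*}
The mean value theorem with $\inf_{\Theta}K''>0$ then gives $\theta_{n,d}-\theta_d=O(n^{-1})$ uniformly in $d$. Further,
\begin{align*}
K_n''(\theta_{n,d})=K''(\theta_{n,d})+O(n^{-1})=K''(\theta_d)+O(n^{-1}),
\end{align*}
where the last step uses that $K'''$ is bounded on $\Theta$.

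\textbf{Rate expansion \eqref{eq:finite-rate-function-expansion}.} Write $\phi(\theta)\triangleq\theta d-K(\theta)$. Using \eqref{eq:plan-EM}, $|d_n-d|<n^{-2}$ and boundedness of $\Theta$,
\begin{align*}
I_n(d_n)=\theta_{n,d}d_n-K_n(\theta_{n,d})=\phi(\theta_{n,d})-\frac{G(\theta_{n,d})}{n}+O(n^{-2}).
\end{align*}
Since $\phi'(\theta_d)=0$ and $\phi''$ is bounded on $\Theta$, a second-order Taylor expansion gives $\phi(\theta_{n,d})=I(d)+O(|\theta_{n,d}-\theta_d|^2)=I(d)+O(n^{-2})$. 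Also $G(\theta_{n,d})=G(\theta_d)+O(n^{-1})$ because $G'$ is bounded. Combining these and multiplying by $n$ gives $nI_n(d_n)=nI(d)-G(\theta_d)+O(n^{-1})$, uniformly over $d\in\mathcal D$.
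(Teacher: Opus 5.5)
Your proposal is correct and follows essentially the same route as the paper. Both arguments rest on a uniform Euler--Maclaurin expansion $K_n^{(r)}=K^{(r)}+G^{(r)}/n+O(n^{-2})$ on a compact interval inside $(-\infty,0)$, use bracketing plus strict monotonicity of $K_n'$ for existence and uniqueness, apply the mean-value theorem with $\inf K''>0$ to get $\theta_{n,d}-\theta_d=O(n^{-1})$, and finish with a second-order Taylor expansion of $\theta d-K(\theta)$ at its stationary point. You also fill in two details the paper leaves implicit: you derive the boundary term explicitly from $f_\theta(0)=0$ and $f_\theta(1)=2G(\theta)$, where the paper only cites Euler--Maclaurin, and you justify the unique-maximizer claim by strict concavity.
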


	\begin{IEEEproof}
		See Appendix \ref{proof saddlepoint expansion}.
	\end{IEEEproof}
	
	Based on \eqref{eq:qn-dn-definition} and \eqref{eq:finite-blocklength-saddlepoint}, we have
	\begin{align}
		n^2K_n'(\theta_{n,d})
		=
		n^2d_n
		=
		\kappa_n(d).
		\label{eq:exact-lattice-centering}
	\end{align}
	This exact centering motivates an exponential change of measure under which the lattice threshold becomes the mean of $\zeta_n$. This change-of-measure construction is motivated by the
	exponential-tilting method underlying Bahadur-Rao type strong
	large-deviation analyses \cite{bahadur1960deviations,chaganty1993strong,joutard2013strong}.
	
	Let $\Pr$ denote the original probability measure governing
	$\zeta_n$. For each $d\in\mathcal D$, define the exponentially tilted probability measure $\Pr\nolimits_{n,d}$ through
	\begin{align}
		\frac{\mathrm{d}\Pr\nolimits_{n,d}}{\mathrm{d}\Pr}
		\triangleq
		\exp\left(
		\frac{\theta_{n,d}}{n}\zeta_n
		-
		nK_n(\theta_{n,d})
		\right).
		\label{eq:tilted-measure}
	\end{align}
	The normalization follows directly from the definition of $K_n(\cdot)$, since
	\begin{align}
		\mathbb E\left[
		\exp\left(
		\frac{\theta_{n,d}}{n}\zeta_n
		-
		nK_n(\theta_{n,d})
		\right)
		\right]
		=
		1.
	\end{align}
	
	Under $\Pr\nolimits_{n,d}$, the random variables
	$\{\mathsf B_i\}_{i=1}^{n}$ remain independent and satisfy
	\begin{align}
		\Pr\nolimits_{n,d}[\mathsf B_i=1]
		=
		\frac{e^{\theta_{n,d}i/n}}
		{1+e^{\theta_{n,d}i/n}},
		\qquad
		i\in\{1,\ldots,n\}.
		\label{eq:tilted-Bernoulli-parameter}
	\end{align}
	The first two derivatives of $K_n(\cdot)$ therefore give the tilted mean and variance of $\zeta_n$ as
	\begin{align}
		\mathbb E_{n,d}[\zeta_n]
		&=
		n^2K_n'(\theta_{n,d})
		=
		\kappa_n(d),
		\label{eq:tilted-mean}\\
		\sigma_{n,d}^2
		&\triangleq
		\operatorname{Var}_{n,d}(\zeta_n)
		=
		n^3K_n''(\theta_{n,d}).
		\label{eq:tilted-variance}
	\end{align}
	Here, $\sigma_{n,d}>0$ denotes the corresponding
	standard deviation. In view of \eqref{eq:Kn-second-expansion}, the tilted variance satisfies the uniform expansion
	\begin{align}
		\sigma_{n,d}^2
		=
		n^3K''(\theta_d)
		\bigl(1+O(n^{-1})\bigr),
		\qquad
		d\in\mathcal D.
		\label{eq:tilted-variance-expansion}
	\end{align}
	In particular, there exist constants $0<C_0<C_1<\infty$ such that, for all sufficiently large $n$,
	\begin{align}
		C_0n^3
		\le
		\sigma_{n,d}^2
		\le
		C_1n^3,
		\qquad
		d\in\mathcal D.
		\label{eq:tilted-variance-order}
	\end{align}
	
	We next use the tilted measure to separate the exponential decay from the local lattice contribution. By \eqref{eq:lattice-event-equivalence} and \eqref{eq:tilted-measure}, the left-tail probability can be written exactly as
	\begin{align}
		\Pr[\mathsf H_n\le d]
		&=
		\mathbb E_{n,d}
		\left[
		\exp\left(
		-\frac{\theta_{n,d}}{n}\zeta_n
		+
		nK_n(\theta_{n,d})
		\right)
		\mathbf{1}\{\zeta_n\le \kappa_n(d)\}
		\right].
		\label{eq:change-of-measure-expectation}
	\end{align}
	For each realization satisfying $\zeta_n\le \kappa_n(d)$, introduce the nonnegative lattice displacement $\ell\triangleq\kappa_n(d)-\zeta_n$. Using $\kappa_n(d)=n^2d_n$ and \eqref{eq:finite-blocklength-rate-function}, we obtain
	\begin{align}
		-\frac{\theta_{n,d}}{n}\bigl(\kappa_n(d)-\ell\bigr)
		+
		nK_n(\theta_{n,d})
		=
		-nI_n(d_n)
		+
		\frac{\theta_{n,d}}{n}\ell.
	\end{align}
	Consequently, the left-tail probability admits the  decomposition
	\begin{align}
		\Pr[\mathsf H_n\le d]
		=
		e^{-nI_n(d_n)}T_{n,d},
		\label{eq:exact-tilting-decomposition}
	\end{align}
	where the one-sided lattice term is defined by
	\begin{align}
		T_{n,d}
		\triangleq
		\sum_{s=0}^{\kappa_n(d)}
		e^{\theta_{n,d}s/n}
		\Pr\nolimits_{n,d}
		\left[
		\zeta_n=\kappa_n(d)-s
		\right].
		\label{eq:one-sided-lattice-term}
	\end{align}
	
	The decomposition in \eqref{eq:exact-tilting-decomposition} isolates
	the finite-blocklength large-deviation exponent factor
	$e^{-nI_n(d_n)}$ from the one-sided lattice contribution $T_{n,d}$.
	The latter will be evaluated in the next subsection through a uniform lattice local central limit theorem.

	\subsection{Uniform Lattice Approximation and Strong Large Deviations}
	
	We now evaluate the one-sided lattice contribution $T_{n,d}$ in
	\eqref{eq:one-sided-lattice-term}. Under the tilted measure
	$\Pr\nolimits_{n,d}$, the threshold $\kappa_n(d)$ coincides exactly with the
	mean of $\zeta_n$, while its variance is of order $n^3$. We first
	establish a lattice local central limit approximation that holds
	uniformly over the family of tilted distributions generated by
	$d\in\mathcal D$.
	
	The required approximation is summarized in the following
	lemma. Its proof follows the standard Fourier-inversion
	approach to lattice local central limit theorems for triangular
	arrays, based on a uniform Lyapunov-type moment bound and
	characteristic function decay away from the origin; see, e.g.,
	\cite{petrov2012sums}.
	
	\begin{lemma}\label{lem:uniform-lattice-lclt}
		The lattice probabilities of $\zeta_n$ under the tilted measures
		$\{\Pr\nolimits_{n,d}:d\in\mathcal D\}$ admit the following Gaussian
		approximation uniformly over $d\in\mathcal D$ and the lattice
		points $k\in\mathbb Z$:
		\begin{align}
			\lim_{n\to\infty}
			\sup_{d\in\mathcal D}
			\sup_{k\in\mathbb Z}
			\left|
			\sigma_{n,d}
			\Pr\nolimits_{n,d}[\zeta_n=k]
			-
			\phi\left(
			\frac{k-\kappa_n(d)}
			{\sigma_{n,d}}
			\right)
			\right|
			=
			0,
			\label{eq:uniform-lattice-lclt}
		\end{align}
		where $\phi(x)\triangleq\frac{1}{\sqrt{2\pi}}e^{-x^2/2}$ denotes the standard Gaussian probability density function.
	\end{lemma}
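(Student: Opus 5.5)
Since $\zeta_n$ is integer-valued, we use Fourier inversion on the lattice. Write $\psi_{n,d}(t)\triangleq\mathbb E_{n,d}[e^{\mathrm i t(\zeta_n-\kappa_n(d))}]$, so that for every $k\in\mathbb Z$,
\begin{align}
\sigma_{n,d}\Pr\nolimits_{n,d}[\zeta_n=k]
=
\frac{1}{2\pi}\int_{-\pi\sigma_{n,d}}^{\pi\sigma_{n,d}}
e^{-\mathrm i u (k-\kappa_n(d))/\sigma_{n,d}}\,
\psi_{n,d}(u/\sigma_{n,d})\,\mathrm du .
\end{align}
Also $\phi(x)=\frac{1}{2\pi}\int_{\mathbb R}e^{-\mathrm i ux}e^{-u^2/2}\mathrm du$. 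Hence the quantity in \eqref{eq:uniform-lattice-lclt} is bounded, uniformly in $k$, by $\frac{1}{2\pi}$ times the sum of three integrals:
\begin{align}
J_1&\triangleq\int_{|u|\le A}\bigl|\psi_{n,d}(u/\sigma_{n,d})-e^{-u^2/2}\bigr|\mathrm du,\\
J_2&\triangleq\int_{A<|u|\le \delta\sigma_{n,d}}\bigl|\psi_{n,d}(u/\sigma_{n,d})\bigr|\mathrm du,\\
J_3&\triangleq\int_{\delta\sigma_{n,d}<|u|\le\pi\sigma_{n,d}}\bigl|\psi_{n,d}(u/\sigma_{n,d})\bigr|\mathrm du,
\end{align}
plus the Gaussian tail $\int_{|u|>A}e^{-u^2/2}\mathrm du$. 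The Gaussian tail and $J_2$ are made small by taking $A$ large, $J_1$ vanishes as $n\to\infty$ for fixed $A$, and $J_3$ vanishes for a suitably chosen fixed $\delta>0$. Every estimate must hold uniformly in $d\in\mathcal D$.

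The summands $\xi_i\triangleq i(\mathsf B_i-p_i)$, with tilted parameters $p_i$ from \eqref{eq:tilted-Bernoulli-parameter}, are independent. By Lemma~\ref{lem:finite-saddlepoint}, $\theta_{n,d}$ lies in a fixed compact subset $[-\Theta,-\vartheta]$ of $(-\infty,0)$, so $p_i\in[c,1-c]$ for some $c>0$ that does not depend on $i,n,d$. Consequently $\sum_i\mathbb E|\xi_i|^3\le\sum_i i^3\le n^4$. Combined with \eqref{eq:tilted-variance-order}, the Lyapunov ratio is at most $n^4/\sigma_{n,d}^3\le C_0^{-3/2}n^{-1/2}$, uniformly in $d$.

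For $J_1$, the standard characteristic-function expansion gives $|\psi_{n,d}(u/\sigma)-e^{-u^2/2}|\le C|u|^3 n^{-1/2}e^{-u^2/4}$ for $|u|\le c'\sqrt n$ (Petrov's lemma for independent summands), so $J_1\to0$ uniformly in $d$. The same bound also yields $|\psi_{n,d}(u/\sigma)|\le e^{-u^2/4}$ on $|u|\le\delta\sigma_{n,d}$ once $\delta$ is small enough, because there $|u|\le\delta\sqrt{C_1}n^{3/2}$ and the region $c'\sqrt n\le|u|\le\delta\sigma$ can be treated directly. For that direct treatment use $|\psi|=\prod_i|1-p_i+p_ie^{\mathrm i ti}|$ together with
\begin{align}
|1-p+pe^{\mathrm i x}|^2=1-2p(1-p)(1-\cos x)\le\exp\bigl(-2c(1-c)(1-\cos x)\bigr).
\end{align}
This gives $|\psi_{n,d}(t)|\le\exp\bigl(-2c(1-c)\sum_{i=1}^n(1-\cos(ti))\bigr)$. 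Note that $2c(1-c)\ge 4p(1-p)\cdot\tfrac{1}{2}\cdot\tfrac{c}{1}$ is not the point here; what matters is that the constant $2c(1-c)>0$ is uniform in $d$. For $|t|\le\delta/n$ we have $1-\cos(ti)\ge c''t^2i^2$, so the sum is at least $c''t^2n^3/3$. This is Gaussian decay in $u=t\sigma$ and controls $J_2$, giving a bound of order $e^{-cA^2}$.

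\textbf{Main obstacle: $J_3$.} Here $\delta/n\le|t|\le\pi/\sqrt{C_0n^3}\cdot\sigma_{n,d}/\sqrt{n^3}\cdot$, i.e.\ $t$ ranges over $\delta/n\le|t|\le\pi$, and we need $\sum_{i=1}^n(1-\cos(ti))\ge c_\delta n$ uniformly on this range. The tool is the Fejér-kernel identity
\begin{align}
\sum_{i=1}^n\cos(ti)=\frac{\sin(nt/2)\cos((n+1)t/2)}{\sin(t/2)},
\end{align}
which has absolute value at most $1/|\sin(t/2)|\le\pi/|t|$. For $|t|\ge\delta/n$ this is at most $\pi n/\delta$, which is not yet small enough. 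The remedy is to split further. If $|t|\ge K/n$ with $K$ large, the bound gives $\sum_i\cos(ti)\le \pi n/K\le n/2$, so $\sum_i(1-\cos(ti))\ge n/2$. On the compact range $\delta\le n|t|\le K$, a Riemann-sum argument gives $\frac1n\sum_i(1-\cos(ti))\to\int_0^1(1-\cos(sx))\mathrm dx$ with $s=nt$, and this integral has a strictly positive minimum over $s\in[\delta,K]$. In both cases $|\psi_{n,d}(t)|\le e^{-c'''n}$, hence $J_3\le 2\pi\sigma_{n,d}e^{-c'''n}\to0$ uniformly in $d$. Letting $n\to\infty$ and then $A\to\infty$ completes the argument.
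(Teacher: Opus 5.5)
Your proof is correct and follows essentially the same route as the paper: lattice Fourier inversion, a compact-frequency characteristic-function comparison (you cite Petrov's lemma where the paper uses an explicit telescoping bound), and the same three-regime lower bound on $\sum_i(1-\cos(ti))$, obtained from the quadratic bound, a Riemann-sum argument, and the closed-form trigonometric sum. The slips are cosmetic. The $J_2$/$J_3$ boundary must be $|u|=\delta\sigma_{n,d}/n$ (i.e.\ $|t|=\delta/n$), because Gaussian decay in $u$ fails near $|u|=\delta\sigma_{n,d}$; your $J_3$ estimate already covers all of $\delta/n\le|t|\le\pi$, so nothing is lost. Also, taking the square root of the modulus bound gives the constant $c(1-c)$ rather than $2c(1-c)$, which does not affect the argument.
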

	\begin{IEEEproof}
		See Appendix~\ref{app:uniform-lattice-lclt}.
	\end{IEEEproof}
	
	With the uniform local Gaussian approximation established, we now
	evaluate the one-sided lattice contribution in
	\eqref{eq:one-sided-lattice-term}. The resulting asymptotic expansion is stated in the following lemma.
	
	\begin{lemma}\label{lem:one-sided-lattice-contribution}
		The one-sided lattice contribution admits the uniform asymptotic expansion
		\begin{align}
			T_{n,d}
			=
			\frac{
				1
			}{
				|\theta_d|
				\sqrt{2\pi nK''(\theta_d)}
			}\bigl(1+\rho_n(d)\bigr),
			\qquad
			d\in\mathcal D,
			\label{eq:one-sided-lattice-contribution-expansion}
		\end{align}
		where $\sup_{d\in\mathcal D}|\rho_n(d)|\to0$ as $n\to\infty$.
	\end{lemma}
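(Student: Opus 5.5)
Write $\beta\triangleq|\theta_{n,d}|/n>0$. By Lemma~\ref{lem:finite-saddlepoint} the values $\theta_{n,d}$ lie in a fixed compact subset of $(-\infty,0)$, so $\beta\asymp n^{-1}$ uniformly over $d\in\mathcal D$. The terms of $T_{n,d}$ are $e^{-\beta s}\Pr_{n,d}[\zeta_n=\kappa_n(d)-s]$, so $T_{n,d}$ is a geometrically damped sum of the lattice point masses just to the left of the tilted mean. The damping scale is $1/\beta\asymp n$, while by \eqref{eq:tilted-variance-order} $\sigma_{n,d}\asymp n^{3/2}$. Hence the relevant window of $s$ (of order $n$) is negligible relative to $\sigma_{n,d}$. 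On that window the local Gaussian density is essentially flat at $\phi(0)/\sigma_{n,d}$, and the sum behaves like $\phi(0)\sigma_{n,d}^{-1}\sum_{s\ge0}e^{-\beta s}=\phi(0)\sigma_{n,d}^{-1}(1-e^{-\beta})^{-1}$.

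\textbf{Step 1: main window.} Fix a cutoff $S_n\triangleq n\,r_n$ with $r_n\to\infty$ slowly, chosen so that $S_n/\sigma_{n,d}\to0$ uniformly; $r_n=\ln n$ works. For $0\le s\le S_n$, apply Lemma~\ref{lem:uniform-lattice-lclt}:
\begin{align}
\sigma_{n,d}\Pr\nolimits_{n,d}[\zeta_n=\kappa_n(d)-s]=\phi\bigl(s/\sigma_{n,d}\bigr)+\eta_n(d,s),
\end{align}
with $\sup_{d,s}|\eta_n|\to0$. Since $s/\sigma_{n,d}\le S_n/\sigma_{n,d}\to0$, we have $\phi(s/\sigma_{n,d})=\phi(0)(1+o(1))$ uniformly. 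Therefore
\begin{align}
\sigma_{n,d}\sum_{s=0}^{S_n}e^{-\beta s}\Pr\nolimits_{n,d}[\cdot]=(\phi(0)+o(1))\sum_{s=0}^{S_n}e^{-\beta s}.
\end{align}
Because $\beta S_n=|\theta_{n,d}|r_n\to\infty$, this geometric sum equals $(1-e^{-\beta})^{-1}(1+o(1))=\frac{n}{|\theta_{n,d}|}(1+O(n^{-1}))(1+o(1))$.

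\textbf{Step 2: tail.} For $s>S_n$, bound $\Pr_{n,d}[\zeta_n=k]\le(\phi(0)+o(1))/\sigma_{n,d}$ uniformly in $k$; this again follows from Lemma~\ref{lem:uniform-lattice-lclt}. The tail then contributes at most $O(\sigma_{n,d}^{-1})\,e^{-\beta S_n}/(1-e^{-\beta})$. Relative to the main term this is $O(e^{-\beta S_n})=O(e^{-c r_n})\to0$ uniformly. The finite upper limit $\kappa_n(d)$ in \eqref{eq:one-sided-lattice-term} causes no trouble: $\kappa_n(d)\ge\omega_1 n^2-1\gg S_n$, and truncating at $\kappa_n(d)$ only removes nonnegative terms, which the upper bound already covers.

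\textbf{Step 3: assembly.} Combining the two steps,
\begin{align}
T_{n,d}=\frac{\phi(0)\,n}{|\theta_{n,d}|\sigma_{n,d}}(1+o(1)).
\end{align}
Substitute $\sigma_{n,d}=n^{3/2}\sqrt{K''(\theta_d)}(1+O(n^{-1}))$ from \eqref{eq:tilted-variance-expansion}, $\theta_{n,d}=\theta_d+O(n^{-1})$ from \eqref{eq:theta-nd-expansion}, and $\phi(0)=(2\pi)^{-1/2}$. This gives $T_{n,d}=\bigl(|\theta_d|\sqrt{2\pi nK''(\theta_d)}\bigr)^{-1}(1+\rho_n(d))$. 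Every error term is controlled by suprema over $d\in\mathcal D$, using $\inf_{d}K''(\theta_d)>0$ and the fact that $|\theta_d|$ is bounded away from $0$ on $\mathcal D$. Hence $\sup_d|\rho_n(d)|\to0$.

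\textbf{Main obstacle.} The delicate point is that Lemma~\ref{lem:uniform-lattice-lclt} gives only an additive $o(1)$ error on the scale $\sigma_{n,d}\Pr$, while we sum about $n r_n$ terms. The error is harmless only because it is compared against the main term $\phi(0)$ on the same scale, weighted by the same geometric factors. This requires the window to be the damping scale $n$, much smaller than $\sigma_{n,d}$, and requires the weights $e^{-\beta s}$ to be summable, with total mass $\asymp n$. Written as a ratio, the error is $o(1)\cdot n/\sigma_{n,d}$ against $\phi(0)\cdot n/\sigma_{n,d}$. I will check this carefully and keep $r_n$ growing slowly enough that $S_n=o(n^{3/2})$.
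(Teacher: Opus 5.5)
Your proposal is correct and follows essentially the paper's route: the uniform LCLT error is summed against geometric weights of total mass $O(n)$, $\phi(s/\sigma_{n,d})$ is replaced by $\phi(0)$, the geometric sum is evaluated explicitly, and the expansions for $\sigma_{n,d}$ and $\theta_{n,d}$ are substituted. The only difference is cosmetic. The paper skips your window/tail split by applying the LCLT over the full range $0\le s\le\kappa_n(d)$ and using the Lipschitz bound $|\phi(s/\sigma_{n,d})-\phi(0)|\le Cs/\sigma_{n,d}$, whose weighted sum $\sigma_{n,d}^{-2}\sum_{s} s e^{bs/n}=O(n^{-1})$ is already negligible against the main term of order $n^{-1/2}$, so no cutoff $S_n$ is needed.
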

	\begin{IEEEproof}
		See Appendix~\ref{app:one-sided-lattice-contribution}.
	\end{IEEEproof}
	
	Combining the exact change-of-measure decomposition in
	\eqref{eq:exact-tilting-decomposition}, the finite-blocklength
	rate-function expansion in \eqref{eq:finite-rate-function-expansion}, and the uniform asymptotics of the one-sided lattice contribution in
	Lemma~\ref{lem:one-sided-lattice-contribution}, we obtain the desired uniform strong large-deviation expansion.
	
	\begin{theorem}[Uniform strong large-deviation expansion]
		\label{thm:uniform-strong-large-deviation}
		For the compact interval
		$\mathcal D=[\omega_1,\omega_2]\subset(0,1/4)$, define
		\begin{align}
			A(d)
			\triangleq
			\sqrt{
				\frac{
					1+e^{\theta_d}
				}{
					4\pi K''(\theta_d)\theta_d^2
				}
			},
			\qquad
			d\in\mathcal D.
			\label{eq:strong-large-deviation-prefactor}
		\end{align}
		Then there exists a sequence of remainder functions
		$\varrho_n(\cdot):\mathcal D\to\mathbb R$ satisfying
		$\sup_{d\in\mathcal D}|\varrho_n(d)|\to0$ as $n\to\infty$, such
		that
		\begin{align}
			\Pr[\mathsf H_n\le d]
			=
			\frac{A(d)}{\sqrt n}
			e^{-nI(d)}
			\bigl(1+\varrho_n(d)\bigr),
			\qquad
			d\in\mathcal D.
			\label{eq:uniform-strong-large-deviation}
		\end{align}
	\end{theorem}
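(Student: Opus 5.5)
The theorem follows by substituting Lemma~\ref{lem:finite-saddlepoint} and Lemma~\ref{lem:one-sided-lattice-contribution} into the exact decomposition \eqref{eq:exact-tilting-decomposition}, $\Pr[\mathsf H_n\le d]=e^{-nI_n(d_n)}T_{n,d}$, and then simplifying the product of prefactors. No new probabilistic estimate is needed. The work is checking that all remainders are uniform over $d\in\mathcal D$ and that the Euler-Maclaurin correction $G(\theta_d)$ combines with the lattice prefactor to give exactly $A(d)$.

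\textbf{Step 1: the exponential factor.} By \eqref{eq:finite-rate-function-expansion},
\begin{align}
e^{-nI_n(d_n)}=e^{-nI(d)}\,e^{G(\theta_d)}\,e^{O(n^{-1})},
\end{align}
uniformly over $d\in\mathcal D$. From \eqref{eq:finite-rate-function-and-correction},
\begin{align}
e^{G(\theta_d)}=\sqrt{\frac{1+e^{\theta_d}}{2}}.
\end{align}

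\textbf{Step 2: the lattice factor.} Lemma~\ref{lem:one-sided-lattice-contribution} gives
\begin{align}
T_{n,d}=\frac{1}{|\theta_d|\sqrt{2\pi nK''(\theta_d)}}\bigl(1+\rho_n(d)\bigr),
\end{align}
with $\sup_{d\in\mathcal D}|\rho_n(d)|\to0$.

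\textbf{Step 3: combine the prefactors.} Multiplying, the deterministic part is
\begin{align}
\sqrt{\frac{1+e^{\theta_d}}{2}}\cdot\frac{1}{|\theta_d|\sqrt{2\pi nK''(\theta_d)}}
=\frac{1}{\sqrt n}\sqrt{\frac{1+e^{\theta_d}}{4\pi K''(\theta_d)\theta_d^2}}
=\frac{A(d)}{\sqrt n}.
\end{align}
Define $1+\varrho_n(d)\triangleq e^{O(n^{-1})}\bigl(1+\rho_n(d)\bigr)$.

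\textbf{Step 4: uniformity of the remainder.} The $O(n^{-1})$ term in \eqref{eq:finite-rate-function-expansion} is bounded by $c/n$ with $c$ independent of $d\in\mathcal D$. Hence $|e^{O(n^{-1})}-1|\le c'/n$ uniformly. Together with the uniform vanishing of $\rho_n$, this gives
\begin{align}
\sup_{d\in\mathcal D}|\varrho_n(d)|\le c'/n+(1+c'/n)\sup_{d\in\mathcal D}|\rho_n(d)|\to0.
\end{align}

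Both lemmas hold only for all sufficiently large $n$. For the finitely many small $n$, define $\varrho_n(d)$ directly by solving \eqref{eq:uniform-strong-large-deviation} for it. This is well defined because $A(d)>0$ and $e^{-nI(d)}>0$, and it does not affect the limit.

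\textbf{Main obstacle.} The only delicate point is making sure the $e^{G(\theta_d)}$ correction is not lost. It arises because $K_n$ is a right-endpoint Riemann sum of $K$ with boundary term $\tfrac{1}{2n}\bigl(\ln\tfrac{1+e^{\theta}}{2}-\ln 1\bigr)$. This correction is what produces the factor $(1+e^{\theta_d})$ in $A(d)$. Since Lemma~\ref{lem:finite-saddlepoint} already supplies it with a uniform remainder, the argument reduces to the bookkeeping above.
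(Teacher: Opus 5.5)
Your proposal is correct and follows the paper's proof: you substitute the uniform rate-function expansion \eqref{eq:finite-rate-function-expansion}, with $e^{G(\theta_d)}=\sqrt{(1+e^{\theta_d})/2}$, and Lemma~\ref{lem:one-sided-lattice-contribution} into the exact decomposition \eqref{eq:exact-tilting-decomposition}, and the product of prefactors is $A(d)/\sqrt n$. Your explicit uniform bound on the combined remainder, and your treatment of the finitely many small $n$ where the lemmas do not yet apply, are slightly more careful than the paper's one-line combination, but the argument is the same.
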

	\begin{IEEEproof}
		The finite-blocklength rate-function expansion in
		\eqref{eq:finite-rate-function-expansion} implies, uniformly over
		$d\in\mathcal D$, that
		\begin{align}
			e^{-nI_n(d_n)}
			=
			e^{-nI(d)}
			e^{G(\theta_d)}
			\bigl(1+O(n^{-1})\bigr).
			\label{eq:finite-exponential-factor-expansion}
		\end{align}
		The definition of $G(\cdot)$ in \eqref{eq:finite-rate-function-and-correction} further gives
		\begin{align}
			e^{G(\theta_d)}
			=
			\sqrt{
				\frac{1+e^{\theta_d}}{2}
			}.
			\label{eq:boundary-correction-exponential}
		\end{align}
		Combining these relations with the 
		decomposition in \eqref{eq:exact-tilting-decomposition} and the
		uniform expansion in
		Lemma~\ref{lem:one-sided-lattice-contribution}, we obtain
		\begin{align}
			\Pr[\mathsf H_n\le d]
			=
			\frac{e^{-nI(d)}}{\sqrt n}
			\sqrt{
				\frac{
					1+e^{\theta_d}
				}{
					4\pi K''(\theta_d)\theta_d^2
				}
			}
			\bigl(1+\varrho_n(d)\bigr),
		\end{align}
		where the combined remainder satisfies
		$\sup_{d\in\mathcal D}|\varrho_n(d)|\to0$ as $n\to\infty$. This completes the proof.
	\end{IEEEproof}
	
	The uniformity of the expansion over $\mathcal D$ will be
	essential in the next section, where it is combined with the
	Berry-Esseen approximation for the transmitted-codeword decoding metric to derive a Gaussian approximation of $\mathrm{RCU}_{\mathrm{ORB}}(n,M)$. 
	
	\section{Achievable Rate Expansion for ORBGRAND}
	\label{sec:second_third_order}
	
	In this section, we characterize the finite-blocklength
	achievable rate of ORBGRAND.
	We first employ a threshold relaxation of $\mathrm{RCU}_{\mathrm{ORB}}(n,M)$
	to obtain a concise second-order achievable rate result. Threshold-based random-coding arguments are discussed in
	\cite{strassen1962asymptotische,polyanskiy2010channel}, and a
	closely related threshold relaxation for mismatched decoding
	appears in \cite{scarlett2013cost}. We next carry out a direct analysis of $\mathrm{RCU}_{\mathrm{ORB}}(n,M)$. This refined analysis retains the logarithmic correction arising from the subexponential prefactor in the strong large-deviation expansion and yields the third-order achievable rate expansion.
	
	For $\epsilon\in(0,1)$, let $M_{\mathrm{ORB}}^\star(n,\epsilon)$ denote the largest integer $M$ for which there exists an $(n,M)$ code whose average error probability under ORBGRAND does not exceed $\epsilon$. The corresponding achievable rate is defined as
	\begin{align}
		R_{\mathrm{ORB}}^{\star}(n,\epsilon)
		\triangleq
		\frac{1}{n}
		\ln M_{\mathrm{ORB}}^{\star}(n,\epsilon).
		\label{eq:orb-optimal-achievable-rate}
	\end{align}
	
	We further define $V_{\mathrm{ORB}}\triangleq\theta_{\mu}^{2}\sigma^{2}$, where $\theta_{\mu}<0$ is the unique solution to
	$K'(\theta_{\mu})=\mu$, and $\sigma^{2}$ is the
	variance defined in \eqref{eq:metric-variance-parameter}.
	
	\subsection{A Threshold-Based Second-Order Achievable Rate Expansion}
	\label{subsec:threshold-second-order}
	
	We first derive a second-order achievable rate result through
	a threshold relaxation of $\mathrm{RCU}_{\mathrm{ORB}}(n,M)$. This approach separates the error event associated with the transmitted
	codeword from that associated with competing codewords,
	leading to a concise derivation of the first- and second-order
	terms.
	
	For each $n$ and any deterministic threshold
	$b_n\in\mathbb{R}$, the ORB-RCU bound \eqref{eq:rcu_n_M} satisfies
	\begin{align}
		\operatorname{RCU}_{\mathrm{ORB}}(n,M)
		\le{}&
		\Pr\left[
		\mathsf{D}\left(
		\underline{\mathsf X},
		\underline{\mathsf Y}
		\right)
		\ge b_n
		\right]
		+
		(M-1)
		\Pr\left[
		\mathsf{D}(
		\underline{\hat{\mathsf X}},
		\underline{\mathsf Y})
		<b_n
		\right].
		\label{eq:threshold-based-orb-rcu}
	\end{align}
	This threshold decomposition yields the following second-order achievable rate expansion.
	
	\begin{theorem}[Threshold-based second-order achievable rate expansion]
		\label{thm:threshold-second-order-rate}
		For every fixed $\epsilon\in(0,1)$, the ORBGRAND
		achievable rate satisfies
		\begin{align}
			R_{\mathrm{ORB}}^{\star}(n,\epsilon)
			\ge
			I(\mu)
			-
			\sqrt{\frac{V_{\mathrm{ORB}}}{n}}\,
			Q^{-1}(\epsilon)
			+
			O\left(\frac{1}{n}\right)
			\label{eq:threshold-second-order-achievable-rate}
		\end{align}
		for all sufficiently large $n$.
	\end{theorem}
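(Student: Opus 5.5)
We work directly from the threshold relaxation \eqref{eq:threshold-based-orb-rcu}, taking $M=\lceil e^{nR}\rceil$ and a threshold of the form
\begin{align}
b_n=\mu-\frac{\sigma}{\sqrt n}\,Q^{-1}\!\Bigl(\epsilon-\frac{c}{\sqrt n}\Bigr),
\end{align}
with a constant $c>0$ to be fixed later. Since $b_n\to\mu$, for all sufficiently large $n$ we have $b_n\in\mathcal D$, and the two uniform results (Theorems~\ref{thm:gauss of full D} and \ref{thm:uniform-strong-large-deviation}) apply. Once the right-hand side of \eqref{eq:threshold-based-orb-rcu} is shown to be at most $\epsilon$, Theorem~\ref{ORB RCU} yields a codebook with average error at most $\epsilon$. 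Hence $R^\star_{\mathrm{ORB}}(n,\epsilon)\ge \frac1n\ln M$, up to an $O(e^{-nR}/n)$ correction coming from the ceiling.

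\emph{First term.} By Theorem~\ref{thm:gauss of full D},
\begin{align}
\Pr[\mathsf D(\underline{\mathsf X},\underline{\mathsf Y})\ge b_n]\le Q\Bigl(\tfrac{\sqrt n(b_n-\mu)}{\sigma}\Bigr)+\frac{C}{\sqrt n}=\epsilon-\frac{c}{\sqrt n}+\frac{C}{\sqrt n}.
\end{align}
Choosing $c=2C$ makes this at most $\epsilon-\frac{C}{\sqrt n}$, which leaves a slack of $\frac{C}{\sqrt n}$ for the second term.

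\emph{Second term.} The probability $\Pr[\mathsf D(\underline{\hat{\mathsf X}},\underline{\mathsf Y})<b_n]$ is bounded by $\Pr[\mathsf H_n\le b_n]$. Theorem~\ref{thm:uniform-strong-large-deviation} gives
\begin{align}
\Pr[\mathsf H_n\le b_n]\le \frac{2\bar A}{\sqrt n}e^{-nI(b_n)},\qquad \bar A\triangleq\sup_{\mathcal D}A<\infty,
\end{align}
for large $n$. It therefore suffices to impose
\begin{align}
e^{nR}\cdot\frac{2\bar A}{\sqrt n}e^{-nI(b_n)}\le \frac{C}{\sqrt n},
\end{align}
that is, $nR=nI(b_n)-\ln(2\bar A/C)$; the extra $\sqrt n$ factors are simply discarded at this stage. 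Next, Taylor-expand $I$ around $\mu$. On $(0,1/4)$ we have $I'(d)=\theta_d$ and $I''(d)=1/K''(\theta_d)$, which is bounded on $\mathcal D$, so
\begin{align}
I(b_n)=I(\mu)+\theta_\mu(b_n-\mu)+O\bigl((b_n-\mu)^2\bigr).
\end{align}
Here $b_n-\mu=O(n^{-1/2})$, so the quadratic remainder is $O(1/n)$. Because $\theta_\mu<0$, the linear term equals
\begin{align}
\theta_\mu(b_n-\mu)=-|\theta_\mu|\frac{\sigma}{\sqrt n}Q^{-1}\!\Bigl(\epsilon-\frac{c}{\sqrt n}\Bigr).
\end{align}
A further Taylor step, $Q^{-1}(\epsilon-c/\sqrt n)=Q^{-1}(\epsilon)+O(n^{-1/2})$, which is valid because $Q^{-1}$ is smooth at $\epsilon\in(0,1)$, turns this into $-\sqrt{V_{\mathrm{ORB}}/n}\,Q^{-1}(\epsilon)+O(1/n)$, using $V_{\mathrm{ORB}}=\theta_\mu^2\sigma^2$. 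Collecting terms gives $R\ge I(\mu)-\sqrt{V_{\mathrm{ORB}}/n}\,Q^{-1}(\epsilon)+O(1/n)$. Integer rounding of $M$ costs only $O(1/n)$, since $\ln\lfloor e^{nR}\rfloor\ge nR-1$.

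\emph{Expected obstacle.} The delicate part is bookkeeping rather than a new estimate. We must check that the two uniform results can be applied at the moving point $b_n$; this follows from the uniformity over $\mathcal D$ and $b_n\in\mathcal D$ eventually. We must also justify the derivative identities for $I$. These come from $I(d)=\theta_d d-K(\theta_d)$ together with the implicit function theorem, using $K''>0$, which yields $\theta_d$ continuously differentiable with $\mathrm d\theta_d/\mathrm dd=1/K''(\theta_d)$. Finally, the constants must be arranged so that every loss is $O(1/n)$ in rate: the Berry–Esseen slack $C/\sqrt n$ only shifts $Q^{-1}$ by $O(n^{-1/2})$ inside a term already scaled by $n^{-1/2}$.
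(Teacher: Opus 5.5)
Your plan follows the paper's proof essentially step for step. It uses the same threshold relaxation and builds a Berry--Esseen slack $c>C$ into $Q^{-1}(\epsilon-c/\sqrt n)$. It applies the uniform strong large-deviation bound at the moving point $b_n\in\mathcal D$, where the $n^{-1/2}$ prefactor cancels against the slack. It then Taylor-expands $I$ at $\mu$ using $I'(\mu)=\theta_\mu$.

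There is, however, a sign error in your threshold that breaks the first step as written. With $b_n=\mu-\frac{\sigma}{\sqrt n}Q^{-1}(\epsilon-\frac{c}{\sqrt n})$ you get $\frac{\sqrt n(b_n-\mu)}{\sigma}=-Q^{-1}(\epsilon-\frac{c}{\sqrt n})$. Hence $Q\bigl(\frac{\sqrt n(b_n-\mu)}{\sigma}\bigr)=1-\epsilon+\frac{c}{\sqrt n}$ rather than $\epsilon-\frac{c}{\sqrt n}$. By the two-sided Berry--Esseen bound, the transmitted-codeword term is then about $1-\epsilon$, which exceeds $\epsilon$ whenever $\epsilon<1/2$. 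Likewise, since $\theta_\mu<0$, the linear Taylor term would be $\theta_\mu(b_n-\mu)=+|\theta_\mu|\frac{\sigma}{\sqrt n}Q^{-1}(\epsilon-\frac{c}{\sqrt n})$, the opposite of what you wrote. For $\epsilon>1/2$ this would give only the weaker rate $I(\mu)+\sqrt{V_{\mathrm{ORB}}/n}\,Q^{-1}(\epsilon)$.

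Both of your later displays are exactly the ones valid for $b_n=\mu+\frac{\sigma}{\sqrt n}Q^{-1}(\epsilon-\frac{c}{\sqrt n})$, which is the paper's choice. For $\epsilon<1/2$ the threshold must sit above $\mu$, so that the transmitted metric exceeds it only with probability about $\epsilon$. Because $I'(d)=\theta_d<0$ on $(0,1/4)$, this upward shift is precisely what produces the backoff $-\sqrt{V_{\mathrm{ORB}}/n}\,Q^{-1}(\epsilon)$. With that single sign flipped, your argument is complete and coincides with the paper's.

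Your implicit-function-theorem justification of $I''(d)=1/K''(\theta_d)$ fills in a step the paper leaves tacit. The rounding remarks are unnecessary: $M=\lceil e^{nR}\rceil$ already gives $M-1<e^{nR}$ and $\ln M\ge nR$. The later reference to $\lfloor e^{nR}\rfloor$ is inconsistent with that choice, though harmless.
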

	
	\begin{IEEEproof}
		Let $0<C_0<\infty$ be the constant in
		Theorem~\ref{thm:gauss of full D}, and fix any $C_0<C_1<\infty$.
		For all sufficiently large $n$, we have
		$\epsilon-C_1/\sqrt n\in(0,1)$. Choose
		\begin{align}
			b_n
			=
			\mu
			+
			\frac{\sigma}{\sqrt n}
			Q^{-1}\left(
			\epsilon-\frac{C_1}{\sqrt n}
			\right).
			\label{eq:threshold-choice}
		\end{align}
		The Berry-Esseen bound in
		Theorem~\ref{thm:gauss of full D} yields
		\begin{align}
			\Pr\left[
			\mathsf{D}\left(
			\underline{\mathsf X},
			\underline{\mathsf Y}
			\right)
			\ge b_n
			\right]
			\le
			\epsilon
			-
			\frac{C_1-C_0}{\sqrt n}.
			\label{eq:threshold-transmitted-term}
		\end{align}
		
		Moreover, $b_n\to\mu$, and hence
		$b_n\in\mathcal D$ for all sufficiently large $n$.
		Using the uniform strong large-deviation expansion in
		Theorem~\ref{thm:uniform-strong-large-deviation}, we obtain
		\begin{align}
			&\Pr\left[
			\mathsf{D}(
			\underline{\hat{\mathsf X}},
			\underline{\mathsf Y})
			<b_n
			\right]
			\le
			F_{\zeta_n}(n^2 b_n)
			=
			\frac{A(b_n)}{\sqrt n}
			e^{-nI(b_n)}
			\left(
			1+\varrho_n(b_n)
			\right).
			\label{eq:threshold-competing-sld}
		\end{align}
		Since $A(\cdot)$ is bounded on $\mathcal D$ and $\sup_{d\in\mathcal D}|\varrho_n(d)|\longrightarrow 0$, there exists a constant $0<C<\infty$ such that
		\begin{align}
			\Pr\left[
			\mathsf{D}(
			\underline{\hat{\mathsf X}},
			\underline{\mathsf Y})
			<b_n
			\right]
			\le
			\frac{C}{\sqrt n}
			e^{-nI(b_n)}
			\label{eq:threshold-competing-upper-bound}
		\end{align}
		for all sufficiently large $n$.
		
		We choose the codebook size $M_n$ as
		\begin{align}
			M_n
			=
			1+
			\left\lfloor
			\frac{C_1-C_0}{C}
			e^{nI(b_n)}
			\right\rfloor.
			\label{eq:threshold-codebook-choice}
		\end{align}
		It follows from
		\eqref{eq:threshold-competing-upper-bound} that
		\begin{align}
			&(M_n-1)
			\Pr\left[
			\mathsf{D}(
			\underline{\hat{\mathsf X}},
			\underline{\mathsf Y})
			<b_n
			\right]
			\le
			\frac{C_1-C_0}{\sqrt n}.
		\end{align}
		Combining this inequality with
		\eqref{eq:threshold-transmitted-term} and
		\eqref{eq:threshold-based-orb-rcu} gives $\operatorname{RCU}_{\mathrm{ORB}}(n,M_n)\le\epsilon$.
		Theorem~\ref{ORB RCU} therefore ensures that
		\begin{align}
			M_{\mathrm{ORB}}^{\star}(n,\epsilon)
			\ge M_n.
			\label{eq:threshold-Mstar-lower-bound}
		\end{align}
		
		The inequality $1+\lfloor x\rfloor\ge x$ for $x>0$
		further implies
		\begin{align}
			\ln M_n
			\ge
			nI(b_n)
			+
			\ln\frac{C_1-C_0}{C}.
			\label{eq:threshold-logM-lower-bound}
		\end{align}
		Since $Q^{-1}(\cdot)$ is continuously differentiable in a
		neighborhood of the fixed value $\epsilon$,
		\begin{align}
			b_n-\mu
			=
			\frac{\sigma}{\sqrt n}
			Q^{-1}(\epsilon)
			+
			O\left(\frac{1}{n}\right).
			\label{eq:threshold-beta-expansion}
		\end{align}
		Using $I'(\mu)=\theta_\mu$ and expanding $I(\cdot)$ around
		$\mu$, we obtain
		\begin{align}
			nI(b_n)
			&=
			nI(\mu)
			+
			\sqrt n\,\theta_\mu\sigma
			Q^{-1}(\epsilon)
			+
			O(1)
			\nonumber\\
			&=
			nI(\mu)
			-
			\sqrt{nV_{\mathrm{ORB}}}\,
			Q^{-1}(\epsilon)
			+
			O(1),
			\label{eq:threshold-rate-function-expansion}
		\end{align}
		where the second equality follows from
		$\theta_\mu<0$ and
		$V_{\mathrm{ORB}}=\theta_\mu^2\sigma^2$.
		
		Combining
		\eqref{eq:threshold-Mstar-lower-bound},
		\eqref{eq:threshold-logM-lower-bound}, and
		\eqref{eq:threshold-rate-function-expansion} yields
		\begin{align}
			\ln M_{\mathrm{ORB}}^{\star}(n,\epsilon)
			\ge
			nI(\mu)
			-
			\sqrt{nV_{\mathrm{ORB}}}\,
			Q^{-1}(\epsilon)
			+
			O(1).
		\end{align}
		Dividing both sides by $n$ proves
		\eqref{eq:threshold-second-order-achievable-rate}.
	\end{IEEEproof}
	
	The remainder $O(n^{-1})$ in
	Theorem~\ref{thm:threshold-second-order-rate} is a
	consequence of the threshold construction used in its proof.
	Specifically, the threshold relaxation separates the
	transmitted- and competing-codeword contributions and divides
	the available error probability budget between them. Although this decomposition is sufficient to establish the
	first- and second-order terms, it treats the transmitted- and
	competing-codeword contributions separately and therefore does
	not preserve their exact interaction in the ORB-RCU bound \eqref{eq:rcu_n_M}. As a result, obtaining a more refined expansion
	from the threshold relaxation bound becomes difficult. 
	
	Motivated by the tail-probability representation and direct
	saddlepoint analysis of the RCU bound developed in
	\cite{martinez2011saddlepoint} (see also
	\cite{scarlett2014saddlepoint,font2018saddlepoint,
		scarlett2014mismatched}), we next analyze the
	ORB-RCU bound \eqref{eq:rcu_n_M} without introducing an intermediate
	threshold relaxation. This direct treatment preserves the
	interaction between the transmitted- and competing-codeword
	terms, which is lost under the threshold relaxation. Although the underlying prefactor mechanism is conceptually similar to that arising for sums of independent single-letter decoding metrics, the existing analyses cannot be applied directly.
	By combining the Berry-Esseen approximation for the transmitted-codeword decoding metric in Theorem~\ref{thm:gauss of full D} with the uniform strong large-deviation expansion for the competing-codeword term in Theorem~\ref{thm:uniform-strong-large-deviation},  we first establish a Gaussian approximation directly for the ORB-RCU bound and then use it to derive the third-order achievable rate expansion presented below.
	
	\subsection{Gaussian Approximation of the ORB-RCU Bound}
	\label{subsec:orb-rcu-gaussian-approximation}
	
	Recall from \eqref{rcu n M} that
	\begin{align}
		\operatorname{RCU}_{\mathrm{ORB}}(n,M)
		=
		\Pr\Bigg[
		\ln(M-1)
		+
		\ln F_{\zeta_n}\!\left(
		n^2
		\mathsf D(
		\underline{\mathsf X},
		\underline{\mathsf Y}
		)
		\right)
		-
		\ln\mathsf U
		\ge 0
		\Bigg],
		\label{eq:orb-rcu-log-representation}
	\end{align}
	where $\mathsf U\sim\operatorname{Unif}(0,1)$ is independent of
	all other random variables.
	
	The dependence on the codebook size can be isolated through the
	following normalization. For each integer $M\ge2$, define the
	normalized codebook-size offset
	\begin{align}
		z_{n,M}
		\triangleq
		\frac{
			nI(\mu)
			-
			\ln(M-1)
			+
			\frac{1}{2}\ln n
		}{
			\sqrt n\,|\theta_\mu|\sigma
		},
		\label{eq:normalized-codebook-size-offset}
	\end{align}
	and the normalized random variable
	\begin{align}
		\mathsf S_n
		\triangleq
		\frac{
			nI(\mu)
			+
			\frac{1}{2}\ln n
			+
			\ln F_{\zeta_n}\!\left(
			n^2
			\mathsf D(
			\underline{\mathsf X},
			\underline{\mathsf Y}
			)
			\right)
			-
			\ln\mathsf U
		}{
			\sqrt n\,|\theta_\mu|\sigma
		}.
		\label{eq:normalized-orb-rcu-statistic}
	\end{align}
	
	Combining
	\eqref{eq:orb-rcu-log-representation},
	\eqref{eq:normalized-codebook-size-offset}, and
	\eqref{eq:normalized-orb-rcu-statistic} gives 
	\begin{align}
		\operatorname{RCU}_{\mathrm{ORB}}(n,M)
		=
		\Pr\left[
		\mathsf S_n\ge z_{n,M}
		\right].
		\label{eq:orb-rcu-tail-representation}
	\end{align}
	Note that $\mathsf S_n$ does not depend on the codebook size $M$, which enters only through the deterministic threshold $z_{n,M}$.
	
	It remains to characterize the distribution of $\mathsf S_n$.
	To relate its main random component to the Berry-Esseen
	approximation in Theorem~\ref{thm:gauss of full D}, define the
	deterministic transformation 
	\begin{align}
		\mathcal G_n(t)
		\triangleq
		\frac{
			nI(\mu)
			+
			\frac{1}{2}\ln n
			+
			\ln F_{\zeta_n}\!\left(
			n^2
			\left(
			\mu+\frac{\sigma t}{\sqrt n}
			\right)
			\right)
		}{
			\sqrt n\,|\theta_\mu|\sigma
		},
		\qquad t\in\mathbb R,
		\label{eq:orb-rcu-deterministic-transform}
	\end{align}
	where $\ln 0\triangleq-\infty$. Since
	$F_{\zeta_n}(\cdot)$ is non-decreasing, $\mathcal G_n(\cdot)$ is also non-decreasing. Substituting the normalized
	transmitted-codeword decoding metric into
	\eqref{eq:orb-rcu-deterministic-transform} yields the
	decomposition
	\begin{align}
		\mathsf S_n
		=
		\mathcal G_n\!\left(
		\frac{
			\sqrt n
			\left(
			\mathsf D(
			\underline{\mathsf X},
			\underline{\mathsf Y}
			)
			-
			\mu
			\right)
		}{
			\sigma
		}
		\right)
		-
		\frac{
			\ln\mathsf U
		}{
			\sqrt n\,|\theta_\mu|\sigma
		}.
		\label{eq:orb-rcu-statistic-decomposition}
	\end{align}
	Hence, apart from the independent randomization term involving
	$\mathsf U$, the statistic $\mathsf S_n$ is obtained by applying
	the monotone transformation $\mathcal G_n(\cdot)$ to the normalized transmitted-codeword decoding metric.
	
	The uniform strong large-deviation expansion in
	Theorem~\ref{thm:uniform-strong-large-deviation}, together with
	a second-order expansion of $I(\cdot)$ around $\mu$, shows that
	$\mathcal G_n(\cdot)$ is asymptotically close to the identity
	transformation on the relevant range.
	
	\begin{lemma}\label{lem:near-identity-transform}
		There exist constants $0<C<\infty$ and $n_0\in\mathbb N$ such
		that, for every $n\ge n_0$ and every $t\in\mathbb R$ satisfying
		$\mu+\sigma t/\sqrt n\in\mathcal D$,
		\begin{align}
			\left|
			\mathcal G_n(t)-t
			\right|
			\le
			\frac{C(1+t^2)}{\sqrt n}.
			\label{eq:near-identity-transform}
		\end{align}
	\end{lemma}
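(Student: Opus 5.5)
The plan is to substitute the uniform strong large-deviation expansion of Theorem~\ref{thm:uniform-strong-large-deviation} into the definition \eqref{eq:orb-rcu-deterministic-transform}, and then Taylor-expand the rate function $I(\cdot)$ around $\mu$. Fix $n\ge n_0$ and $t$ with $d\triangleq\mu+\sigma t/\sqrt n\in\mathcal D$. Since $F_{\zeta_n}(n^2d)=\Pr[\mathsf H_n\le d]$, Theorem~\ref{thm:uniform-strong-large-deviation} gives
\begin{align}
\ln F_{\zeta_n}(n^2 d)=-nI(d)-\tfrac12\ln n+\ln A(d)+\ln\bigl(1+\varrho_n(d)\bigr).
\end{align}
The $-\tfrac12\ln n$ term cancels exactly against the $+\tfrac12\ln n$ in $\mathcal G_n(t)$. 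Hence
\begin{align}
\mathcal G_n(t)=\frac{n\bigl(I(\mu)-I(d)\bigr)+\ln A(d)+\ln(1+\varrho_n(d))}{\sqrt n\,|\theta_\mu|\sigma}.
\end{align}
We choose $n_0$ so that $\sup_{\mathcal D}|\varrho_n|\le 1/2$. Then the last logarithm is bounded by $\ln 2$. Next, $A(\cdot)$ is continuous and strictly positive on the compact set $\mathcal D$, because $\theta_d$ is continuous in $d$ and stays in a compact subset of $(-\infty,0)$, and $K''>0$ there. Therefore $|\ln A(d)|$ is bounded uniformly on $\mathcal D$. Together these two terms contribute $O(1)/\sqrt n$, which is absorbed by $C(1+t^2)/\sqrt n$.

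For the main term, we use that $I$ is $C^2$ on $(0,1/4)$ with $I'(d)=\theta_d$ and $I''(d)=1/K''(\theta_d)$. This follows from the implicit function theorem applied to $K'(\theta_d)=d$. In particular, $I''$ is bounded on $\mathcal D$ by $1/\inf_{\mathcal D}K''(\theta_d)<\infty$. Since $\mathcal D$ is an interval containing both $\mu$ and $d$, Taylor's theorem with Lagrange remainder gives
\begin{align}
I(d)=I(\mu)+\theta_\mu\frac{\sigma t}{\sqrt n}+R,\qquad |R|\le \tfrac12\sup_{\mathcal D}|I''|\,\frac{\sigma^2t^2}{n}.
\end{align}
Because $\theta_\mu<0$, we have $n(I(\mu)-I(d))=\sqrt n\,|\theta_\mu|\sigma t-nR$. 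Dividing by $\sqrt n|\theta_\mu|\sigma$ yields $t$ plus an error of at most $\sup|I''|\sigma t^2/(2|\theta_\mu|\sqrt n)$. Combining this with the bounded terms above gives \eqref{eq:near-identity-transform}.

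The only step that needs care is the uniformity in $t$. The constant must not depend on $t$, even though $t$ may be as large as order $\sqrt n$ within the constraint $d\in\mathcal D$. This is handled because every ingredient is uniform over $d\in\mathcal D$: the remainder $\varrho_n$, the bounds on $\ln A$, and the bound on $I''$. In addition, the Taylor remainder is quadratic in $t$ and is controlled globally on $\mathcal D$, not merely near $\mu$. Note that we never need $t$ small; the $t^2$ factor in the bound absorbs large deviations of $d$ from $\mu$. The remaining point to verify is the regularity of $d\mapsto\theta_d$, namely that it is $C^1$ with derivative $1/K''(\theta_d)$, which is a routine implicit-function argument.
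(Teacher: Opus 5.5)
Your proposal is correct and follows essentially the same route as the paper: you substitute the uniform strong large-deviation expansion of Theorem~\ref{thm:uniform-strong-large-deviation} into $\mathcal G_n(t)$ so that the $\frac12\ln n$ terms cancel, Taylor-expand $I(\cdot)$ around $\mu$ using $I'(\mu)=\theta_\mu<0$, and bound $I''$, $\ln A$, and $\ln(1+\varrho_n)$ uniformly on the compact interval $\mathcal D$. Your explicit formula $I''(d)=1/K''(\theta_d)$ and your choice of $n_0$ with $\sup_{\mathcal D}|\varrho_n|\le 1/2$ make precise the uniform boundedness of these coefficients, which the paper simply asserts by continuity on $\mathcal D$.
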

	\begin{IEEEproof}
		See Appendix~\ref{app:proof-near-identity-transform}.
	\end{IEEEproof}

	Combining the near-identity property in
	Lemma~\ref{lem:near-identity-transform} with the Berry-Esseen
	approximation in Theorem~\ref{thm:gauss of full D} yields the
	following uniform Gaussian approximation.
	
	\begin{lemma}\label{lem:gaussian-after-transform}
		There exist constants $0<C<\infty$ and $n_0\in\mathbb N$ such
		that, for every $n\ge n_0$,
		\begin{align}
			\sup_{t\in\mathbb R}
			\left|
			\Pr\left[
			\mathcal G_n\!\left(
			\frac{
				\sqrt n
				\left(
				\mathsf D(
				\underline{\mathsf X},
				\underline{\mathsf Y}
				)
				-\mu
				\right)
			}{
				\sigma
			}
			\right)
			\ge t
			\right]
			-
			Q(t)
			\right|
			\le
			\frac{C}{\sqrt n}.
			\label{eq:gaussian-after-transform}
		\end{align}
	\end{lemma}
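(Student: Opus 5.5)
Write $\mathsf T_n\triangleq\sqrt n(\mathsf D(\underline{\mathsf X},\underline{\mathsf Y})-\mu)/\sigma$. The plan is to transfer the Berry-Esseen bound of Theorem~\ref{thm:gauss of full D} through the monotone map $\mathcal G_n$, using the near-identity estimate of Lemma~\ref{lem:near-identity-transform} on a central window. Outside that window both tails are controlled trivially. Let $c>0$ be such that $\mu\pm\sigma c\sqrt{\ln n}/\sqrt n\in\mathcal D$ for $n\ge n_0$; this is possible since $\mu$ is interior to $\mathcal D$. Restrict attention to $|t|\le \tau_n\triangleq c\sqrt{\ln n}$, where Lemma~\ref{lem:near-identity-transform} gives $|\mathcal G_n(s)-s|\le C(1+s^2)/\sqrt n\le C'\ln n/\sqrt n$.

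\textbf{Central range.} Fix $|t|\le \tau_n/2$. Since $\mathcal G_n$ is non-decreasing, $\{\mathcal G_n(\mathsf T_n)\ge t\}$ is (up to boundary conventions) an event of the form $\{\mathsf T_n\ge s_t\}$ or $\{\mathsf T_n> s_t\}$, with $s_t\triangleq\inf\{s:\mathcal G_n(s)\ge t\}$. The near-identity bound gives a sandwich. If $s\le t-\delta(t)$ then $\mathcal G_n(s)<t$, and if $s\ge t+\delta(t)$ then $\mathcal G_n(s)\ge t$, where $\delta(t)=2C(1+t^2)/\sqrt n$ (using that $s\approx t$ keeps $1+s^2\lesssim 1+t^2$). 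Hence
\begin{align}
\Pr[\mathsf T_n\ge t+\delta(t)]\le\Pr[\mathcal G_n(\mathsf T_n)\ge t]\le\Pr[\mathsf T_n\ge t-\delta(t)].
\end{align}
Theorem~\ref{thm:gauss of full D} turns both outer probabilities into $Q(t\pm\delta(t))+O(n^{-1/2})$. The key observation is that $|Q(t\pm\delta)-Q(t)|\le\delta\sup_{|u-t|\le\delta}\phi(u)\lesssim (1+t^2)\phi(t/2)/\sqrt n$, and this is $O(n^{-1/2})$ uniformly in $t$ because $(1+t^2)e^{-t^2/8}$ is bounded. The quadratic growth of the error in Lemma~\ref{lem:near-identity-transform} is therefore absorbed by the Gaussian density, and no Taylor expansion of $\mathcal G_n$ is needed.

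\textbf{Tails.} For $t>\tau_n/2$, monotonicity gives $\Pr[\mathcal G_n(\mathsf T_n)\ge t]\le\Pr[\mathcal G_n(\mathsf T_n)\ge\tau_n/2]$, which is at most $Q(\tau_n/2)+O(n^{-1/2})$ by the central case. Also $Q(t)\le Q(\tau_n/2)\le n^{-c^2/8}$. If $c$ is chosen so that $c^2/8\ge 1/2$ (shrinking $\mathcal D$-admissibility only affects $n_0$, since $\sqrt{\ln n/n}\to0$), both quantities are $O(n^{-1/2})$ and so is their difference. The case $t<-\tau_n/2$ is symmetric, applied to complements. Taking the supremum over all three ranges yields \eqref{eq:gaussian-after-transform}.

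\textbf{Main obstacle.} The delicate point is the sandwich step. One must make sure that $s$ ranges only over points where Lemma~\ref{lem:near-identity-transform} applies, i.e.\ $|s|\le\tau_n$, which holds since $|t|\le\tau_n/2$ and $\delta(t)\to0$. One must also handle the possibility that $\mathcal G_n$ takes the value $-\infty$ or has jumps, since $F_{\zeta_n}$ is a lattice step function. Both are dealt with by working with generalized inverses and the inequalities above, which only use monotonicity and the pointwise bound at the relevant points.
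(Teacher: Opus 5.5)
Your proposal is correct and follows essentially the same route as the paper's proof. Both use a central window $|t|\le\sqrt{\ln n}$ (your $\tau_n/2$ with $c=2$), the monotonicity sandwich $\{\mathsf T_n\ge t+\Delta_n(t)\}\subseteq\{\mathcal G_n(\mathsf T_n)\ge t\}\subseteq\{\mathsf T_n\ge t-\Delta_n(t)\}$ obtained by applying Lemma~\ref{lem:near-identity-transform} at the two shifted points, Berry--Esseen with absorption of the $Q$-shift, and monotonicity plus a Gaussian tail bound at the window edge. The differences are cosmetic: the paper bounds the $Q$-shift by $2\Delta_n(t)\phi(t)$ using $|t|\Delta_n(t)\to 0$ on the window rather than your $\phi(t/2)$ argument, and it controls $Q(\sqrt{\ln n})$ via $Q(x)\le\phi(x)/x$ rather than a Chernoff-type bound.
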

	\begin{IEEEproof}
		See Appendix~\ref{app:proof-gaussian-after-transform}.
	\end{IEEEproof}
	
	We finally incorporate the independent randomization term involving
	$\mathsf U$. Since this term is of order $n^{-1/2}$ in mean, it
	does not change the order of the Gaussian-approximation error.
	
	\begin{proposition}[Gaussian approximation of the ORB-RCU bound]
		\label{prop:gaussian-approximation-orb-rcu}
		There exist constants $0<C<\infty$ and $n_0\in\mathbb N$ such
		that, for every $n\ge n_0$,
		\begin{align}
			\sup_{t\in\mathbb R}
			\left|
			\Pr[\mathsf S_n\ge t]-Q(t)
			\right|
			\le
			\frac{C}{\sqrt n}.
			\label{eq:gaussian-approximation-Sn}
		\end{align}
		Consequently, uniformly over all $M\ge 2$,
		\begin{align}
			\left|
			\operatorname{RCU}_{\mathrm{ORB}}(n,M)
			-
			Q(z_{n,M})
			\right|
			\le
			\frac{C}{\sqrt n}.
			\label{eq:orb-rcu-global-Gaussian}
		\end{align}
	\end{proposition}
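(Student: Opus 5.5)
The plan is to combine Lemma~\ref{lem:gaussian-after-transform} with a smoothing argument for the independent perturbation $\mathsf W_n\triangleq -\ln\mathsf U/(\sqrt n\,|\theta_\mu|\sigma)$. Write $\mathsf T_n\triangleq\mathcal G_n\bigl(\sqrt n(\mathsf D(\underline{\mathsf X},\underline{\mathsf Y})-\mu)/\sigma\bigr)$, so that by \eqref{eq:orb-rcu-statistic-decomposition} we have $\mathsf S_n=\mathsf T_n+\mathsf W_n$. Here $\mathsf W_n\ge 0$, $\mathsf W_n$ is independent of $\mathsf T_n$, and $-\ln\mathsf U$ is standard exponential. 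Hence $\mathsf W_n$ is exponential with mean $c_n\triangleq 1/(\sqrt n|\theta_\mu|\sigma)=O(n^{-1/2})$.

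For the \textbf{lower bound}, note that $\mathsf W_n\ge0$ gives $\Pr[\mathsf S_n\ge t]\ge\Pr[\mathsf T_n\ge t]\ge Q(t)-C/\sqrt n$, directly from Lemma~\ref{lem:gaussian-after-transform}.

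For the \textbf{upper bound}, condition on $\mathsf W_n$ using independence:
\begin{align}
\Pr[\mathsf S_n\ge t]=\mathbb E\bigl[\Pr[\mathsf T_n\ge t-\mathsf W_n\mid \mathsf W_n]\bigr]\le \mathbb E\bigl[Q(t-\mathsf W_n)\bigr]+\frac{C}{\sqrt n},
\end{align}
where the inequality holds because the bound in Lemma~\ref{lem:gaussian-after-transform} is uniform in the threshold. Since $|Q'|\le 1/\sqrt{2\pi}$, the mean value theorem gives
\begin{align}
\mathbb E[Q(t-\mathsf W_n)]\le Q(t)+\frac{\mathbb E[\mathsf W_n]}{\sqrt{2\pi}}=Q(t)+\frac{c_n}{\sqrt{2\pi}}.
\end{align}
Combining the two bounds, $\sup_t|\Pr[\mathsf S_n\ge t]-Q(t)|\le (C+1/(|\theta_\mu|\sigma\sqrt{2\pi}))/\sqrt n$ for $n\ge n_0$. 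This proves \eqref{eq:gaussian-approximation-Sn}, with $\sigma>0$ guaranteed by Theorem~\ref{thm:gauss of full D} and $\theta_\mu\neq0$ by its definition.

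The consequence \eqref{eq:orb-rcu-global-Gaussian} follows by taking $t=z_{n,M}$ in \eqref{eq:orb-rcu-tail-representation}. Since the bound holds for all $t\in\mathbb R$ and $\mathsf S_n$ does not depend on $M$, the estimate is uniform over $M\ge2$. One small point to check is that $\mathcal G_n$ may take the value $-\infty$ where $F_{\zeta_n}=0$. This causes no trouble: $\mathsf S_n$ is an extended-real random variable, events of the form $\{\mathsf S_n\ge t\}$ remain well defined, and Lemma~\ref{lem:gaussian-after-transform} already accounts for such values.

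Essentially all the difficulty sits in Lemma~\ref{lem:gaussian-after-transform}, which we assume. The only step needing care here is the upper bound, and the Lipschitz property of $Q$ handles it cleanly. So no step is expected to be a real obstacle; the point to verify is simply that the Lemma's bound is uniform in the threshold, so that it can be applied conditionally at the random shift $t-\mathsf W_n$.
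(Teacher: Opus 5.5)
Your proposal is correct and follows essentially the same route as the paper. Both arguments condition on the independent $\mathsf U$, apply Lemma~\ref{lem:gaussian-after-transform} at the shifted threshold, and control the shift using the Lipschitz bound on $Q(\cdot)$ together with $\mathbb E[-\ln\mathsf U]=1$. The only cosmetic difference is the lower bound: you get it directly from $\mathsf W_n\ge 0$ and monotonicity, whereas the paper bounds $\bigl|\int_0^1 Q\bigl(t+\ln u/(\sqrt n\,|\theta_\mu|\sigma)\bigr)\,\mathrm du-Q(t)\bigr|$ two-sidedly; both yield the same constant.
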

	\begin{IEEEproof}
		Conditioning on $\mathsf U$ in
		\eqref{eq:orb-rcu-statistic-decomposition} gives
		\begin{align}
			\Pr[\mathsf S_n\ge t]
			=
			\int_0^1
			\Pr\Bigg[
			\mathcal G_n\!\left(
			\frac{
				\sqrt n
				\left(
				\mathsf D(
				\underline{\mathsf X},
				\underline{\mathsf Y}
				)
				-\mu
				\right)
			}{
				\sigma
			}
			\right)
			\ge
			t+
			\frac{
				\ln u
			}{
				\sqrt n\,|\theta_\mu|\sigma
			}
			\Bigg]
			\,\mathrm du.
			\label{eq:Sn-conditioning-U}
		\end{align}
		Lemma~\ref{lem:gaussian-after-transform} therefore implies
		that, for some constant $0<C_0<\infty$,
		\begin{align}
			\left|
			\Pr[\mathsf S_n\ge t]
			-
			\int_0^1
			Q\left(
			t+
			\frac{\ln u}{
				\sqrt n\,|\theta_\mu|\sigma
			}
			\right)
			\,\mathrm du
			\right|
			\le
			\frac{C_0}{\sqrt n}
			\label{eq:Sn-after-transform-approximation}
		\end{align}
		uniformly over $t\in\mathbb R$.
		
		The global Lipschitz continuity of $Q(\cdot)$ yields
		\begin{align}
			\left|
			Q\left(
			t+
			\frac{\ln u}{
				\sqrt n\,|\theta_\mu|\sigma
			}
			\right)
			-
			Q(t)
			\right|
			\le
			\frac{-\ln u}{
				\sqrt{2\pi n}\,|\theta_\mu|\sigma
			},
			\qquad 0<u<1.
			\label{eq:Q-randomization-shift}
		\end{align}
		Using $\int_0^1(-\ln u)\,\mathrm du=1$, we obtain
		\begin{align}
			\left|
			\int_0^1
			Q\left(
			t+
			\frac{\ln u}{
				\sqrt n\,|\theta_\mu|\sigma
			}
			\right)
			\,\mathrm du
			-
			Q(t)
			\right|
			\le
			\frac{1}{
				\sqrt{2\pi n}\,|\theta_\mu|\sigma
			}.
			\label{eq:integrated-Q-shift}
		\end{align}
		
		Combining
		\eqref{eq:Sn-after-transform-approximation} and
		\eqref{eq:integrated-Q-shift} proves
		\eqref{eq:gaussian-approximation-Sn}. Finally, evaluating
		\eqref{eq:gaussian-approximation-Sn} at $t=z_{n,M}$ and using
		the exact representation \eqref{eq:orb-rcu-tail-representation} yields \eqref{eq:orb-rcu-global-Gaussian}, uniformly over all $M\ge2$.
	\end{IEEEproof}
	
	Proposition~\ref{prop:gaussian-approximation-orb-rcu} reduces $\mathrm{RCU}_{\mathrm{ORB}}(n,M)$ to a Gaussian tail function evaluated at the deterministic threshold $z_{n,M}$, up to a uniform error of
	order $n^{-1/2}$. This approximation combines the Berry-Esseen approximation for the transmitted-codeword decoding metric with the uniform strong large-deviation expansion for the competing-codeword term. In particular, the former characterizes the Gaussian fluctuations of the transmitted-codeword decoding metric through its nondegenerate second-order $U$-statistic representation, whereas the latter shows that the deterministic transformation $\mathcal G_n(\cdot)$ is asymptotically close to the identity. Crucially, the normalization in $z_{n,M}$ retains the explicit $n^{-1/2}$ subexponential prefactor in the competing-codeword term. After taking logarithms and solving for the codebook size, this prefactor produces the additional $\frac{1}{2}\ln n$ term. We next use the resulting Gaussian approximation to derive the third-order achievable rate expansion for a prescribed average error probability.

	\subsection{Third-Order Achievable Rate Expansion}
	\label{subsec:refined-rate-expansion}
	
	The preceding analysis leads to the following refinement of the threshold-based second-order result in Theorem~\ref{thm:threshold-second-order-rate}.
	
	\begin{theorem}[Third-order achievable rate expansion]
		\label{thm:refined-achievable-rate}
		For every fixed $\epsilon\in(0,1)$, the ORBGRAND
		achievable rate satisfies
		\begin{align}
			R_{\mathrm{ORB}}^{\star}(n,\epsilon)
			\ge
			I(\mu)
			-
			\sqrt{\frac{V_{\mathrm{ORB}}}{n}}\,
			Q^{-1}(\epsilon)
			+
			\frac{\ln n}{2n}
			+
			O\!\left(\frac{1}{n}\right).
			\label{eq:refined-rate-expansion}
		\end{align}
	\end{theorem}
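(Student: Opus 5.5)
The proof uses Proposition~\ref{prop:gaussian-approximation-orb-rcu} directly: it turns the ORB-RCU bound into a Gaussian tail evaluated at the deterministic threshold $z_{n,M}$, uniformly in $M\ge2$, with error at most $C/\sqrt n$. Fix $\epsilon\in(0,1)$ and let $C$ be the constant of that proposition. For $n$ large enough that $\epsilon-C/\sqrt n\in(0,1)$, the plan is to find the largest $M$ with $Q(z_{n,M})\le \epsilon-C/\sqrt n$. Then $\mathrm{RCU}_{\mathrm{ORB}}(n,M)\le\epsilon$. Theorem~\ref{ORB RCU} gives an ensemble average error probability at most $\epsilon$, so some deterministic codebook has average error probability at most $\epsilon$ under ORBGRAND, and therefore $M_{\mathrm{ORB}}^\star(n,\epsilon)\ge M$.

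Concretely, set $z_n^\star\triangleq Q^{-1}(\epsilon-C/\sqrt n)$. Since $Q$ is decreasing, the condition $Q(z_{n,M})\le\epsilon-C/\sqrt n$ is equivalent to $z_{n,M}\ge z_n^\star$. By \eqref{eq:normalized-codebook-size-offset}, this is the same as
\begin{align}
\ln(M-1)\le nI(\mu)+\tfrac12\ln n-\sqrt n\,|\theta_\mu|\sigma\, z_n^\star .
\end{align}
Denote the right-hand side by $\Gamma_n$ and choose $M_n\triangleq 1+\lfloor e^{\Gamma_n}\rfloor$. Because $\Gamma_n\to\infty$, we have $M_n\ge2$ for large $n$. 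Also $\ln(M_n-1)\le\Gamma_n$, so $M_n$ is admissible. Using $M_n\ge e^{\Gamma_n}$, we get $\ln M_n\ge\Gamma_n$.

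It remains to expand $\Gamma_n$. Since $Q^{-1}$ is continuously differentiable near $\epsilon$, a first-order Taylor expansion gives $z_n^\star=Q^{-1}(\epsilon)+O(n^{-1/2})$. Hence $\sqrt n\,|\theta_\mu|\sigma z_n^\star=\sqrt{nV_{\mathrm{ORB}}}\,Q^{-1}(\epsilon)+O(1)$, where we use $V_{\mathrm{ORB}}=\theta_\mu^2\sigma^2$. Therefore
\begin{align}
\ln M_{\mathrm{ORB}}^\star(n,\epsilon)\ge nI(\mu)-\sqrt{nV_{\mathrm{ORB}}}\,Q^{-1}(\epsilon)+\tfrac12\ln n+O(1),
\end{align}
and dividing by $n$ gives \eqref{eq:refined-rate-expansion}.

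The argument is short because the delicate work is already in Proposition~\ref{prop:gaussian-approximation-orb-rcu}. The one point that needs care is that the proposition's bound holds uniformly over all $M\ge2$. This uniformity is what allows us to apply it at the $n$-dependent codebook size $M_n$ without any circularity. The $\tfrac12\ln n$ gain enters only through the normalization built into $z_{n,M}$, which in turn comes from the $n^{-1/2}$ prefactor in Theorem~\ref{thm:uniform-strong-large-deviation}. The remaining step is to check that the $C/\sqrt n$ backoff in the error budget moves $z_n^\star$ by only $O(n^{-1/2})$, and hence changes $\ln M$ by only $O(1)$.
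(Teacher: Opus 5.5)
Your proposal is correct and follows the paper's proof essentially step for step. Both use the same choice $M_n=1+\lfloor e^{\Gamma_n}\rfloor$ with $\Gamma_n$ built from $Q^{-1}(\epsilon-C/\sqrt n)$, the same uniform-in-$M$ Gaussian approximation of Proposition~\ref{prop:gaussian-approximation-orb-rcu} to get $\mathrm{RCU}_{\mathrm{ORB}}(n,M_n)\le\epsilon$, and the same Taylor expansion of $Q^{-1}$ to absorb the $C/\sqrt n$ backoff into an $O(1)$ term of $\ln M$; your explicit check that $M_n\ge2$ lies in the proposition's range is a small detail the paper leaves implicit.
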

	\begin{IEEEproof}
		Let $0<C_0<\infty$ be the constant in
		\eqref{eq:orb-rcu-global-Gaussian}. For all sufficiently
		large $n$, we have $\epsilon-C_0/\sqrt n\in(0,1)$. Choose the codebook size as
		\begin{align}
			M_n
			\triangleq
			1+
			\left\lfloor
			\exp\left\{
			nI(\mu)
			-
			\sqrt{nV_{\mathrm{ORB}}}\,
			Q^{-1}\!\left(
			\epsilon-\frac{C_0}{\sqrt n}
			\right)
			+
			\frac{1}{2}\ln n
			\right\}
			\right\rfloor .
			\label{eq:achievable-codebook-size}
		\end{align}
		The definition of $M_n$ gives
		\begin{align*}
			\ln(M_n-1)
			\le
			nI(\mu)
			-
			\sqrt{nV_{\mathrm{ORB}}}\,
			Q^{-1}\!\left(
			\epsilon-\frac{C_0}{\sqrt n}
			\right)
			+
			\frac{1}{2}\ln n.
		\end{align*}
		Recalling that
		$\sqrt{V_{\mathrm{ORB}}}=|\theta_\mu|\sigma$, the definition
		of $z_{n,M}$ therefore yields
		\begin{align}
			z_{n,M_n}
			\ge
			Q^{-1}\!\left(
			\epsilon-\frac{C_0}{\sqrt n}
			\right).
			\label{eq:achievable-threshold-bound}
		\end{align}
		The monotonicity of $Q(\cdot)$ and
		\eqref{eq:orb-rcu-global-Gaussian} now give
		\begin{align}
			\operatorname{RCU}_{\mathrm{ORB}}(n,M_n)
			&\le
			Q(z_{n,M_n})
			+
			\frac{C_0}{\sqrt n}
			\le
			Q\!\left(
			Q^{-1}\!\left(
			\epsilon-\frac{C_0}{\sqrt n}
			\right)
			\right)
			+
			\frac{C_0}{\sqrt n}
			=
			\epsilon.
			\label{eq:achievable-codebook-error}
		\end{align}
		The random-coding argument in
		Theorem~\ref{ORB RCU} consequently ensures the existence of
		an $(n,M_n)$ code whose average error probability under
		ORBGRAND does not exceed $\epsilon$. Hence,
		\begin{align}
			M_{\mathrm{ORB}}^{\star}(n,\epsilon)
			\ge M_n.
			\label{eq:optimal-codebook-lower-bound}
		\end{align}
		
		Because $\epsilon\in(0,1)$ is fixed, $Q^{-1}(\cdot)$ is
		continuously differentiable on a neighborhood of $\epsilon$.
		The mean-value theorem thus gives
		\begin{align}
			Q^{-1}\!\left(
			\epsilon-\frac{C_0}{\sqrt n}
			\right)
			=
			Q^{-1}(\epsilon)
			+
			O\!\left(\frac{1}{\sqrt n}\right).
			\label{eq:inverse-Q-local-expansion}
		\end{align}
		Moreover, the elementary inequality
		$1+\lfloor x\rfloor\ge x$ for $x>0$ and
		\eqref{eq:achievable-codebook-size} imply that
		\begin{align}
			\ln M_n
			\ge
			nI(\mu)
			-
			\sqrt{nV_{\mathrm{ORB}}}\,
			Q^{-1}\!\left(
			\epsilon-\frac{C_0}{\sqrt n}
			\right)
			+
			\frac{1}{2}\ln n.
			\label{eq:achievable-codebook-log-bound}
		\end{align}
		Combining
		\eqref{eq:optimal-codebook-lower-bound},
		\eqref{eq:inverse-Q-local-expansion}, and
		\eqref{eq:achievable-codebook-log-bound}, we obtain
		\begin{align}
			\ln M_{\mathrm{ORB}}^{\star}(n,\epsilon)
			\ge
			nI(\mu)
			-
			\sqrt{nV_{\mathrm{ORB}}}\,
			Q^{-1}(\epsilon)
			+
			\frac{1}{2}\ln n
			+
			O(1).
			\label{eq:proved-codebook-size-expansion}
		\end{align}
		Dividing both sides by $n$ and using the definition of
		$R_{\mathrm{ORB}}^{\star}(n,\epsilon)$ proves
		\eqref{eq:refined-rate-expansion}.
	\end{IEEEproof}
	
	Theorem~\ref{thm:threshold-second-order-rate} establishes
	the first- and second-order terms of the achievable rate,
	whereas Theorem~\ref{thm:refined-achievable-rate} further
	retains the positive correction $\frac{\ln n}{2n}$. In both results, $I(\mu)$ appears as the first-order term, while
	$V_{\mathrm{ORB}}$ determines the second-order backoff. We refer
	to $V_{\mathrm{ORB}}$ as the ORBGRAND dispersion. The
	additional logarithmic correction in
	Theorem~\ref{thm:refined-achievable-rate} originates from the
	$n^{-1/2}$ prefactor in the strong large-deviation expansion
	of $F_{\zeta_n}(\cdot)$.
	
	The second- and third-order expansions in
	\eqref{eq:threshold-second-order-achievable-rate} and
	\eqref{eq:refined-rate-expansion} both contain the leading term
	$I(\mu)$, obtained by evaluating the large-deviation rate
	function at the typical value $\mu$ of the transmitted-codeword decoding metric. On the other hand, previous asymptotic analyses
	characterized the achievable rate of ORBGRAND from the
	perspective of generalized mutual information (GMI)
	\cite{liu2022orbgrand,li2026orbgrand}. The following result shows
	that $I(\mu)$ coincides with the achievable rate expression
	established therein, thereby connecting the present
	finite-blocklength analysis with the existing asymptotic characterization.

	\begin{proposition}[Identification of the first-order rate]
		\label{prop:I_mu_Iorb_new}
		The rate-function value $I(\mu)$ coincides with the
		ORBGRAND achievable rate $I_{\mathrm{ORB}}$ in
		\cite[(6)]{li2026orbgrand}, i.e.,
		\begin{align}
			I(\mu)=I_{\mathrm{ORB}}.
			\label{eq:I-mu-equals-I-orb}
		\end{align}
	\end{proposition}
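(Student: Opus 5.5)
The plan is to write out the expression \cite[(6)]{li2026orbgrand} explicitly and transform it algebraically into the Legendre form \eqref{eq:legendre} evaluated at $d=\mu$. In the GMI framework of \cite{liu2022orbgrand,li2026orbgrand}, the ORBGRAND rate for the rank-based metric with (uncompanded) ranks takes the form
\begin{align*}
I_{\mathrm{ORB}}
=\ln 2-\inf_{\theta<0}\left\{\int_0^1\ln\bigl(1+e^{\theta t}\bigr)\,\mathrm dt-\theta\,\mathbb E\bigl[\Psi(\Lambda)\mathsf E\bigr]\right\},
\end{align*}
possibly with the sign convention $\theta\mapsto-\theta$, $\theta>0$. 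The first step is to confirm two points about the quantity appearing there. The first is that the limiting ``typical'' transmitted-codeword metric used in \cite{li2026orbgrand} is exactly $\mathbb E[\Psi(\Lambda)\mathsf E]=\mu$, because the normalized rank $\mathsf R_i/n$ converges to $\Psi(\Lambda_i)$, the same heuristic as in \eqref{eq:intuition-transmitted-metric}. The second is that the competing-codeword log-MGF term there is the Riemann limit $\int_0^1\ln(1+e^{\theta t})\,\mathrm dt$ of the i.i.d.\ Bernoulli$(1/2)$ weighted sum, matching $K$ in \eqref{eq:K} up to the constant $\ln 2$.

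The next step is to absorb $\ln 2$ into the integral. Using $\ln 2=\int_0^1\ln 2\,\mathrm dt$, the expression becomes
\begin{align*}
I_{\mathrm{ORB}}=\sup_{\theta<0}\left\{\theta\mu-\int_0^1\ln\frac{1+e^{\theta t}}{2}\,\mathrm dt\right\}=\sup_{\theta<0}\bigl\{\theta\mu-K(\theta)\bigr\}.
\end{align*}
This has the form of \eqref{eq:legendre}, except that the supremum is taken only over $\theta<0$.

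The remaining step is to remove this restriction. The map $\theta\mapsto\theta\mu-K(\theta)$ is strictly concave by \eqref{eq:K-second-derivative}. By Lemma~\ref{conv point}, $\mu\in(0,1/4)$, so by \eqref{eq:K-prime-range} its unique stationary point is $\theta_\mu<0$. Hence the unconstrained supremum is attained inside $(-\infty,0)$, and restricting to $\theta<0$ (or to $\theta\le 0$) does not change its value. Together with \eqref{eq:rate-function-saddlepoint}, this gives $I_{\mathrm{ORB}}=\theta_\mu\mu-K(\theta_\mu)=I(\mu)$.

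The main obstacle I expect is bookkeeping rather than analysis: matching the exact form of \cite[(6)]{li2026orbgrand} to the objects defined here. That expression may be stated with a general companding function, a positive parameter, or an inf/sup written in a different arrangement. For ORBGRAND without companding, I would specialize the companding function to the identity on normalized ranks, so that it becomes $\Psi(\Lambda)$, and check that the hard-decision error indicator and reliability definitions coincide with \eqref{eq:def_Ei}. Once that identification is made, the concavity argument above closes the proof.
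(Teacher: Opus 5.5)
Your proposal is correct and follows essentially the same route as the paper. You write $K(\theta)=\int_0^1\ln(1+e^{\theta t})\,\mathrm{d}t-\ln 2$, use $\theta_\mu<0$ (from $K'(0)=1/4$, strict monotonicity of $K'$, and $\mu\in(0,1/4)$) to restrict the Legendre supremum to $\theta<0$, and match the result to \cite[(6)]{li2026orbgrand}. The paper makes your ``bookkeeping'' step concrete by conditioning on $\mathsf X=\pm1$, which gives $\mu=\frac12\int_{q^+(y)<q^-(y)}\Psi(|\ln\frac{q^+(y)}{q^-(y)}|)q^+(y)\,\mathrm dy+\frac12\int_{q^+(y)>q^-(y)}\Psi(|\ln\frac{q^+(y)}{q^-(y)}|)q^-(y)\,\mathrm dy$. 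This is the form in which $\mu$ appears in the cited expression, so the identification is an exact algebraic identity, not the rank-convergence heuristic you invoke.
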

	
	\begin{IEEEproof}
		Recall that $K'(0)=1/4$. Since $K'(\cdot)$ is strictly
		increasing and $\mu\in(0,1/4)$, the unique solution to
		$K'(\theta_\mu)=\mu$ satisfies $\theta_\mu<0$. Hence, the
		unique maximizer in the definition of $I(\mu)$ lies in
		$(-\infty,0)$, and
		\begin{align}
			I(\mu)
			=
			\sup_{\theta<0}
			\left\{
			\theta\mu-K(\theta)
			\right\}.
			\label{eq:I-mu-negative-theta}
		\end{align}
		Moreover, \eqref{eq:K} gives
		\begin{align}
			K(\theta)
			=
			\int_0^1
			\ln\left(1+e^{\theta t}\right)
			\,\mathrm{d}t
			-
			\ln 2.
			\label{eq:K-alternative-form}
		\end{align}
		
		Recalling \eqref{eq:def_E_mu}, the quantity $\mu$ can be
		written as
		\begin{align}
			\mu
			&=
			\frac{1}{2}
			\int_{q^+(y)<q^-(y)}
			\Psi\left(
			\left|
			\ln\frac{q^+(y)}{q^-(y)}
			\right|
			\right)
			q^+(y)
			\,\mathrm{d}y
			+
			\frac{1}{2}
			\int_{q^+(y)>q^-(y)}
			\Psi\left(
			\left|
			\ln\frac{q^+(y)}{q^-(y)}
			\right|
			\right)
			q^-(y)
			\,\mathrm{d}y.
			\label{eq:mu-two-decision-regions}
		\end{align}
		Substituting
		\eqref{eq:K-alternative-form} and
		\eqref{eq:mu-two-decision-regions} into
		\eqref{eq:I-mu-negative-theta} yields
		\begin{align}
			I(\mu)
			&=
			\ln 2
			-
			\inf_{\theta<0}
			\Bigg\{
			\int_0^1
			\ln\left(1+e^{\theta t}\right)
			\,\mathrm{d}t
			\nonumber\\
			&\qquad
			-
			\frac{\theta}{2}
			\int_{q^+(y)<q^-(y)}
			\Psi\left(
			\left|
			\ln\frac{q^+(y)}{q^-(y)}
			\right|
			\right)
			q^+(y)
			\,\mathrm{d}y
			\nonumber\\
			&\qquad
			-
			\frac{\theta}{2}
			\int_{q^+(y)>q^-(y)}
			\Psi\left(
			\left|
			\ln\frac{q^+(y)}{q^-(y)}
			\right|
			\right)
			q^-(y)
			\,\mathrm{d}y
			\Bigg\}.
			\label{eq:I-mu-I-orb-expression}
		\end{align}
		The right-hand side of
		\eqref{eq:I-mu-I-orb-expression} is precisely the expression
		for $I_{\mathrm{ORB}}$ in
		\cite[(6)]{li2026orbgrand}, thereby proving
		\eqref{eq:I-mu-equals-I-orb}.
	\end{IEEEproof}
	
	Combining Proposition~\ref{prop:I_mu_Iorb_new} with Theorems~\ref{thm:threshold-second-order-rate} and~\ref{thm:refined-achievable-rate}, respectively, yields
	\begin{align}
		R_{\mathrm{ORB}}^{\star}(n,\epsilon)
		&\ge
		I_{\mathrm{ORB}}
		-
		\sqrt{\frac{V_{\mathrm{ORB}}}{n}}\,
		Q^{-1}(\epsilon)
		+
		O\!\left(\frac{1}{n}\right),
		\label{eq:orb-second-order-achievable-rate}\\
		R_{\mathrm{ORB}}^{\star}(n,\epsilon)
		&\ge
		I_{\mathrm{ORB}}
		-
		\sqrt{\frac{V_{\mathrm{ORB}}}{n}}\,
		Q^{-1}(\epsilon)
		+
		\frac{\ln n}{2n}
		+
		O\!\left(\frac{1}{n}\right).
		\label{eq:orb-refined-achievable-rate}
	\end{align}
	The leading term $I_{\mathrm{ORB}}$ recovers the asymptotic
	achievable rate of ORBGRAND established in
	\cite{liu2022orbgrand,li2026orbgrand}. The ORBGRAND dispersion
	$V_{\mathrm{ORB}}$ governs the second-order rate backoff. The
	first bound retains the asymptotic and second-order terms,
	whereas the second additionally includes the logarithmic
	third-order correction $\frac{\ln n}{2n}$. Accordingly, neglecting
	the remainder terms in
	\eqref{eq:orb-second-order-achievable-rate} and
	\eqref{eq:orb-refined-achievable-rate}, respectively, motivates
	the second- and third-order approximations:
	\begin{align}
		\widehat{R}_{\mathrm{ORB}}^{(2)}(n,\epsilon)
		&\triangleq
		I_{\mathrm{ORB}}
		-
		\sqrt{\frac{V_{\mathrm{ORB}}}{n}}\,
		Q^{-1}(\epsilon),
		\label{eq:orb-second-order-approximation}\\
		\widehat{R}_{\mathrm{ORB}}^{(3)}(n,\epsilon)
		&\triangleq
		I_{\mathrm{ORB}}
		-
		\sqrt{\frac{V_{\mathrm{ORB}}}{n}}\,
		Q^{-1}(\epsilon)
		+
		\frac{\ln n}{2n}.
		\label{eq:orb-third-order-approximation}
	\end{align}
	
	\begin{remark}[Relation to classical finite-blocklength analyses]
		\label{rem:decoder-induced-NA}
		The second-order approximation in
		\eqref{eq:orb-second-order-approximation} has the same formal
		structure as the classical normal approximation for
		finite-blocklength channel coding
		\cite{polyanskiy2010channel} and its extensions to mismatched
		decoding with independent single-letter decoding metrics \cite{scarlett2014mismatched}. The third-order approximation in
		\eqref{eq:orb-third-order-approximation} further retains the
		logarithmic correction $\frac{\ln n}{2n}$. The quantities appearing in both approximations, however, are induced by the ORBGRAND rule. Specifically, $I_{\mathrm{ORB}}$ is the first-order achievable rate associated with
		reliability-ordered guessing, while
		$V_{\mathrm{ORB}}=\theta_\mu^2\sigma^2$ characterizes the
		second-order fluctuation of the transmitted-codeword decoding
		metric.
		
		Unlike the matched information density or independent single-letter decoding metrics used in mismatched decoding, the ORBGRAND decoding metrics are rank-based and coupled across symbols through the global reliability ordering. Consequently, neither the
		second-order nor the third-order result follows directly from
		a normal approximation for a sum of independent
		single-letter random variables. Their derivations instead have relied on two ingredients: the Berry-Esseen approximation in Theorem~\ref{thm:gauss of full D}, obtained through the nondegenerate second-order $U$-statistic representation of the
		transmitted-codeword decoding metric, and the uniform strong
		large-deviation expansion in
		Theorem~\ref{thm:uniform-strong-large-deviation}, which provides a refined characterization of the term $F_{\zeta_n}(\cdot)$ associated with the competing codeword.
		
		The threshold-based analysis separates the transmitted- and
		competing-codeword contributions and yields the second-order
		approximation in \eqref{eq:orb-second-order-approximation}. A direct analysis of the ORB-RCU bound further retains the conditional pairwise probability within the expression in \eqref{eq:rcu_n_M}, thereby
		preserving the explicit $n^{-1/2}$ subexponential prefactor in
		the uniform strong large-deviation expansion of the term $F_{\zeta_n}(\cdot)$ associated with the competing codeword. Taking logarithms and solving for the codebook size then produces the
		additional $\frac{1}{2}\ln n$ term, or equivalently the
		correction $\frac{\ln n}{2n}$ in
		\eqref{eq:orb-third-order-approximation}.
	\end{remark}

	\section{Numerical Results}\label{sec:numerical}
	
	In this section, we present numerical results for the BPSK-modulated AWGN channel to validate the obtained finite-blocklength characterizations.
	We consider two complementary viewpoints: \emph{(i)} error probability curves at a fixed blocklength, and \emph{(ii)} rate-blocklength curves at a fixed target error probability. To provide benchmarks, we compute the meta-converse bound using the saddlepoint approximation for binary hypothesis testing in \cite{vazquez2018saddlepoint}, and evaluate the RCU bound under ML decoding using the saddlepoint approximations in \cite{font2018saddlepoint}.
	
	\subsection{Error Probability versus Rate at Fixed Blocklength}
	
	For a given signal-to-noise ratio (SNR) and blocklength $n$,
	we plot, as functions of the rate $R$, the meta-converse bound,
	the ML-RCU bound, the ORB-RCU bound in
	Theorem~\ref{ORB RCU}, and the second- and third-order ORB
	approximations obtained by inverting
	\eqref{eq:orb-second-order-approximation} and
	\eqref{eq:orb-third-order-approximation}, respectively.
	The meta-converse provides a lower bound on the minimum average
	error probability, whereas the ML-RCU and ORB-RCU bounds provide
	upper bounds on the average error probability under ML decoding
	and ORBGRAND, respectively. The second- and third-order curves, on the other hand, are obtained by inverting the corresponding rate expansions and provide numerical approximations to the average error probability under ORBGRAND rather than error probability bounds.

	\begin{figure}[htbp]	\centerline{\includegraphics[width=0.75\textwidth]{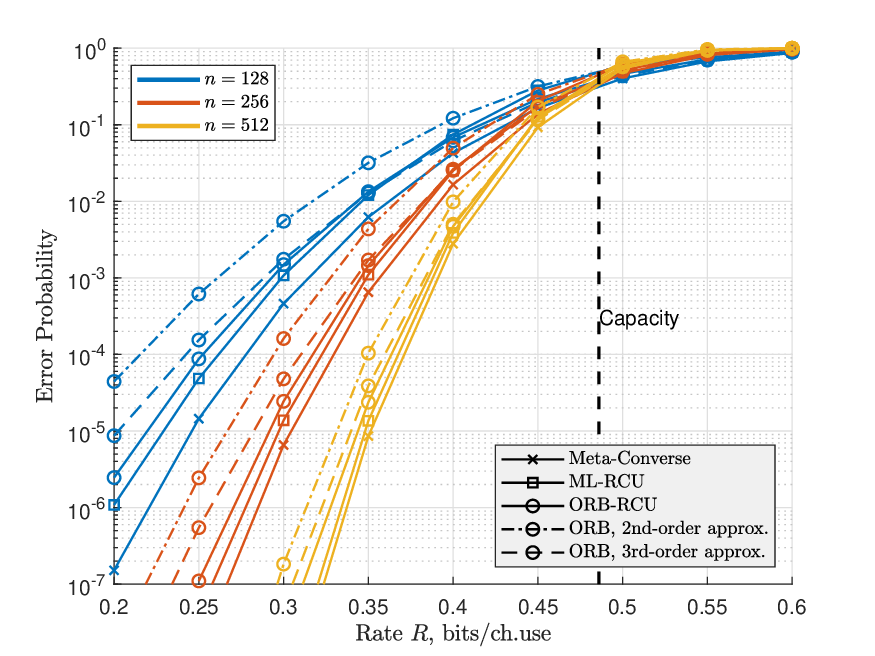}}
		\caption{Average error probability bounds versus rate $R$ for the BPSK-modulated AWGN channel at $\mathrm{SNR}=0$~dB.}
		\label{fig:eps_vs_R0dB}
	\end{figure}
	
	\begin{figure}[htbp]	\centerline{\includegraphics[width=0.75\textwidth]{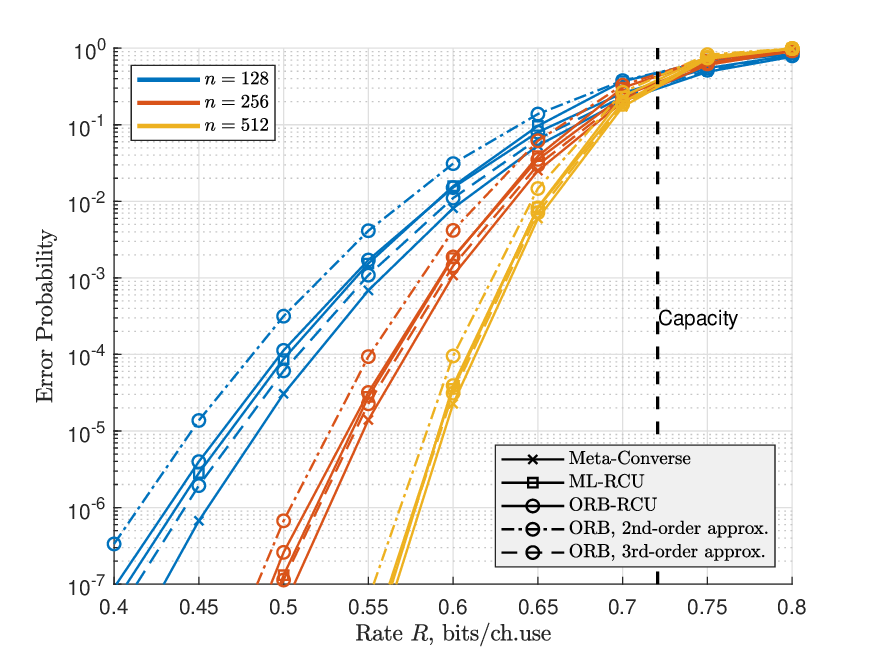}}
		\caption{Average error probability bounds versus rate $R$ for the BPSK-modulated AWGN channel at $\mathrm{SNR}=3$~dB.}
		\label{fig:eps_vs_R3dB}
	\end{figure}

	Figs.~\ref{fig:eps_vs_R0dB} and
	\ref{fig:eps_vs_R3dB} show that, for both SNR values therein, the
	ORB-RCU curve closely tracks the ML-RCU benchmark over the
	considered rate range. This indicates that the finite-blocklength error probability penalty of ORBGRAND relative to
	ML decoding is modest in the regimes considered. Moreover, at a fixed coding rate, the error probability gap
	between the ORB-RCU and ML-RCU curves becomes smaller as the
	blocklength increases, suggesting that the error probability
	penalty of ORBGRAND relative to ML decoding becomes less
	pronounced at larger blocklengths. The second- and third-order ORB error probability approximations both capture the overall dependence of the ORBGRAND error probability on the coding rate. In particular, the third-order approximation generally follows the ORB-RCU curve more closely than the second-order approximation in the operating region of interest, illustrating the improvement provided by the logarithmic correction $\frac{\ln n}{2n}$. As expected, the meta-converse remains below the achievability bounds.\footnote{We observed occasional crossings where the ORB-RCU curve slightly falls below the ML-RCU curve at a few rate points. This does not contradict the optimality of ML decoding, since both curves are achievability upper bounds obtained via different relaxations and truncations (e.g., the $\min\{1,\cdot\}$ clipping in RCU-type bounds); hence they are not guaranteed to be pointwise ordered.}

	\subsection{Rate versus Blocklength at Fixed Target Error Probability}
	
	We now fix the SNR and target average error probability $\epsilon$. For each blocklength $n$, the rate curves corresponding to the meta-converse, ML-RCU, and ORB-RCU bounds
	are obtained by numerically inverting the corresponding
	finite-blocklength relations to determine the code size $M$.
	The second- and third-order ORB approximations curves are instead obtained by directly evaluating \eqref{eq:orb-second-order-approximation} and
	\eqref{eq:orb-third-order-approximation}, respectively.

	\begin{figure}[t]	\centerline{\includegraphics[width=0.75\textwidth]{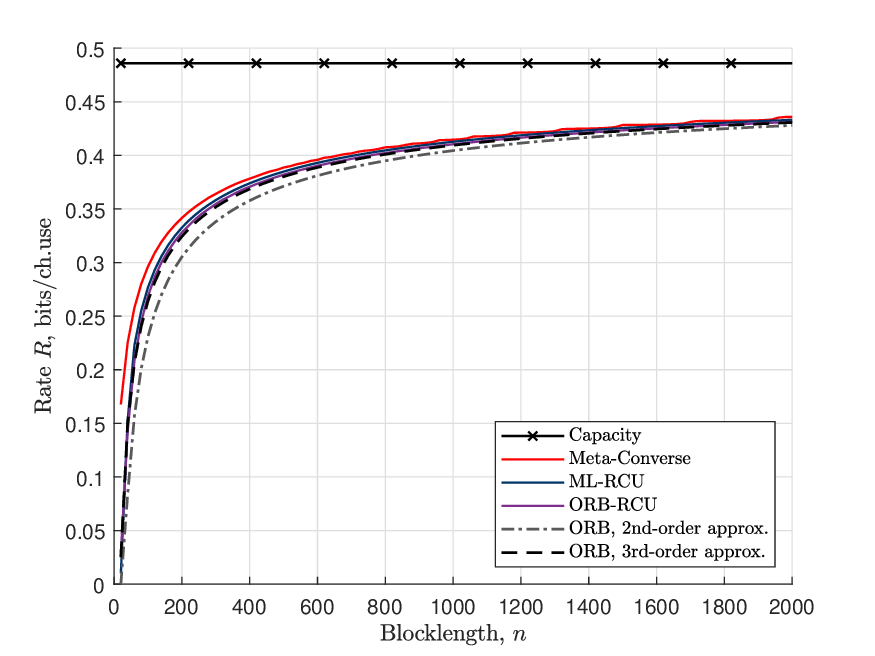}}
		\caption{Rate-blocklength curves for the BPSK-modulated AWGN channel at $\mathrm{SNR}=0$~dB and $\epsilon=10^{-3}$.}
		\label{fig:0dBrate}
	\end{figure}
	\begin{figure}[t]	\centerline{\includegraphics[width=0.75\textwidth]{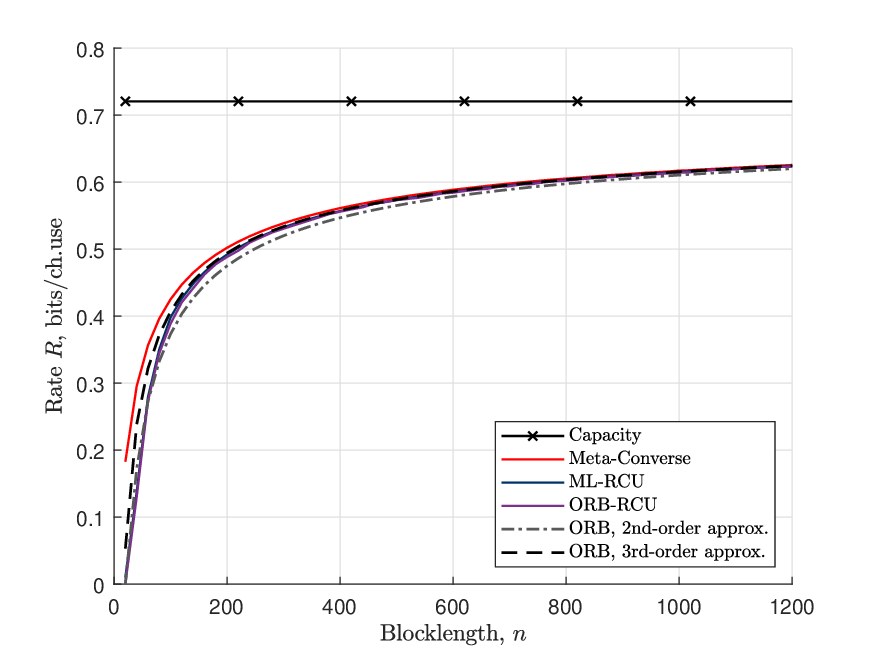}}
		\caption{Rate-blocklength curves for the BPSK-modulated AWGN channel at $\mathrm{SNR}=3$~dB and $\epsilon=10^{-6}$.}
		\label{fig:2dBrate}
	\end{figure}
	
	Figs.~\ref{fig:0dBrate} and~\ref{fig:2dBrate} report the resulting rate-blocklength curves. For both SNR values therein, the ORB-RCU curve stays close to the ML-RCU benchmark over the considered blocklengths, indicating only a small finite-blocklength rate penalty for ORBGRAND relative to ML. The second-order ORB approximation already captures the overall finite-blocklength trend of the ORB-RCU curve. Incorporating the logarithmic correction $\frac{\ln n}{2n}$ likewise improves the accuracy, with the third-order ORB approximation closely tracking the ORB-RCU curve even at moderate blocklengths, such as $n = 100$, and becoming increasingly accurate as $n$ grows. These finite-blocklength observations are qualitatively
	consistent with the asymptotic trend reported
	in~\cite{liu2022orbgrand}. In particular,
	while~\cite{liu2022orbgrand} establishes that ORBGRAND
	approaches the channel capacity as $n\to\infty$,
	our results indicate that similar near-optimal behavior is
	already visible at practical blocklengths.
	
	\begin{table}[H]
		\caption{Bounds on the minimal blocklength $n$ needed to achieve $R$ over the BPSK-modulated AWGN channel.}
		\label{tab:min_n_bounds}
		\centering
		\renewcommand{\arraystretch}{1.0}
		\resizebox{\columnwidth}{!}{
			\begin{tabular}{c c c c c c c c}
				\hline
				\textbf{Rate}
				& \textbf{SNR (dB)}
				& $\boldsymbol{\epsilon}$
				& \textbf{Meta-Converse}
				& \textbf{ML-RCU}
				& \textbf{ORB-RCU}
				& \textbf{ORB 2nd-Order}
				& \textbf{ORB 3rd-Order}
				\\
				\hline
				\multirow{4}{*}{$0.8C$}
				& $0$ & $10^{-3}$
				& $n\ge 503$
				& $n\le 545$
				& $n\le 579$
				& $n\approx 700$
				& $n\approx 600$
				\\
				& $1$ & $10^{-4}$
				& $n\ge 565$
				& $n\le 599$
				& $n\le 620$
				& $n\approx 723$
				& $n\approx 638$
				\\
				& $2$ & $10^{-5}$
				& $n\ge 556$
				& $n\le 582$
				& $n\le 592$
				& $n\approx 671$
				& $n\approx 598$
				\\
				& $3$ & $10^{-6}$
				& $n\ge 496$
				& $n\le 522$
				& $n\le 532$
				& $n\approx 582$
				& $n\approx 518$
				\\
				\hline
				\multirow{4}{*}{$0.9C$}
				& $0$ & $10^{-3}$
				& $n\ge 2272$
				& $n\le 2382$
				& $n\le 2585$
				& $n\approx 2874$
				& $n\approx 2629$
				\\
				& $1$ & $10^{-4}$
				& $n\ge 2427$
				& $n\le 2578$
				& $n\le 2656$
				& $n\approx 2919$
				& $n\approx 2708$
				\\
				& $2$ & $10^{-5}$
				& $n\ge 2268$
				& $n\le 2472$
				& $n\le 2512$
				& $n\approx 2693$
				& $n\approx 2517$
				\\
				& $3$ & $10^{-6}$
				& $n\ge 2031$
				& $n\le 2181$
				& $n\le 2220$
				& $n\approx 2343$
				& $n\approx 2187$
				\\
				\hline
			\end{tabular}
		}
	\end{table}
	
	In addition, we fix the operating rate to $R=0.8C$ and
	$R=0.9C$, where $C$ denotes the channel capacity under the
	BPSK input, and for each target error probability $\epsilon$
	numerically invert the meta-converse, ML-RCU,
	and ORB-RCU bounds to characterize the blocklength required
	to achieve the pair $(R,\epsilon)$. The meta-converse yields a
	necessary lower bound on the minimal blocklength
	$n^{\star}(R,\epsilon)$, whereas ML-RCU and ORB-RCU provide
	sufficient upper bounds.
	For comparison, we also report the
	blocklength estimates obtained by numerically inverting the
	second- and third-order ORB approximations in
	\eqref{eq:orb-second-order-approximation} and
	\eqref{eq:orb-third-order-approximation}, respectively. These
	approximations provide numerical estimates rather than
	rigorous bounds.
	The results are summarized in Table~\ref{tab:min_n_bounds}.
	The blocklength upper bound obtained from
	ORB-RCU is generally slightly larger than that obtained from
	ML-RCU, reflecting the performance loss of ORBGRAND relative
	to ML decoding. The second-order approximation captures the overall blocklength trend, while the logarithmic correction $\frac{\ln n}{2n}$ yields a more accurate prediction. In particular, the third-order approximation shows good agreement with the ORB-RCU results across the considered operating points, once again demonstrating the benefit of the
	logarithmic correction for finite-blocklength prediction.
	
	\begin{figure}[H]	\centerline{\includegraphics[width=0.75\textwidth]{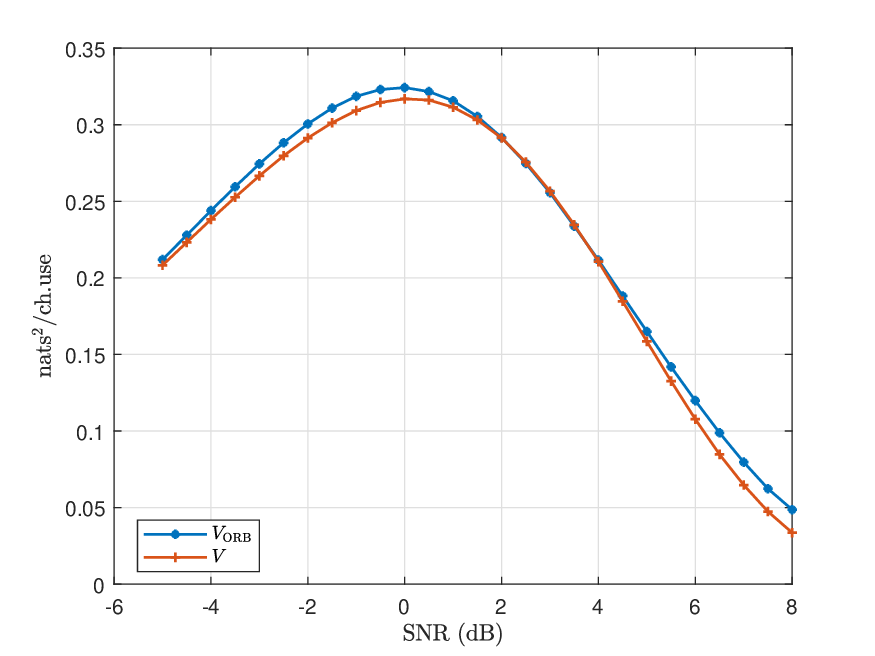}}
		\caption{$V_{\mathrm{ORB}}$ and $V$ versus SNR for the BPSK-modulated AWGN channel.}
		\label{fig:V_ORB}
	\end{figure} 
	
	Fig.~\ref{fig:V_ORB} plots the ORBGRAND dispersion $V_{\mathrm{ORB}}$ together with the ML dispersion $V$ as functions of SNR for the BPSK-modulated AWGN channel. Both curves exhibit a clear unimodal behavior: the dispersion increases from the low-SNR regime, peaks around $\text{SNR}\approx 0$~dB, and then decreases steadily as SNR grows. Moreover, $V_{\mathrm{ORB}}$ closely tracks $V$ over the entire SNR range, indicating that ORBGRAND induces a second-order behavior that is comparable to that of optimal ML decoding on the AWGN channel.
	
	This SNR dependence is directly reflected in the finite-blocklength achievable rate expansions in  \eqref{eq:orb-second-order-achievable-rate} and \eqref{eq:orb-refined-achievable-rate}. Since $\sqrt{V_{\mathrm{ORB}}}$ is the dispersion coefficient multiplying the $1/\sqrt{n}$ term in the finite-blocklength achievable rate expansions, it governs the finite-blocklength backoff $\sqrt{V_{\mathrm{ORB}}/n}\,Q^{-1}(\epsilon)$. Consequently, the peak of $V_{\mathrm{ORB}}$ around moderate SNR implies that the second-order backoff from the first-order term $I_{\mathrm{ORB}}$ is most pronounced in this regime, whereas at higher SNR the dispersion decreases and the second-order penalty becomes smaller, so the achievable rate approaches $I_{\mathrm{ORB}}$ more rapidly as $n$ increases.

	\section{Conclusion}\label{sec:con}
	
	This paper established a decoder-dependent finite-blocklength
	analysis for ORBGRAND over general binary-input memoryless channels. Since ORBGRAND is driven by reliability ordering, the induced decoding metrics are rank-based and coupled across symbols, and standard finite-blocklength results are
	not directly applicable. Our analysis begins by specializing the general mismatched RCU bound to the rank-based ORBGRAND rule, resulting in an ensemble-average error probability bound that we refer to as the ORB-RCU bound.
	
	We further characterized the asymptotic behavior of
	ORBGRAND through a rank-statistic analysis:
	the transmitted-codeword decoding metric
	admits a nondegenerate second-order $U$-statistic
	representation, which leads to a Berry-Esseen approximation,
	whereas the competing-codeword term is characterized
	through a uniform strong large-deviation expansion. Combining
	these two ingredients first yields a threshold-based
	second-order achievable rate expansion. A direct analysis of
	the ORB-RCU bound further preserves the
	large-deviation prefactor and yields a refined third-order
	expansion containing the logarithmic correction
	$\frac{\ln n}{2n}$. We also showed that the first-order term
	$I(\mu)$ coincides with the previously established ORBGRAND
	achievable rate $I_{\mathrm{ORB}}$ in
	\cite[(6)]{li2026orbgrand}, thereby connecting the present finite-blocklength analysis with the existing asymptotic characterization and yielding decoder-induced second- and third-order approximations.
	
	Numerical results on the BPSK-modulated AWGN channel corroborate the theoretical analysis:
	ORB-RCU closely tracks the ML-RCU benchmark over the
	considered operating regime, while the third-order
	approximation closely follows ORB-RCU and provides a more
	accurate finite-blocklength prediction than the second-order
	approximation, particularly at short and moderate
	blocklengths. These results quantify the finite-blocklength gap between ORBGRAND and ML decoding and provide practical design
	guidelines without relying solely on exhaustive simulations.

	Finally, two directions appear particularly
	worthy of further investigation. Motivated by recent work on
	guessing-based decoders with abandonment \cite{tan2025ensemble}, one direction is to impose truncation on the maximum number of queries for ORBGRAND and characterize the resulting finite blocklength
	rate-reliability-complexity tradeoff. Another direction is to
	develop a converse tailored to ORBGRAND tighter than the meta-converse, which would provide a
	useful complement to the achievability results established in
	this work.

	\section{Appendices}\label{sec:app}
	
	\subsection{Proof of Lemma \ref{conv point}}\label{proof conv point}
	
	Since $\Psi(\Lambda)$ is a deterministic function of
	$\mathsf Y$, the law of total expectation gives
	\begin{align}
		\mu
		&=
		\mathbb E[\Psi(\Lambda)\mathsf E]
		=
		\mathbb E\!\left[
		\Psi(\Lambda)
		\Pr[\mathsf E=1\mid\mathsf Y]
		\right].
	\end{align}
	For the LLR under the equiprobable input distribution, the
	conditional hard-decision error probability satisfies
	\begin{align}
		\Pr[\mathsf E=1\mid\mathsf Y=y]
		=
		\frac{1}{1+e^{|l(y)|}}.
	\end{align}
	Therefore, we have
	\begin{align}
		\mu
		=
		\mathbb E\!\left[
		\frac{\Psi(\Lambda)}{1+e^\Lambda}
		\right].
		\label{eq:mu-representation}
	\end{align}
	
	By Assumption~\ref{assump:regularity}, $\Lambda$ is continuously
	distributed. Hence, the probability integral transform gives
	$\Psi(\Lambda)\sim\operatorname{Unif}[0,1]$, and consequently
	\begin{align}
		\Psi(\Lambda)>0
		\quad\text{almost surely},
		\qquad
		\mathbb E[\Psi(\Lambda)]=\frac12.
	\end{align}
	Moreover, since $\Lambda\ge 0$ and
	$\Pr[\Lambda=0]=0$, we have $\Lambda>0$ almost surely. It follows
	that
	\begin{align}
		0
		<
		\frac{\Psi(\Lambda)}{1+e^\Lambda}
		<
		\frac12\Psi(\Lambda)
		\quad\text{almost surely}.
	\end{align}
	Taking expectations and using \eqref{eq:mu-representation}, we
	obtain
	\begin{align}
		0
		<
		\mu
		<
		\frac12\mathbb E[\Psi(\Lambda)]
		=
		\frac14.
	\end{align}

	\subsection{Proof of Theorem \ref{thm:gauss of full D}}\label{proof gauss full D}

	Let $\mathsf{Z}_i\triangleq(\Lambda_i,\mathsf{E}_i)$, and define the symmetric kernel
	\begin{align}
		h\bigl((\lambda,e),(\lambda',e')\bigr)
		\triangleq
		\frac{1}{2}
		\left[
		e\mathbf{1}(\lambda'\le\lambda)
		+
		e'\mathbf{1}(\lambda\le\lambda')
		\right].
	\end{align}
	The corresponding second-order $U$-statistic \cite{hoeffding1992class} is
	\begin{align}
		\mathsf{V}_n
		\triangleq
		\binom{n}{2}^{-1}
		\sum_{1\le i<j\le n}
		h(\mathsf{Z}_i,\mathsf{Z}_j).
	\end{align}
	Using the definition of the reliability ranks,
	\begin{align}
		\mathsf R_i
		=
		\sum_{j=1}^{n}
		\mathbf 1(\Lambda_j\le\Lambda_i),
	\end{align}
	we have
	\begin{align}
		\mathsf D(\underline{\mathsf X},\underline{\mathsf Y})
		&=
		\frac{1}{n^2}
		\sum_{i=1}^{n}
		\mathsf E_i
		\sum_{j=1}^{n}
		\mathbf 1(\Lambda_j\le\Lambda_i)
		\nonumber\\
		&=
		\frac{1}{n^2}
		\sum_{i=1}^{n}\mathsf E_i
		+
		\frac{1}{n^2}
		\sum_{\substack{i,j=1\\ i\ne j}}^{n}
		\mathsf E_i
		\mathbf 1(\Lambda_j\le\Lambda_i)
		\nonumber\\
		&=
		\frac{1}{n^2}
		\sum_{i=1}^{n}\mathsf E_i
		+
		\frac{1}{n^2}
		\sum_{1\le i<j\le n}
		\Bigl[
		\mathsf E_i\mathbf 1(\Lambda_j\le\Lambda_i)
		+
		\mathsf E_j\mathbf 1(\Lambda_i\le\Lambda_j)
		\Bigr]
		\nonumber\\
		&=
		\frac{1}{n^2}
		\sum_{i=1}^{n}\mathsf E_i
		+
		\frac{2}{n^2}
		\sum_{1\le i<j\le n}
		h(\mathsf Z_i,\mathsf Z_j)
		\nonumber\\
		&=
		\frac{1}{n^2}
		\sum_{i=1}^{n}\mathsf E_i
		+
		\frac{2}{n^2}
		\binom{n}{2}\mathsf V_n
		\nonumber\\
		&=
		\left(1-\frac{1}{n}\right)\mathsf V_n
		+
		\frac{1}{n^2}
		\sum_{i=1}^{n}\mathsf E_i.
		\label{eq:D-U-statistic-decomposition}
	\end{align}
	Since $0\le\mathsf{V}_n\le1$ and
	$0\le\sum_{i=1}^{n}\mathsf{E}_i\le n$, it follows that
	\begin{align}\label{eq:D_Mn_bound}
		\left|
		\mathsf{D}(\underline{\mathsf{X}},\underline{\mathsf{Y}})
		-\mathsf{V}_n
		\right|
		\le
		\frac{1}{n}.
	\end{align}
	
	Let $\mathsf Z'=(\Lambda',\mathsf E')$ be an independent copy
	of $\mathsf Z=(\Lambda,\mathsf E)$. Since $\mathsf V_n$ is a
	second-order $U$-statistic with kernel $h(\cdot,\cdot)$, its expectation
	coincides with the mean of the kernel. Using the definition of
	$h(\cdot,\cdot)$, we obtain
	\begin{align}
		\mathbb E[\mathsf V_n]
		&=
		\mathbb E\bigl[h(\mathsf Z,\mathsf Z')\bigr]
		\nonumber\\
		&=
		\frac{1}{2}
		\mathbb E\left[
		\mathsf E
		\mathbf 1(\Lambda'\le\Lambda)
		\right]
		+
		\frac{1}{2}
		\mathbb E\left[
		\mathsf E'
		\mathbf 1(\Lambda\le\Lambda')
		\right]
		\nonumber\\
		&=
		\frac{1}{2}
		\mathbb E\left[
		\mathsf E\Psi(\Lambda)
		\right]
		+
		\frac{1}{2}
		\mathbb E\left[
		\mathsf E'\Psi(\Lambda')
		\right]
		\nonumber\\
		&=
		\mathbb E\left[
		\mathsf E\Psi(\Lambda)
		\right]
		=
		\mu.
		\label{eq:kernel-expectation}
	\end{align}
	Here, the third equality follows from the independence of
	$\mathsf Z$ and $\mathsf Z'$, while the fourth equality follows
	because $(\Lambda',\mathsf E')$ has the same distribution as
	$(\Lambda,\mathsf E)$.
	
	Conditioning on $\mathsf Z$, we have
	\begin{align}
		\mathbb E\left[
		h(\mathsf Z,\mathsf Z')
		\mid\mathsf Z
		\right]
		&=
		\frac{1}{2}
		\mathsf E\,
		\mathbb E\left[
		\mathbf 1(\Lambda'\le\Lambda)
		\mid\mathsf Z
		\right]
		+
		\frac{1}{2}
		\mathbb E\left[
		\mathsf E'
		\mathbf 1(\Lambda\le\Lambda')
		\mid\mathsf Z
		\right]
		\nonumber\\
		&=
		\frac{1}{2}
		\left[
		\mathsf E\Psi(\Lambda)
		+
		\Pr[
		\mathsf E'=1,\Lambda'\ge\Lambda
		\mid\mathsf Z
		]
		\right]
		\nonumber\\
		&=
		\frac{1}{2}
		\left[
		\mathsf E\Psi(\Lambda)+a(\Lambda)
		\right],
		\label{eq:first-order-conditional-expectation}
	\end{align}
	where the last equality follows from the independence of
	$\mathsf Z'$ and $\mathsf Z$ and the definition of $a(x)$.
	Therefore, the first-order Hoeffding projection \cite{hoeffding1992class} is
	\begin{align}
		\mathbb E\left[
		h(\mathsf Z,\mathsf Z')
		\mid\mathsf Z
		\right]
		-
		\mathbb E\left[
		h(\mathsf Z,\mathsf Z')
		\right]
		=
		\frac{1}{2}
		\left[
		\mathsf E\Psi(\Lambda)+a(\Lambda)
		\right]
		-
		\mu.
	\end{align}
	Subtracting the constant $\mu$ does not affect the variance.
	Consequently,
	\begin{align}
		\operatorname{Var}\left(
		\mathbb E\left[
		h(\mathsf Z,\mathsf Z')
		\mid\mathsf Z
		\right]
		\right)
		&=
		\frac{1}{4}
		\operatorname{Var}\left(
		\mathsf E\Psi(\Lambda)+a(\Lambda)
		\right)
		=
		\frac{\sigma^2}{4}.
		\label{eq:first-order-projection-variance}
	\end{align}
	
	We next verify that the first-order projection is nondegenerate. Since $\mathsf L$ is the LLR under the
	equiprobable input distribution, we have
	\begin{align}
		\Pr[\mathsf E=1\mid\Lambda=\lambda]
		=
		\frac{1}{1+e^\lambda}
	\end{align}
	for almost every $\lambda$ with respect to the distribution of
	$\Lambda$. It follows that the corresponding conditional
	variance is
	\begin{align}
		\operatorname{Var}(\mathsf E\mid\Lambda=\lambda)
		=
		\frac{e^\lambda}{(1+e^\lambda)^2}.
	\end{align}
	Moreover, conditioned on $\Lambda=\lambda$, both
	$\Psi(\Lambda)$ and $a(\Lambda)$ are deterministic. Therefore,
	\begin{align}
		&
		\operatorname{Var}\!\left(
		\mathsf E\Psi(\Lambda)+a(\Lambda)
		\mid\Lambda=\lambda
		\right)
		=
		\Psi(\lambda)^2
		\frac{e^\lambda}{(1+e^\lambda)^2}.
		\label{eq:conditional-projection-variance}
	\end{align}
	Applying the law of total variance to
	$\mathsf E\Psi(\Lambda)+a(\Lambda)$ now gives
	\begin{align}
		\sigma^2
		&\ge
		\mathbb E\!\left[
		\Psi(\Lambda)^2
		\frac{e^\Lambda}{(1+e^\Lambda)^2}
		\right].
		\label{eq:positive-projection-variance}
	\end{align}
	Under Assumption~\ref{assump:regularity},
	$\Psi(\Lambda)\sim\operatorname{Unif}[0,1]$ and is
	therefore strictly positive almost surely. Since
	$e^\Lambda/(1+e^\Lambda)^2$ is also strictly positive almost
	surely, the expectation in
	\eqref{eq:positive-projection-variance} is strictly positive.
	Consequently, $\sigma^2>0$.

	Since the kernel $h(\cdot,\cdot)$ is bounded and $\sigma^2>0$, the
	Berry-Esseen theorem for nondegenerate $U$-statistics
	\cite{callaert1978berry} yields a constant $0<C_0<\infty$ such that
	\begin{align}\label{eq:BE_Mn}
		\sup_{t\in\mathbb{R}}
		\left|
		\Pr\left[
		\frac{\sqrt{n}(\mathsf{V}_n-\mu)}{\sigma}
		>t
		\right]
		-
		Q(t)
		\right|
		\le
		\frac{C_0}{\sqrt{n}}.
	\end{align}
	
	By \eqref{eq:D_Mn_bound}, for every $t\in\mathbb{R}$,
	\begin{align}\label{eq:neq}
		&
		\Pr\left[
		\frac{\sqrt{n}(\mathsf{V}_n-\mu)}{\sigma}
		>
		t+\frac{1}{\sigma\sqrt{n}}
		\right]
		\le
		\Pr\left[
		\frac{
			\sqrt{n}\bigl(
			\mathsf{D}(\underline{\mathsf{X}},
			\underline{\mathsf{Y}})
			-\mu
			\bigr)
		}{\sigma}
		>t
		\right]
		\le
		\Pr\left[
		\frac{\sqrt{n}(\mathsf{V}_n-\mu)}{\sigma}
		>
		t-\frac{1}{\sigma\sqrt{n}}
		\right].
	\end{align}
	Moreover, since
	\begin{align}
		|Q(t+u)-Q(t)|
		\le
		\frac{|u|}{\sqrt{2\pi}},
	\end{align}
	uniformly in $t,u\in\mathbb{R}$, \eqref{eq:BE_Mn} and \eqref{eq:neq} imply
	\begin{align}
		\sup_{t\in\mathbb{R}}
		\left|
		\Pr\left[
		\frac{
			\sqrt{n}\bigl(
			\mathsf{D}(\underline{\mathsf{X}},
			\underline{\mathsf{Y}})
			-\mu
			\bigr)
		}{\sigma}
		>t
		\right]
		-
		Q(t)
		\right|
		\le
		\frac{C}{\sqrt{n}},
	\end{align}
	where $C= C_0+1/(\sigma\sqrt{2\pi})$. 
	
	It remains to justify the non-strict inequality in the theorem
	statement. For every $t\in\mathbb R$, continuity of probability from above gives
	\begin{align}
		&
		\Pr\left[
		\frac{
			\sqrt n
			\left(
			\mathsf D(
			\underline{\mathsf X},
			\underline{\mathsf Y})
			-\mu
			\right)
		}{\sigma}
		\ge t
		\right]
		=
		\lim_{s\uparrow t}
		\Pr\left[
		\frac{
			\sqrt n
			\left(
			\mathsf D(
			\underline{\mathsf X},
			\underline{\mathsf Y})
			-\mu
			\right)
		}{\sigma}
		>s
		\right].
		\label{eq:nonstrict-tail-left-limit}
	\end{align}
	Applying the preceding uniform bound at $s<t$ and letting
	$s\uparrow t$, together with the continuity of $Q(\cdot)$, yields
	\begin{align}
		\left|
		\Pr\left[
		\frac{
			\sqrt n
			\left(
			\mathsf D(
			\underline{\mathsf X},
			\underline{\mathsf Y})
			-\mu
			\right)
		}{\sigma}
		\ge t
		\right]
		-
		Q(t)
		\right|
		\le
		\frac{C}{\sqrt n}.
	\end{align}
	Taking the supremum over $t\in\mathbb R$ proves the non-strict
	inequality and completes the proof.

	\subsection{Proof of Lemma \ref{lem:finite-saddlepoint}}\label{proof saddlepoint expansion}
	We first establish the existence and uniqueness of the
	finite-blocklength saddlepoint. For this purpose, define
	\begin{align}
		g_\theta(x)
		\triangleq
		x\frac{e^{\theta x}}{1+e^{\theta x}},
		\qquad
		(\theta,x)\in\mathbb R\times[0,1].
	\end{align}
	The derivatives of the finite- and limiting-blocklength CGFs can
	then be represented as
	\begin{align}
		K_n'(\theta)
		&=
		\frac{1}{n}
		\sum_{i=1}^{n}
		g_\theta\left(\frac{i}{n}\right),\\
		K'(\theta)
		&=
		\int_0^1g_\theta(x)\,\mathrm{d}x.
	\end{align}
	
	Since $\{\theta_d:d\in\mathcal D\}$ is a compact subset of
	$(-\infty,0)$, we may choose a compact interval
	$[a,b]\subset(-\infty,0)$ such that $K'(a)<\omega_1$ and $K'(b)>\omega_2$. The partial derivative $\partial g_\theta(x)/\partial x$ is bounded uniformly over $(\theta,x)\in[a,b]\times[0,1]$. Consequently, the right-endpoint Riemann-sum error satisfies
	\begin{align}
		\sup_{\theta\in[a,b]}
		\left|
		K_n'(\theta)-K'(\theta)
		\right|
		=
		O(n^{-1}).
		\label{eq:Kn-prime-uniform-convergence}
	\end{align}
	Together with \eqref{eq:dn-rounding}, this uniform convergence
	implies that, for all sufficiently large $n$,
	\begin{align}
		K_n'(a)
		<
		d_n
		<
		K_n'(b),
		\qquad
		d\in\mathcal D.
		\label{eq:finite-saddlepoint-bracket}
	\end{align}
	Because $K_n''(\theta)>0$, the function $K_n'(\cdot)$ is continuous and strictly increasing. Therefore, \eqref{eq:finite-blocklength-saddlepoint} has a unique solution
	$\theta_{n,d}\in(a,b)$.
	
	We next derive the uniform saddlepoint expansions. On the fixed
	compact interval $[a,b]$, the Euler-Maclaurin formula \cite{olver2010nist} gives
	\begin{align}
		K_n^{(r)}(\theta)
		=
		K^{(r)}(\theta)
		+
		\frac{G^{(r)}(\theta)}{n}
		+
		O(n^{-2}),
		\qquad
		r=0,1,2,
		\label{eq:uniform-Euler-Maclaurin-expansion}
	\end{align}
	where the remainder is uniform over $\theta\in[a,b]$. In
	particular, the cases $r=0,1,2$ respectively give
	\begin{align}
		K_n(\theta)
		&=
		K(\theta)
		+
		\frac{G(\theta)}{n}
		+
		O(n^{-2}),
		\label{eq:Kn-Euler-Maclaurin}\\
		K_n'(\theta)
		&=
		K'(\theta)
		+
		\frac{G'(\theta)}{n}
		+
		O(n^{-2}),
		\label{eq:Kn-prime-Euler-Maclaurin}\\
		K_n''(\theta)
		&=
		K''(\theta)
		+
		\frac{G''(\theta)}{n}
		+
		O(n^{-2}).
		\label{eq:Kn-second-Euler-Maclaurin}
	\end{align}
	
	Combining
	$K_n'(\theta_{n,d})=d_n$,
	$K'(\theta_d)=d$, and
	\eqref{eq:Kn-prime-Euler-Maclaurin}, we obtain
	\begin{align}
		K'(\theta_{n,d})-K'(\theta_d)
		=
		d_n-d
		-
		\frac{G'(\theta_{n,d})}{n}
		+
		O(n^{-2})
		=
		O(n^{-1}),
		\label{eq:limiting-derivative-difference}
	\end{align}
	uniformly over $d\in\mathcal D$. Since $K''(\cdot)$ is bounded away from zero on $[a,b]$, the mean-value theorem applied to $K'(\cdot)$ yields
	\begin{align}
		\theta_{n,d}-\theta_d
		=
		O(n^{-1}),
	\end{align}
	which proves \eqref{eq:theta-nd-expansion}.
	
	The expansion of the second derivative follows from
	\eqref{eq:Kn-second-Euler-Maclaurin} and
	\eqref{eq:theta-nd-expansion}. More precisely, the smoothness of
	$K''(\cdot)$ on $[a,b]$ gives
	\begin{align}
		K_n''(\theta_{n,d})
		&=
		K''(\theta_{n,d})+O(n^{-1})=
		K''(\theta_d)+O(n^{-1}),
	\end{align}
	which proves \eqref{eq:Kn-second-expansion}.
	
	It remains to expand the finite-blocklength rate function. From
	\eqref{eq:Kn-Euler-Maclaurin}, we have
	\begin{align}
		I_n(d_n)
		&=
		\theta_{n,d}d_n
		-
		K(\theta_{n,d})
		-
		\frac{G(\theta_{n,d})}{n}
		+
		O(n^{-2}).
		\label{eq:finite-rate-function-intermediate}
	\end{align}
	The function $L_d(\theta)\triangleq\theta d-K(\theta)$
	satisfies $L_d'(\theta_d)=0$. A Taylor expansion around
	$\theta_d$, together with
	\eqref{eq:theta-nd-expansion}, therefore gives
	\begin{align}
		\theta_{n,d}d-K(\theta_{n,d})
		=
		I(d)+O(n^{-2}).
		\label{eq:stationary-rate-expansion}
	\end{align}
	In addition, \eqref{eq:dn-rounding} and
	\eqref{eq:theta-nd-expansion} imply
	\begin{align}
		\theta_{n,d}(d_n-d)
		=
		O(n^{-2}),
		\qquad
		G(\theta_{n,d})
		=
		G(\theta_d)+O(n^{-1}).
	\end{align}
	Substituting these estimates into
	\eqref{eq:finite-rate-function-intermediate} yields
	\begin{align}
		I_n(d_n)
		=
		I(d)
		-
		\frac{G(\theta_d)}{n}
		+
		O(n^{-2}).
	\end{align}
	Multiplying both sides by $n$ proves
	\eqref{eq:finite-rate-function-expansion}. All the estimates used
	above are uniform over $d\in\mathcal D$. This completes the proof.

	\subsection{Proof of Lemma~\ref{lem:uniform-lattice-lclt}}
	\label{app:uniform-lattice-lclt}
	
	Under the tilted measure $\Pr\nolimits_{n,d}$, the random variables
	$\{\mathsf B_i\}_{i=1}^n$ remain independent, with
	\begin{align}
		p_{i,n,d}
		\triangleq
		\Pr\nolimits_{n,d}[\mathsf B_i=1]
		=
		\frac{
			e^{\theta_{n,d}i/n}
		}{
			1+e^{\theta_{n,d}i/n}
		},
		\qquad
		i\in\{1,\ldots,n\}.
		\label{eq:app-tilted-parameter}
	\end{align}
	To center the weighted Bernoulli sum under $\Pr\nolimits_{n,d}$, define
	\begin{align}
		\mathsf A_{i,n,d}
		\triangleq
		i\bigl(\mathsf B_i-p_{i,n,d}\bigr),
		\qquad
		i\in\{1,\ldots,n\}.
		\label{eq:app-centered-summands}
	\end{align}
	The exact centering relation in \eqref{eq:tilted-mean} then gives
	\begin{align}
		\zeta_n-\kappa_n(d)
		=
		\sum_{i=1}^n
		\mathsf A_{i,n,d}.
		\label{eq:app-centered-sum}
	\end{align}
	Moreover, the variance of the centered sum satisfies
	\begin{align}
		\sum_{i=1}^n
		\mathbb E_{n,d}
		\left[
		\mathsf A_{i,n,d}^2
		\right]
		=
		\sigma_{n,d}^2.
		\label{eq:app-centered-variance}
	\end{align}
	
	We first establish uniform bounds on the tilted Bernoulli parameters. By Lemma~\ref{lem:finite-saddlepoint}, there exists a compact interval $[a,b]\subset(-\infty,0)$ such that
	$\theta_{n,d}\in[a,b]$ for all sufficiently large $n$ and every
	$d\in\mathcal D$. Since $i/n\in(0,1]$, we have
	\begin{align}
		a
		\le
		\frac{\theta_{n,d}i}{n}
		\le
		0.
	\end{align}
	It follows that there exists a constant $p_\star\in(0,1/2)$,
	independent of $n$, $d$, and $i$, such that
	\begin{align}
		p_\star
		\le
		p_{i,n,d}
		\le
		\frac12.
		\label{eq:app-tilted-parameter-bounds}
	\end{align}
	Consequently, the tilted Bernoulli variances satisfy
	\begin{align}
		p_\star(1-p_\star)
		\le
		p_{i,n,d}(1-p_{i,n,d})
		\le
		\frac14.
		\label{eq:app-Bernoulli-variance-bounds}
	\end{align}
	
	We next establish a Lyapunov-type moment bound uniformly over the
	tilted family \cite{petrov2012sums}.
	Since
	$|\mathsf B_i-p_{i,n,d}|\le 1$, the third absolute moment of each
	centered summand satisfies
	\begin{align}
		\mathbb E_{n,d}
		\left[
		|\mathsf A_{i,n,d}|^3
		\right]
		\le
		i^3.
		\label{eq:app-third-moment-bound}
	\end{align}
	Summing these bounds over $i\in\{1,\ldots,n\}$ gives
	\begin{align}
		\sum_{i=1}^n
		\mathbb E_{n,d}
		\left[
		|\mathsf A_{i,n,d}|^3
		\right]
		\le
		\sum_{i=1}^n i^3
		=
		O(n^4).
		\label{eq:app-total-third-moment-bound}
	\end{align}
	Combining \eqref{eq:app-total-third-moment-bound} with the uniform
	variance bound in \eqref{eq:tilted-variance-order}, we obtain
	\begin{align}
		\sup_{d\in\mathcal D}
		\frac{
			\displaystyle
			\sum_{i=1}^n
			\mathbb E_{n,d}
			\left[
			|\mathsf A_{i,n,d}|^3
			\right]
		}{
			\sigma_{n,d}^3
		}
		=
		O(n^{-1/2}).
		\label{eq:app-uniform-Lyapunov-bound}
	\end{align}
	Thus, the standardized triangular array satisfies the Lyapunov
	condition uniformly over $d\in\mathcal D$.
	
	We now use the preceding moment estimate to analyze the characteristic function of the standardized centered sum. For each
	$d\in\mathcal D$, define
	\begin{align}
		\psi_{n,d}(t)
		\triangleq
		\mathbb E_{n,d}
		\left[
		\exp\left(
		\mathrm{j}t
		\frac{
			\zeta_n-\kappa_n(d)
		}{
			\sigma_{n,d}
		}
		\right)
		\right],
		\qquad
		t\in\mathbb R,
		\label{eq:app-standardized-characteristic-function}
	\end{align}
	where $\mathrm{j}\triangleq\sqrt{-1}$ denotes the imaginary unit.
	Using \eqref{eq:app-centered-sum} and the independence of
	$\{\mathsf A_{i,n,d}\}_{i=1}^n$ under $\Pr\nolimits_{n,d}$, this characteristic function factorizes as
	\begin{align}
		\psi_{n,d}(t)
		=
		\prod_{i=1}^n
		\mathbb E_{n,d}
		\left[
		\exp\left(
		\frac{\mathrm{j}t\mathsf A_{i,n,d}}
		{\sigma_{n,d}}
		\right)
		\right].
		\label{eq:app-characteristic-function-product}
	\end{align}
	
	To compare each characteristic function factor with its Gaussian
	counterpart, define
	\begin{align}
		\alpha_{i,n,d}(t)
		&\triangleq
		\mathbb E_{n,d}
		\left[
		\exp\left(
		\frac{\mathrm{j}t\mathsf A_{i,n,d}}
		{\sigma_{n,d}}
		\right)
		\right],
		\label{eq:app-alpha-factor}\\
		\beta_{i,n,d}(t)
		&\triangleq
		\exp\left(
		-
		\frac{
			t^2
			\mathbb E_{n,d}
			[
			\mathsf A_{i,n,d}^2
			]
		}{
			2\sigma_{n,d}^2
		}
		\right).
		\label{eq:app-beta-factor}
	\end{align}
	The product representation in
	\eqref{eq:app-characteristic-function-product} can therefore be
	rewritten as
	\begin{align}
		\psi_{n,d}(t)
		=
		\prod_{i=1}^n
		\alpha_{i,n,d}(t).
		\label{eq:app-psi-alpha-product}
	\end{align}
	
	We first compare both factors with the same second-order
	approximation. The elementary Taylor bound
	\begin{align}
		\left|
		e^{\mathrm{j}x}
		-
		1
		-
		\mathrm{j}x
		+
		\frac{x^2}{2}
		\right|
		\le
		\frac{|x|^3}{6},
		\qquad
		x\in\mathbb R,
		\label{eq:app-complex-exponential-Taylor-bound}
	\end{align}
	together with
	$\mathbb E_{n,d}[\mathsf A_{i,n,d}]=0$, implies
	\begin{align}
		\Bigg|
		\alpha_{i,n,d}(t)
		-
		\left(
		1
		-
		\frac{
			t^2
			\mathbb E_{n,d}
			[
			\mathsf A_{i,n,d}^2
			]
		}{
			2\sigma_{n,d}^2
		}
		\right)
		\Bigg|
		\le
		\frac{
			|t|^3
			\mathbb E_{n,d}
			[
			|\mathsf A_{i,n,d}|^3
			]
		}{
			6\sigma_{n,d}^3
		}.
		\label{eq:app-individual-characteristic-Taylor-bound}
	\end{align}
	
	The Gaussian factor admits an analogous second-order approximation.
	For every $x\ge0$, the real exponential satisfies
	\begin{align}
		\left|
		e^{-x}-(1-x)
		\right|
		\le
		\frac{x^2}{2}.
		\label{eq:app-real-exponential-Taylor-bound}
	\end{align}
	Applying this inequality with $x=\frac{t^2\mathbb E_{n,d}[\mathsf A_{i,n,d}^2]}{2\sigma_{n,d}^2}$, we have
	\begin{align}
		\Bigg|
		\beta_{i,n,d}(t)
		-
		\left(
		1
		-
		\frac{
			t^2
			\mathbb E_{n,d}
			[
			\mathsf A_{i,n,d}^2
			]
		}{
			2\sigma_{n,d}^2
		}
		\right)
		\Bigg|
		\le
		\frac{
			t^4
			\left(
			\mathbb E_{n,d}
			[
			\mathsf A_{i,n,d}^2
			]
			\right)^2
		}{
			8\sigma_{n,d}^4
		}.
		\label{eq:app-individual-Gaussian-factor-bound}
	\end{align}
	
	Since both $\alpha_{i,n,d}(t)$ and $\beta_{i,n,d}(t)$ are compared
	with the same quadratic approximation, the triangle inequality yields
	\begin{align}
		\left|
		\alpha_{i,n,d}(t)
		-
		\beta_{i,n,d}(t)
		\right|
		&\le
		\frac{
			|t|^3
			\mathbb E_{n,d}
			[
			|\mathsf A_{i,n,d}|^3
			]
		}{
			6\sigma_{n,d}^3
		}
		+
		\frac{
			t^4
			\left(
			\mathbb E_{n,d}
			[
			\mathsf A_{i,n,d}^2
			]
			\right)^2
		}{
			8\sigma_{n,d}^4
		}.
		\label{eq:app-individual-factor-comparison}
	\end{align}
	
	We next extend the single-factor comparison to the full product.
	The difference between the two products admits the following telescoping decomposition
	\begin{align}
		\prod_{i=1}^n
		\alpha_{i,n,d}(t)
		-
		\prod_{i=1}^n
		\beta_{i,n,d}(t)
		&=
		\sum_{r=1}^n
		\left(
		\prod_{i=1}^{r-1}
		\alpha_{i,n,d}(t)
		\right)
		\left(
		\alpha_{r,n,d}(t)
		-
		\beta_{r,n,d}(t)
		\right)
		\times
		\left(
		\prod_{i=r+1}^n
		\beta_{i,n,d}(t)
		\right).
		\label{eq:app-telescoping-product-identity}
	\end{align}
	Each $\alpha_{i,n,d}(t)$ is a characteristic function factor and
	therefore satisfies $\left|\alpha_{i,n,d}(t)\right|\le1$. The definition of $\beta_{i,n,d}(t)$ similarly ensures that $0<\beta_{i,n,d}(t)\le1$. Taking absolute values in
	\eqref{eq:app-telescoping-product-identity} and applying these two
	bounds gives
	\begin{align}
		\left|
		\prod_{i=1}^n
		\alpha_{i,n,d}(t)
		-
		\prod_{i=1}^n
		\beta_{i,n,d}(t)
		\right|
		\le
		\sum_{i=1}^n
		\left|
		\alpha_{i,n,d}(t)
		-
		\beta_{i,n,d}(t)
		\right|.
		\label{eq:app-telescoping-product-bound}
	\end{align}
	
	The product of the Gaussian factors can be evaluated exactly. Using
	the variance identity in \eqref{eq:app-centered-variance}, we obtain
	\begin{align}
		\prod_{i=1}^n
		\beta_{i,n,d}(t)
		&=
		\exp\left(
		-
		\frac{t^2}{2\sigma_{n,d}^2}
		\sum_{i=1}^n
		\mathbb E_{n,d}
		\left[
		\mathsf A_{i,n,d}^2
		\right]
		\right)
		=
		e^{-t^2/2}.
		\label{eq:app-Gaussian-factor-product}
	\end{align}
	Combining
	\eqref{eq:app-psi-alpha-product}
	\eqref{eq:app-individual-factor-comparison},
	\eqref{eq:app-telescoping-product-bound}, and
	\eqref{eq:app-Gaussian-factor-product}, we arrive at
	\begin{align}
		\left|
		\psi_{n,d}(t)
		-
		e^{-t^2/2}
		\right|
		&\le
		\frac{|t|^3}{6\sigma_{n,d}^3}
		\sum_{i=1}^n
		\mathbb E_{n,d}
		\left[
		|\mathsf A_{i,n,d}|^3
		\right]
		+
		\frac{t^4}{8\sigma_{n,d}^4}
		\sum_{i=1}^n
		\left(
		\mathbb E_{n,d}
		\left[
		\mathsf A_{i,n,d}^2
		\right]
		\right)^2.
		\label{eq:app-characteristic-Gaussian-comparison}
	\end{align}
	
	It remains to control the second sum in
	\eqref{eq:app-characteristic-Gaussian-comparison}. Since
	$|\mathsf A_{i,n,d}|\le i$, we can obtain
	\begin{align}
		\sum_{i=1}^n
		\left(
		\mathbb E_{n,d}
		\left[
		\mathsf A_{i,n,d}^2
		\right]
		\right)^2
		\le
		\sum_{i=1}^n i^4
		=
		O(n^5).
		\label{eq:app-squared-variance-sum-bound}
	\end{align}
	Therefore, for every fixed $\tau_0>0$, there exists a constant
	$0<C_{\tau_0}<\infty$, independent of $n$ and $d$, such that, for all sufficiently large $n$,
	\begin{align}
		\sup_{d\in\mathcal D}
		\sup_{|t|\le \tau_0}
		\left|
		\psi_{n,d}(t)
		-
		e^{-t^2/2}
		\right|
		\le
		\frac{C_{\tau_0}}{\sqrt n}.
		\label{eq:app-compact-characteristic-convergence}
	\end{align}
	In particular, the left-hand side of
	\eqref{eq:app-compact-characteristic-convergence} converges to zero
	as $n\to\infty$.
	
	We next control the characteristic function outside bounded frequency
	intervals. For each individual factor defined in
	\eqref{eq:app-alpha-factor}, we have
	\begin{align}
		\left|
		\alpha_{i,n,d}(t)
		\right|
		=
		\left|
		1-p_{i,n,d}
		+
		p_{i,n,d}
		e^{\mathrm{j}it/\sigma_{n,d}}
		\right|.
		\label{eq:app-alpha-modulus-representation}
	\end{align}
	A direct calculation of the squared modulus gives
	\begin{align}
		\left|
		\alpha_{i,n,d}(t)
		\right|^2
		=
		1
		-
		4p_{i,n,d}(1-p_{i,n,d})
		\sin^2\left(
		\frac{it}{2\sigma_{n,d}}
		\right).
		\label{eq:app-alpha-squared-modulus}
	\end{align}
	Using the inequality
	$\sqrt{1-x}\le e^{-x/2}$ for $x\in[0,1]$, together with
	\eqref{eq:app-Bernoulli-variance-bounds}, we obtain
	\begin{align}
		\left|
		\alpha_{i,n,d}(t)
		\right|
		\le
		\exp\left[
		-2p_\star(1-p_\star)
		\sin^2\left(
		\frac{it}{2\sigma_{n,d}}
		\right)
		\right].
		\label{eq:app-individual-characteristic-decay}
	\end{align}
	Multiplying the individual bounds in
	\eqref{eq:app-individual-characteristic-decay} yields
	\begin{align}
		\left|
		\psi_{n,d}(t)
		\right|
		\le
		\exp\left[
		-2p_\star(1-p_\star)
		\sum_{i=1}^n
		\sin^2\left(
		\frac{it}{2\sigma_{n,d}}
		\right)
		\right].
		\label{eq:app-characteristic-decay-preliminary}
	\end{align}
	
	We next establish an auxiliary lower bound for the trigonometric sum
	in \eqref{eq:app-characteristic-decay-preliminary}. Specifically, there exists a constant $0<C_2<\infty$ such that, for all sufficiently large $n$ and every $x\in[-\pi,\pi]$,
	\begin{align}
		\sum_{i=1}^n
		\sin^2\left(
		\frac{ix}{2}
		\right)
		\ge
		C_2n
		\min\left\{
		1,n^2x^2
		\right\}.
		\label{eq:app-trigonometric-sum-lower-bound}
	\end{align}
	To prove \eqref{eq:app-trigonometric-sum-lower-bound}, observe first
	that both sides are even functions of $x$. It therefore suffices to
	consider $x\in[0,\pi]$. Fix a constant $\xi_0>2\pi$, and divide the
	proof into the following three regimes:
	\begin{align}
		nx\le1,\qquad
		1\le nx\le\xi_0,\qquad
		\xi_0\le nx\le n\pi.
	\end{align}
	
	\begin{itemize}
		\item In the first regime, where $nx\le1$, we have
		$ix/2\le1/2$ for every $i\in\{1,\ldots,n\}$. The inequality
		$\sin y\ge 2y/\pi$ for $y\in[0,\pi/2]$ therefore gives
		\begin{align}
			\sum_{i=1}^n
			\sin^2\left(
			\frac{ix}{2}
			\right)
			&\ge
			\frac{x^2}{\pi^2}
			\sum_{i=1}^n i^2
			\ge
			\frac{1}{3\pi^2}n^3x^2
			=
			\frac{1}{3\pi^2}n(n^2x^2).
			\label{eq:app-trigonometric-small-frequency}
		\end{align}
		\item In the second regime, the parameter $nx$ ranges over the compact interval $[1,\xi_0]$. The corresponding Riemann sums satisfy
		\begin{align}
			\sup_{1\le nx\le\xi_0}
			\left|
			\frac{1}{n}
			\sum_{i=1}^n
			\sin^2\left(
			\frac{ix}{2}
			\right)
			-
			\int_0^1
			\sin^2\left(
			\frac{nxy}{2}
			\right)
			\,\mathrm{d}y
			\right|
			\longrightarrow0.
			\label{eq:app-uniform-trigonometric-Riemann-sum}
		\end{align}
		Moreover, the limiting integrals are uniformly bounded away from
		zero, since
		\begin{align}
			\inf_{1\le v\le\xi_0}
			\int_0^1
			\sin^2\left(
			\frac{vy}{2}
			\right)
			\,\mathrm{d}y
			=
			\inf_{1\le v\le\xi_0}
			\left(
			\frac12
			-
			\frac{\sin v}{2v}
			\right)
			>0.
			\label{eq:app-trigonometric-integral-lower-bound}
		\end{align}

		Consequently, there exists a constant $0<C_3<\infty$ such that, for all
		sufficiently large $n$,
		\begin{align}
			\sum_{i=1}^n
			\sin^2\left(
			\frac{ix}{2}
			\right)
			\ge
			C_3n,
			\qquad
			1\le nx\le\xi_0.
			\label{eq:app-trigonometric-intermediate-frequency}
		\end{align}
		\item In the third regime, where
		$\xi_0\le nx\le n\pi$, the finite trigonometric-sum identity
		\begin{align}
			\sum_{i=1}^n
			\sin^2\left(
			\frac{ix}{2}
			\right)
			=
			\frac{n}{2}
			-
			\frac{
				\sin(nx/2)
				\cos((n+1)x/2)
			}{
				2\sin(x/2)
			}
			\label{eq:app-trigonometric-sum-identity}
		\end{align}
		provides a direct lower bound. Since $x\in(0,\pi]$, the elementary inequality $\sin\left(\frac{x}{2}\right)\ge\frac{x}{\pi}$ holds. Combining this inequality with
		\eqref{eq:app-trigonometric-sum-identity} gives
		\begin{align}
			\sum_{i=1}^n
			\sin^2\left(
			\frac{ix}{2}
			\right)
			&\ge
			\frac{n}{2}
			-
			\frac{1}{
				2\sin(x/2)
			}
			\ge
			\frac{n}{2}
			-
			\frac{\pi}{2x}
			\ge
			n\left(
			\frac12
			-
			\frac{\pi}{2\xi_0}
			\right),
			\label{eq:app-trigonometric-large-frequency}
		\end{align}
		where the last inequality follows from $nx\ge\xi_0$. Since
		$\xi_0>2\pi$, the coefficient on the right-hand side is strictly
		positive.
	\end{itemize}
	Combining the bounds obtained in the three regimes yields a
	constant $0<C_2<\infty$ for which \eqref{eq:app-trigonometric-sum-lower-bound} holds.
	
	We now apply this auxiliary bound to the standardized characteristic
	function. For every $t$ satisfying
	$|t|\le\pi\sigma_{n,d}$, the choice $x=\frac{t}{\sigma_{n,d}}$ ensures that $x\in[-\pi,\pi]$. Therefore,
	\eqref{eq:app-trigonometric-sum-lower-bound} gives
	\begin{align}
		\sum_{i=1}^n
		\sin^2\left(
		\frac{it}{2\sigma_{n,d}}
		\right)
		\ge
		C_2n
		\min\left\{
		1,
		\frac{n^2t^2}{\sigma_{n,d}^2}
		\right\}.
		\label{eq:app-standardized-trigonometric-bound}
	\end{align}
	The uniform upper bound on the tilted variance in
	\eqref{eq:tilted-variance-order} implies
	\begin{align}
		\frac{n^2t^2}{\sigma_{n,d}^2}
		\ge
		\frac{t^2}{C_1n}.
	\end{align}
	Consequently, we obtain
	\begin{align}
		n
		\min\left\{
		1,
		\frac{n^2t^2}{\sigma_{n,d}^2}
		\right\}
		&\ge
		n
		\min\left\{
		1,
		\frac{t^2}{C_1n}
		\right\}
		=
		\min\left\{
		n,
		\frac{t^2}{C_1}
		\right\}
		\ge
		\min\left\{
		1,C_1^{-1}
		\right\}
		\min\left\{
		t^2,n
		\right\}.
		\label{eq:app-standardized-frequency-comparison}
	\end{align}

	Substituting
	\eqref{eq:app-standardized-trigonometric-bound} and
	\eqref{eq:app-standardized-frequency-comparison} into
	\eqref{eq:app-characteristic-decay-preliminary}, we obtain
	\begin{align}
		\left|
		\psi_{n,d}(t)
		\right|
		\le
		\exp\left(
		-C_4
		\min\left\{
		t^2,n
		\right\}
		\right),
		\qquad
		|t|\le\pi\sigma_{n,d},
		\label{eq:app-uniform-characteristic-decay}
	\end{align}
	where $0<C_4<\infty$ is independent of $n$, $d$, and $t$.
	
	The compact-frequency approximation in
	\eqref{eq:app-compact-characteristic-convergence} and the uniform
	decay estimate in \eqref{eq:app-uniform-characteristic-decay} provide the two ingredients required for lattice Fourier inversion.
	
	For every $k\in\mathbb Z$, define the corresponding standardized
	lattice point as
	\begin{align}
		v_{n,d}(k)
		\triangleq
		\frac{k-\kappa_n(d)}
		{\sigma_{n,d}}.
		\label{eq:app-standardized-lattice-point}
	\end{align}
	Since $\zeta_n$ is integer-valued with lattice span one, the
	lattice Fourier inversion formula \cite{petrov2012sums} gives
	\begin{align}
		\sigma_{n,d}
		\Pr\nolimits_{n,d}[\zeta_n=k]
		=
		\frac{1}{2\pi}
		\int_{-\pi\sigma_{n,d}}^{\pi\sigma_{n,d}}
		e^{-\mathrm jt v_{n,d}(k)}
		\psi_{n,d}(t)
		\,\mathrm{d}t.
		\label{eq:app-lattice-Fourier-inversion}
	\end{align}
	The standard Gaussian density admits the Fourier representation
	\begin{align}
		\phi\bigl(v_{n,d}(k)\bigr)
		=
		\frac{1}{2\pi}
		\int_{\mathbb R}
		e^{-\mathrm jt v_{n,d}(k)}
		e^{-t^2/2}
		\,\mathrm{d}t.
		\label{eq:app-Gaussian-Fourier-representation}
	\end{align}
	
	The lower bound in \eqref{eq:tilted-variance-order} ensures that
	$\sigma_{n,d}\to\infty$ uniformly over $d\in\mathcal D$. Hence, for
	every fixed $\tau_0>0$, the interval $[-\tau_0,\tau_0]$ is contained
	in $[-\pi\sigma_{n,d},\pi\sigma_{n,d}]$ for all sufficiently large
	$n$ and every $d\in\mathcal D$.
	
	Subtracting \eqref{eq:app-Gaussian-Fourier-representation} from
	\eqref{eq:app-lattice-Fourier-inversion}, taking absolute values, and
	using
	$\left|e^{-\mathrm{j}t v_{n,d}(k)}\right|=1$, we obtain
	\begin{align}
		&
		\left|
		\sigma_{n,d}
		\Pr\nolimits_{n,d}[\zeta_n=k]
		-
		\phi\bigl(v_{n,d}(k)\bigr)
		\right|
		\notag\\
		&\le
		\frac{1}{2\pi}
		\int_{|t|\le\tau_0}
		\left|
		\psi_{n,d}(t)-e^{-t^2/2}
		\right|
		\,\mathrm{d}t
		+
		\frac{1}{2\pi}
		\int_{\tau_0<|t|\le\pi\sigma_{n,d}}
		\left|
		\psi_{n,d}(t)
		\right|
		\,\mathrm{d}t
		+
		\frac{1}{2\pi}
		\int_{|t|>\tau_0}
		e^{-t^2/2}
		\,\mathrm{d}t.
		\label{eq:app-Fourier-error-decomposition}
	\end{align}
	
	The compact-frequency estimate in
	\eqref{eq:app-compact-characteristic-convergence} gives
	\begin{align}
		\sup_{d\in\mathcal D}
		\int_{|t|\le\tau_0}
		\left|
		\psi_{n,d}(t)-e^{-t^2/2}
		\right|
		\,\mathrm{d}t
		\le
		\frac{2\tau_0C_{\tau_0}}{\sqrt n},
		\label{eq:app-compact-Fourier-error}
	\end{align}
	which converges to zero as $n\to\infty$ for every fixed
	$\tau_0>0$.
	
	The characteristic function tail is controlled by
	\eqref{eq:app-uniform-characteristic-decay}. For all sufficiently
	large $n$, splitting the integration interval at $\sqrt n$ gives
	\begin{align}
		&
		\sup_{d\in\mathcal D}
		\int_{\tau_0<|t|\le\pi\sigma_{n,d}}
		\left|
		\psi_{n,d}(t)
		\right|
		\,\mathrm{d}t
		\le
		2\int_{\tau_0}^{\sqrt n}
		e^{-C_4t^2}
		\,\mathrm{d}t
		+
		2\pi
		\sup_{d\in\mathcal D}
		\sigma_{n,d}
		e^{-C_4n}.
		\label{eq:app-characteristic-tail-integral}
	\end{align}
	The upper variance bound in
	\eqref{eq:tilted-variance-order} implies $\sigma_{n,d}=O(n^{3/2})$, so we obtain
	\begin{align}
		&
		\sup_{d\in\mathcal D}
		\int_{\tau_0<|t|\le\pi\sigma_{n,d}}
		\left|
		\psi_{n,d}(t)
		\right|
		\,\mathrm{d}t
		\le
		2\int_{\tau_0}^{\infty}
		e^{-C_4t^2}
		\,\mathrm{d}t
		+
		O\left(
		n^{3/2}e^{-C_4n}
		\right).
		\label{eq:app-characteristic-tail-final-bound}
	\end{align}
	
	The last term in
	\eqref{eq:app-Fourier-error-decomposition} is independent of
	$d$ and $k$, and satisfies
	\begin{align}
		\int_{|t|>\tau_0}
		e^{-t^2/2}
		\,\mathrm{d}t
		=
		2\int_{\tau_0}^{\infty}
		e^{-t^2/2}
		\,\mathrm{d}t.
		\label{eq:app-Gaussian-tail-integral}
	\end{align}
	
	Combining
	\eqref{eq:app-Fourier-error-decomposition},
	\eqref{eq:app-compact-Fourier-error},
	\eqref{eq:app-characteristic-tail-final-bound}, and
	\eqref{eq:app-Gaussian-tail-integral}, and then taking the upper limit as $n\to\infty$, we obtain
	\begin{align}
		&
		\limsup_{n\to\infty}
		\sup_{d\in\mathcal D}
		\sup_{k\in\mathbb Z}
		\left|
		\sigma_{n,d}
		\Pr\nolimits_{n,d}[\zeta_n=k]
		-
		\phi\left(
		\frac{k-\kappa_n(d)}
		{\sigma_{n,d}}
		\right)
		\right|
		\le
		\frac{1}{\pi}
		\int_{\tau_0}^{\infty}
		e^{-C_4t^2}
		\,\mathrm{d}t
		+
		\frac{1}{\pi}
		\int_{\tau_0}^{\infty}
		e^{-t^2/2}
		\,\mathrm{d}t.
		\label{eq:app-local-limit-limsup-bound}
	\end{align}
	Both integrals on the right-hand side converge to zero as
	$\tau_0\to\infty$. Since $\tau_0>0$ is arbitrary, we conclude that
	\begin{align}
		\lim_{n\to\infty}
		\sup_{d\in\mathcal D}
		\sup_{k\in\mathbb Z}
		\left|
		\sigma_{n,d}
		\Pr\nolimits_{n,d}[\zeta_n=k]
		-
		\phi\left(
		\frac{k-\kappa_n(d)}
		{\sigma_{n,d}}
		\right)
		\right|
		=
		0.
		\label{eq:app-uniform-local-Gaussian-conclusion}
	\end{align}
	This completes the proof.
	
	\subsection{Proof of Lemma~\ref{lem:one-sided-lattice-contribution}}
	\label{app:one-sided-lattice-contribution}
	
	Define the uniform local approximation error as
	\begin{align}
		\varepsilon_n
		\triangleq
		\sup_{d\in\mathcal D}
		\sup_{k\in\mathbb Z}
		\left|
		\sigma_{n,d}
		\Pr\nolimits_{n,d}[\zeta_n=k]
		-
		\phi\left(
		\frac{k-\kappa_n(d)}
		{\sigma_{n,d}}
		\right)
		\right|.
		\label{eq:uniform-local-approximation-error}
	\end{align}
	Lemma~\ref{lem:uniform-lattice-lclt} ensures that
	$\varepsilon_n\to0$ as $n\to\infty$.
	
	Evaluating the approximation error at $k=\kappa_n(d)-s$ and using the symmetry of the Gaussian density give
	\begin{align}
		\left|
		\Pr\nolimits_{n,d}
		\left[
		\zeta_n=\kappa_n(d)-s
		\right]
		-
		\frac{1}{\sigma_{n,d}}
		\phi\left(
		\frac{s}{\sigma_{n,d}}
		\right)
		\right|
		\le
		\frac{\varepsilon_n}{\sigma_{n,d}},
		\label{eq:local-probability-error-bound}
	\end{align}
	uniformly over $d\in\mathcal D$ and $s\in\{0,\ldots,\kappa_n(d)\}$.
	Substituting this estimate into \eqref{eq:one-sided-lattice-term} yields
	\begin{align}
		&
		\left|
		T_{n,d}
		-
		\frac{1}{\sigma_{n,d}}
		\sum_{s=0}^{\kappa_n(d)}
		e^{\theta_{n,d}s/n}
		\phi\left(
		\frac{s}{\sigma_{n,d}}
		\right)
		\right|
		\le
		\frac{\varepsilon_n}{\sigma_{n,d}}
		\sum_{s=0}^{\kappa_n(d)}
		e^{\theta_{n,d}s/n}.
		\label{eq:lattice-contribution-local-error}
	\end{align}
	
	The finite-blocklength saddlepoints remain uniformly bounded away
	from zero on the negative half-line. In particular, there exists a constant $b<0$, independent of $n$ and $d$, such that
	$\theta_{n,d}\le b$ for all sufficiently large $n$ and every
	$d\in\mathcal D$. Consequently,
	\begin{align}
		\sum_{s=0}^{\kappa_n(d)}
		e^{\theta_{n,d}s/n}
		\le
		\sum_{s=0}^{\infty}
		e^{bs/n}
		=
		\frac{1}{1-e^{b/n}}
		=
		O(n).
		\label{eq:geometric-weight-upper-bound}
	\end{align}
	Combining this estimate with
	\eqref{eq:tilted-variance-order} and $\varepsilon_n\to0$ gives
	\begin{align}
		\sup_{d\in\mathcal D}
		\left|
		T_{n,d}
		-
		\frac{1}{\sigma_{n,d}}
		\sum_{s=0}^{\kappa_n(d)}
		e^{\theta_{n,d}s/n}
		\phi\left(
		\frac{s}{\sigma_{n,d}}
		\right)
		\right|
		=
		o(n^{-1/2}).
		\label{eq:lattice-contribution-local-error-order}
	\end{align}
	
	We next replace the Gaussian density in the weighted sum by its value at the origin. Since $\phi'(\cdot)$ is bounded on $\mathbb R$, there exists a constant $0<C<\infty$ such that
	\begin{align}
		\left|
		\phi\left(
		\frac{s}{\sigma_{n,d}}
		\right)
		-
		\phi(0)
		\right|
		\le
		C
		\frac{s}{\sigma_{n,d}}.
		\label{eq:Gaussian-density-origin-error}
	\end{align}
	The resulting replacement error is bounded by
	\begin{align}
		&
		\frac{1}{\sigma_{n,d}}
		\sum_{s=0}^{\kappa_n(d)}
		e^{\theta_{n,d}s/n}
		\left|
		\phi\left(
		\frac{s}{\sigma_{n,d}}
		\right)
		-
		\phi(0)
		\right|
		\le
		\frac{C}{\sigma_{n,d}^2}
		\sum_{s=0}^{\infty}
		s e^{bs/n}
		=
		O(n^{-1}),
		\label{eq:Gaussian-density-replacement-error}
	\end{align}
	uniformly over $d\in\mathcal D$.
	
	Combining
	\eqref{eq:lattice-contribution-local-error-order} and
	\eqref{eq:Gaussian-density-replacement-error}, we obtain
	\begin{align}
		T_{n,d}
		=
		\frac{\phi(0)}{\sigma_{n,d}}
		\sum_{s=0}^{\kappa_n(d)}
		e^{\theta_{n,d}s/n}
		+
		o(n^{-1/2}),
		\label{eq:lattice-contribution-geometric-reduction}
	\end{align}
	uniformly over $d\in\mathcal D$.
	
	The remaining finite geometric sum can be evaluated explicitly as
	\begin{align}
		\sum_{s=0}^{\kappa_n(d)}
		e^{\theta_{n,d}s/n}
		=
		\frac{
			1-
			e^{\theta_{n,d}(\kappa_n(d)+1)/n}
		}{
			1-e^{\theta_{n,d}/n}
		}.
		\label{eq:finite-geometric-sum}
	\end{align}
	Since $d\ge\omega_1>0$ and $\theta_{n,d}$ remains uniformly bounded
	away from zero on the negative half-line, the exponential term in the numerator is $O(e^{-Cn})$ for some $C>0$, uniformly over
	$d\in\mathcal D$. The denominator satisfies
	\begin{align}
		1-e^{\theta_{n,d}/n}
		=
		-\frac{\theta_{n,d}}{n}
		\bigl(1+O(n^{-1})\bigr),
	\end{align}
	and hence
	\begin{align}
		\sum_{s=0}^{\kappa_n(d)}
		e^{\theta_{n,d}s/n}
		=
		\frac{n}{|\theta_{n,d}|}
		\bigl(1+O(n^{-1})\bigr),
		\label{eq:geometric-sum-expansion}
	\end{align}
	uniformly over $d\in\mathcal D$.
	
	Taking the positive square root in
	\eqref{eq:tilted-variance-expansion} gives
	\begin{align}
		\sigma_{n,d}
		=
		n^{3/2}
		\sqrt{K''(\theta_d)}
		\bigl(1+O(n^{-1})\bigr),
		\label{eq:tilted-standard-deviation-expansion}
	\end{align}
	uniformly over $d\in\mathcal D$. Moreover,
	\eqref{eq:theta-nd-expansion}, together with the fact that
	$\{\theta_d:d\in\mathcal D\}$ is bounded away from zero, implies
	\begin{align}
		\frac{1}{|\theta_{n,d}|}
		=
		\frac{1}{|\theta_d|}
		\bigl(1+O(n^{-1})\bigr),
		\label{eq:inverse-saddlepoint-expansion}
	\end{align}
	uniformly over $d\in\mathcal D$.
	
	Substituting
	\eqref{eq:geometric-sum-expansion},
	\eqref{eq:tilted-standard-deviation-expansion},
	and \eqref{eq:inverse-saddlepoint-expansion} into
	\eqref{eq:lattice-contribution-geometric-reduction}, and using
	$\phi(0)=1/\sqrt{2\pi}$, yields
	\begin{align}
		T_{n,d}
		=
		\frac{1}{
			|\theta_d|
			\sqrt{2\pi nK''(\theta_d)}
		}
		\bigl(1+o(1)\bigr),
	\end{align}
	where the $o(1)$ term is uniform over $d\in\mathcal D$. This
	completes the proof.

	\subsection{Proof of Lemma~\ref{lem:near-identity-transform}}
	\label{app:proof-near-identity-transform}
	
	By Theorem~\ref{thm:uniform-strong-large-deviation},
	uniformly over all $t$ satisfying
	$\mu+\sigma t/\sqrt n\in\mathcal D$,
	\begin{align}
		&F_{\zeta_n}\!\left(
		n^2\left(
		\mu+\frac{\sigma t}{\sqrt n}
		\right)
		\right)
		=
		\frac{
			A\left(
			\mu+\frac{\sigma t}{\sqrt n}
			\right)
		}{
			\sqrt n
		}
		\exp\left\{
		-nI\left(
		\mu+\frac{\sigma t}{\sqrt n}
		\right)
		\right\}
		\left[
		1+
		\varrho_n\left(
		\mu+\frac{\sigma t}{\sqrt n}
		\right)
		\right].
		\label{eq:strong-ld-applied-to-transform}
	\end{align}
	Substitution into
	\eqref{eq:orb-rcu-deterministic-transform} gives
	\begin{align}
		\mathcal G_n(t)
		&=
		\frac{
			n\left[
			I(\mu)
			-
			I\left(
			\mu+\frac{\sigma t}{\sqrt n}
			\right)
			\right]
		}{
			\sqrt n\,|\theta_\mu|\sigma
		}
		+
		\frac{
			\ln A\left(
			\mu+\frac{\sigma t}{\sqrt n}
			\right)
			+
			\ln\left[
			1+
			\varrho_n\left(
			\mu+\frac{\sigma t}{\sqrt n}
			\right)
			\right]
		}{
			\sqrt n\,|\theta_\mu|\sigma
		}.
		\label{eq:transform-after-strong-ld}
	\end{align}
	By Taylor's theorem, for some point $\xi_{n,t}$ between
	$\mu$ and $\mu+\sigma t/\sqrt n$,
	\begin{align}
		I\left(
		\mu+\frac{\sigma t}{\sqrt n}
		\right)
		=
		I(\mu)
		+
		\theta_\mu
		\frac{\sigma t}{\sqrt n}
		+
		\frac{\sigma^2t^2}{2n}
		I''(\xi_{n,t}).
		\label{eq:rate-function-local-taylor}
	\end{align}
	Using $\theta_\mu=-|\theta_\mu|$, we obtain
	\begin{align}
		\mathcal G_n(t)-t
		&=
		-
		\frac{
			\sigma I''(\xi_{n,t})
		}{
			2|\theta_\mu|
		}
		\frac{t^2}{\sqrt n}
		+
		\frac{
			\ln A\left(
			\mu+\frac{\sigma t}{\sqrt n}
			\right)
			+
			\ln\left[
			1+
			\varrho_n\left(
			\mu+\frac{\sigma t}{\sqrt n}
			\right)
			\right]
		}{
			\sqrt n\,|\theta_\mu|\sigma
		}.
		\label{eq:transform-near-identity-decomposition}
	\end{align}
	The continuity of $I''(\cdot)$ and the continuity and strict
	positivity of $A(\cdot)$ on the compact interval $\mathcal D$,
	together with
	$\sup_{d\in\mathcal D}|\varrho_n(d)|\longrightarrow0$, imply
	that the coefficients on the right-hand side of
	\eqref{eq:transform-near-identity-decomposition} are uniformly
	bounded for all sufficiently large $n$. This proves
	\eqref{eq:near-identity-transform}.

	\subsection{Proof of Lemma~\ref{lem:gaussian-after-transform}}
	\label{app:proof-gaussian-after-transform}

	For brevity, define
	\begin{align}
		\mathsf T_n
		\triangleq
		\frac{
			\sqrt n
			\left(
			\mathsf D(
			\underline{\mathsf X},
			\underline{\mathsf Y}
			)
			-\mu
			\right)
		}{
			\sigma
		}.
		\label{eq:standardized-metric-proof}
	\end{align}
	By Theorem~\ref{thm:gauss of full D}, there exists a constant
	$0<C_1<\infty$ such that
	\begin{align}
		\sup_{x\in\mathbb R}
		\left|
		\Pr[\mathsf T_n\ge x]
		-
		Q(x)
		\right|
		\le
		\frac{C_1}{\sqrt n}
		\label{eq:BE-for-standardized-metric}
	\end{align}
	for all sufficiently large $n$.
	
	Let $0<C_0<\infty$ be the constant in Lemma~\ref{lem:near-identity-transform}. For
	$|t|\le\sqrt{\ln n}$, define
	\begin{align}
		\Delta_n(t)
		\triangleq
		\frac{4C_0(1+t^2)}{\sqrt n}.
		\label{eq:transform-displacement}
	\end{align}
	Uniformly over $|t|\le\sqrt{\ln n}$,
	\begin{align}
		\frac{
			\left|t\pm\Delta_n(t)\right|
		}{
			\sqrt n
		}
		\le
		\sqrt{\frac{\ln n}{n}}
		+
		\frac{4C_0(1+\ln n)}{n}
		\longrightarrow0.
		\label{eq:shifted-points-vanish}
	\end{align}
	Since \eqref{eq:d-compact-set} ensures that $\mu$ lies strictly inside $\mathcal D$, it follows that for all sufficiently large $n$,
	\begin{align}
		\mu+
		\frac{
			\sigma\left(t\pm\Delta_n(t)\right)
		}{
			\sqrt n
		}
		\in\mathcal D
		\label{eq:shifted-points-in-D}
	\end{align}
	uniformly over $|t|\le\sqrt{\ln n}$.
	
	Moreover,
	$\sup_{|t|\le\sqrt{\ln n}}\Delta_n(t)
	\le 4C_0(1+\ln n)/\sqrt n\to0$.
	Thus, for all sufficiently large $n$,
	\begin{align}
		1+
		\left(
		t\pm\Delta_n(t)
		\right)^2
		\le
		3(1+t^2)
		\label{eq:shifted-square-bound}
	\end{align}
	uniformly over $|t|\le\sqrt{\ln n}$. Applying Lemma~\ref{lem:near-identity-transform} at the two
	points in \eqref{eq:shifted-points-in-D}, we obtain
	\begin{align}
		\mathcal G_n\!\left(
		t-\Delta_n(t)
		\right)
		&\le
		t-\Delta_n(t)
		+
		\frac{3C_0(1+t^2)}{\sqrt n}
		<t,
		\nonumber\\
		\mathcal G_n\!\left(
		t+\Delta_n(t)
		\right)
		&\ge
		t+\Delta_n(t)
		-
		\frac{3C_0(1+t^2)}{\sqrt n}
		>t,
		\label{eq:transform-values-around-t}
	\end{align}
	uniformly over $|t|\le\sqrt{\ln n}$. The monotonicity of $\mathcal G_n(\cdot)$ and \eqref{eq:transform-values-around-t} imply that
	\begin{align}
		\left\{
		\mathsf T_n
		\ge
		t+\Delta_n(t)
		\right\}
		\subseteq
		\left\{
		\mathcal G_n(\mathsf T_n)
		\ge t
		\right\}
		\subseteq
		\left\{
		\mathsf T_n
		>
		t-\Delta_n(t)
		\right\}
		\subseteq
		\left\{
		\mathsf T_n
		\ge
		t-\Delta_n(t)
		\right\},
		\label{eq:transformed-event-sandwich}
	\end{align}
	uniformly over $|t|\le\sqrt{\ln n}$.
	
	Applying \eqref{eq:BE-for-standardized-metric} to
	\eqref{eq:transformed-event-sandwich} gives
	\begin{align}
		Q\!\left(
		t+\Delta_n(t)
		\right)
		-
		\frac{C_1}{\sqrt n}
		&\le
		\Pr\left[
		\mathcal G_n(\mathsf T_n)\ge t
		\right]
		\le
		Q\!\left(
		t-\Delta_n(t)
		\right)
		+
		\frac{C_1}{\sqrt n},
		\label{eq:transformed-probability-sandwich}
	\end{align}
	uniformly over $|t|\le\sqrt{\ln n}$.
	
	For any $s$ between $t$ and
	$t\pm\Delta_n(t)$,
	\begin{align}\label{eq:st}
		\ln\frac{\phi(s)}{\phi(t)}
		=
		\frac{t^2-s^2}{2}
		\le
		|t|\Delta_n(t)
		+
		\frac{\Delta_n^2(t)}{2}.
	\end{align}
	Uniformly over $|t|\le\sqrt{\ln n}$, the right-hand side of \eqref{eq:st} converges to zero. Hence, for all sufficiently large $n$, $\phi(s)\le2\phi(t)$. The mean-value theorem therefore gives
	\begin{align}
		\left|
		Q\!\left(t\pm\Delta_n(t)\right)-Q(t)
		\right|
		&\le
		2\Delta_n(t)\phi(t)
		\le
		\frac{C_2}{\sqrt n},
		\label{eq:shifted-Gaussian-tail}
	\end{align}
	uniformly over $|t|\le\sqrt{\ln n}$, for some constant
	$0<C_2<\infty$ independent of $n$ and $t$, where the last inequality
	follows from $\sup_{t\in\mathbb R}(1+t^2)\phi(t)<\infty$.

	Combining
	\eqref{eq:transformed-probability-sandwich} and
	\eqref{eq:shifted-Gaussian-tail}, we obtain
	\begin{align}
		Q(t)-\frac{C_1+C_2}{\sqrt n}
		&\le
		\Pr\left[
		\mathcal G_n(\mathsf T_n)\ge t
		\right]
		\le
		Q(t)+\frac{C_1+C_2}{\sqrt n},
	\end{align}
	uniformly over $|t|\le\sqrt{\ln n}$. Equivalently,
	\begin{align}
		\sup_{|t|\le\sqrt{\ln n}}
		\left|
		\Pr\left[
		\mathcal G_n(\mathsf T_n)\ge t
		\right]
		-
		Q(t)
		\right|
		\le
		\frac{C_1+C_2}{\sqrt n}.
		\label{eq:transformed-Gaussian-central-range}
	\end{align}
	
	It remains to consider $|t|>\sqrt{\ln n}$. For
	$t>\sqrt{\ln n}$, the monotonicity of tail probabilities and
	\eqref{eq:transformed-Gaussian-central-range} imply that
	\begin{align}
		\Pr\left[
		\mathcal G_n(\mathsf T_n)\ge t
		\right]
		&\le
		\Pr\left[
		\mathcal G_n(\mathsf T_n)
		\ge\sqrt{\ln n}
		\right]
		\le
		Q\!\left(\sqrt{\ln n}\right)
		+
		\frac{C_1+C_2}{\sqrt n}.
		\label{eq:right-tail-probability-bound}
	\end{align}
	In addition, the monotonicity of $Q(\cdot)$ gives $0\le Q(t)\le Q\!\left(\sqrt{\ln n}\right)$. Thus, both
	$\Pr[\mathcal G_n(\mathsf T_n)\ge t]$ and $Q(t)$ lie between
	zero and $Q(\sqrt{\ln n})+(C_1+C_2)/\sqrt n$. Consequently,
	\begin{align}
		\left|
		\Pr\left[
		\mathcal G_n(\mathsf T_n)\ge t
		\right]
		-
		Q(t)
		\right|
		&\le
		\max\left\{
		\Pr\left[
		\mathcal G_n(\mathsf T_n)\ge t
		\right],
		Q(t)
		\right\}
		\le
		Q\!\left(\sqrt{\ln n}\right)
		+
		\frac{C_1+C_2}{\sqrt n}.
		\label{eq:transformed-Gaussian-right-tail}
	\end{align}
	
	For $t<-\sqrt{\ln n}$, the monotonicity of tail probabilities
	gives
	\begin{align}
		\Pr\left[
		\mathcal G_n(\mathsf T_n)\ge t
		\right]
		&\ge
		\Pr\left[
		\mathcal G_n(\mathsf T_n)
		\ge-\sqrt{\ln n}
		\right]
		\ge
		Q\!\left(-\sqrt{\ln n}\right)
		-
		\frac{C_1+C_2}{\sqrt n},
		\label{eq:left-tail-probability-lower-bound}
	\end{align}
	where the last inequality follows from
	\eqref{eq:transformed-Gaussian-central-range}. Using the
	symmetry relation $Q(-x)=1-Q(x)$, we obtain
	\begin{align}
		0
		&\le
		1-
		\Pr\left[
		\mathcal G_n(\mathsf T_n)\ge t
		\right]
		\le
		Q\!\left(\sqrt{\ln n}\right)
		+
		\frac{C_1+C_2}{\sqrt n}.
		\label{eq:left-tail-complement-bound}
	\end{align}
	The monotonicity and symmetry of $Q(\cdot)$ also yield
	\begin{align*}
		0
		\le
		1-Q(t)
		\le
		1-Q\!\left(-\sqrt{\ln n}\right)
		=
		Q\!\left(\sqrt{\ln n}\right).
	\end{align*}
	Therefore, we obtain
	\begin{align}
		\left|
		\Pr\left[
		\mathcal G_n(\mathsf T_n)\ge t
		\right]
		-
		Q(t)
		\right|
		&=
		\left|
		\left(
		1-
		\Pr\left[
		\mathcal G_n(\mathsf T_n)\ge t
		\right]
		\right)
		-
		\left(
		1-Q(t)
		\right)
		\right|
		\nonumber\\
		&\le
		\max\left\{
		1-
		\Pr\left[
		\mathcal G_n(\mathsf T_n)\ge t
		\right],
		1-Q(t)
		\right\}
		\nonumber\\
		&\le
		Q\!\left(\sqrt{\ln n}\right)
		+
		\frac{C_1+C_2}{\sqrt n}.
		\label{eq:transformed-Gaussian-left-tail}
	\end{align}
	
	Finally, the standard Gaussian tail bound
	$Q(x)\le\phi(x)/x$, valid for every $x>0$, yields
	\begin{align}
		Q\!\left(\sqrt{\ln n}\right)
		&\le
		\frac{
			\phi\!\left(\sqrt{\ln n}\right)
		}{
			\sqrt{\ln n}
		}
		=
		\frac{1}{
			\sqrt{2\pi n\ln n}
		}
		=
		O\!\left(\frac{1}{\sqrt n}\right).
		\label{eq:growing-Gaussian-tail-bound}
	\end{align}
	Therefore, combining
	\eqref{eq:transformed-Gaussian-central-range},
	\eqref{eq:transformed-Gaussian-right-tail}, and
	\eqref{eq:transformed-Gaussian-left-tail}, there exists a
	constant $0<C<\infty$ such that
	\begin{align}
		\sup_{t\in\mathbb R}
		\left|
		\Pr\left[
		\mathcal G_n(\mathsf T_n)\ge t
		\right]
		-
		Q(t)
		\right|
		\le
		\frac{C}{\sqrt n}
		\label{eq:global-Gaussian-after-transform}
	\end{align}
	for all sufficiently large $n$. Recalling the definition of
	$\mathsf T_n$ in \eqref{eq:standardized-metric-proof},
	\eqref{eq:global-Gaussian-after-transform} is precisely
	\eqref{eq:gaussian-after-transform}, which completes the
	proof.

	\bibliographystyle{IEEEtran}
	\bibliography{ref.bib}
\end{document}